\newcommand{\f}[2]{\frac{#1}{#2}}
\newcommand{\pd}[2]{\f{\partial #1}{\partial #2}}
\renewcommand{\d}[2]{\frac{d #1}{d #2}} 
\DeclareMathOperator*{\argmax}{arg\,max}
\DeclareMathOperator*{\argmin}{arg\,min}
\newtheorem{theorem}{Theorem}
\newtheorem{lemma}{Lemma}
\newtheorem{assumption}{Assumption}
\newtheorem{remark}{Remark}
\newtheorem{proposition}{Proposition}
\title{\LARGE \bf
  Computing Safety Margins of Parameterized Nonlinear Systems for
  Vulnerability Assessment via Trajectory Sensitivities}
\author{Michael W. Fisher\thanks{M. W. Fisher is with the 
ECE Department, University of Waterloo, Waterloo, ON Canada,
michael.fisher@uwaterloo.edu.}, {\it Member, IEEE}
}
\begin{document}


\maketitle
\thispagestyle{empty}
\pagestyle{empty}

\begin{abstract}
  Physical systems experience nonlinear disturbances which have the potential
  to disrupt desired behavior.
  For a particular disturbance, whether or not the system recovers from the
  disturbance to a desired stable equilibrium point depends on system parameter
  values, which are typically uncertain and time-varying.
  Therefore, to quantify proximity to vulnerability
  we define the safety margin to be the smallest change in
  parameter values from a nominal value such that the system will no longer
  be able to recover from the disturbance.
  Safety margins are valuable
  but challenging to compute as related methods, such as those for robust
  region of attraction estimation,
  are often either overly conservative or computationally intractable for
  high dimensional systems.
  Recently, we developed algorithms to compute safety margins efficiently
  and non-conservatively by exploiting the large sensitivity of the system
  trajectory near the region of attraction boundary to small perturbations.
  Although these algorithms have enjoyed empirical success, they lack
  theoretical guarantees that would ensure their generalizability.
  This work develops a novel characterization of safety margins in terms of
  trajectory sensitivities, and uses this to
  derive well-posedness and convergence guarantees for these
  algorithms, enabling their generalizability and successful application
  to a large class of nonlinear systems.
\end{abstract}




\section{Introduction}

Physical systems experience nonlinear
disturbances which have the
potential to disrupt desired operation.
For example, a short circuit in a power system can lead to blackout conditions,
and a push on the torso of a humanoid robot can cause it to fall over.
Consider a particular finite-time disturbance, such as a particular short
circuit in a power system or push on a robot.
If the system is able to recover after the disturbance to a desired
stable equilibrium point (SEP), then it is not vulnerable to that disturbance.
Knowledge of the proximity to vulnerability is important because it provides
a quantitative measure of the margins for safe operation.
As disturbance recovery depends on system parameter values, which are typically
uncertain and time-varying, it is valuable to measure proximity to vulnerability
as a function of system parameters. 


Towards that end,
we define the {\it recovery region} to be the set of parameter values for
which the system recovers to the desired SEP, and
the {\it recovery boundary} to be the boundary in
parameter space of the recovery region.
Given an initial nominal parameter value, we call the distance from this
nominal value to the recovery boundary the {\it safety margin}.
The safety margin quantifies proximity to vulnerability by providing a
measure of how much conditions would have to change, or how much
uncertainty can be tolerated, before
the system would become vulnerable to the potential disturbance.
Determining the safety margin is a very challenging problem
because of the intractability of computing the recovery boundary in high
dimensional parameter space.

The recovery boundary depends on the disturbance, which we model
as a map from parameter values to a post-disturbance
initial condition. 
This map can be implicit, such as a known dynamic model for a particular
contingency,
or explicit, such as a static algebraic map, and is therefore a very general
model.
As a special case, consider a constant additive disturbance to the state
of the form $x_0 + p$, where $x_0$ is some nominal state and $p$ are parameters.
In this case, the recovery boundary in parameter space would be equal to
the region of attraction (RoA) boundary of the SEP in state space, and the
safety margin would be the distance from the nominal state to
the RoA boundary.
So, the recovery boundary framework can be thought of as a generalization of
the RoA boundary to incorporate the impact of parameter variation on safety.

Although the notions of recovery region and recovery boundary are novel, there
is a long history with an extensive literature on RoA estimation in state space.
Some of the most common approaches to RoA estimation include methods based on
sublevel sets of Lyapunov functions \cite{Ch88,Ta08,Kh02} and
the solutions of related partial differential equations \cite{El17,Zu64,Va85}.
However, often these methods are either overly conservative in practice or
computationally intractable for high dimensional systems.
There exist many approaches for robust RoA estimation in the
presence of uncertainty \cite{Ca01,Pa98}, including for
parametric uncertainty \cite{To10,Ch13}.
However, these methods are usually derived by extending approaches for RoA
estimation in the absence of uncertainty to this more general setting and,
thus, are prone to inheriting similar properties of conservatism and/or
computational complexity.

To address these limitations, we have developed novel algorithms for
numerically computing safety margins which are both non-conservative and
computationally efficient for high dimensional nonlinear systems \cite{Fi24}.
In particular, in \cite{Fi24} we successfully apply these algorithms to
a realistic, nonlinear power system model with state space dimension 312
and parameter space dimension 86 to compute the safety margins in
response to a short circuit to within a chosen tolerance of $10^{-3}$.
This requires only 39 iterations, each of which involves a time domain
simulation and the solution of a quadratic program.
However, despite the empirical success of these algorithms, they lack
theoretical guarantees such as well-posedness and convergence certificates
that would ensure their generalizability to a wide variety of nonlinear systems
and operating conditions.

The purpose of the present paper is to address this gap by providing
well-posedness and convergence guarantees for these
algorithms for a large class of parameterized nonlinear systems.
To do so, we begin by noting that the derivative of the system trajectory at
each time with respect to parameters, known as the trajectory sensitivity,
can be efficiently computed numerically from a single time domain
simulation \cite{Hi00}.
It seems intuitive that the trajectory sensitivities will be very small for
post-disturbance initial conditions near the SEP, and will diverge towards
infinity as the RoA boundary is approached.
Motivated by this intuition, we define a function $G$ to be the reciprocal of
the supremum over time of the norm of the trajectory sensitivity at each time.
Under mild assumptions about the structure of the RoA boundary - similar to
those in \cite{Ch88} - we prove that for a parameter value $p$ in the recovery
boundary almost everywhere, $G(p) = 0$ and $G$ extends to a $C^2$ function on
a neighborhood of $p$.
Furthermore, we show that $G$ is
strictly positive and continuous over the recovery region.
Thus, we show that $G$ provides an exact characterization of the recovery
boundary in terms of trajectory sensitivities.

This characterization allows us to transform the challenging and abstract
problem of computing safety margins into a concrete numerical optimization
problem where we seek the closest parameter value $p^*$ to a nominal value
subject to the constraint that $G(p^*) = 0$.
More generally, in \cite{Fi24} algorithms are presented which use $G$ to
(i) find a point
on the recovery boundary in one dimensional parameter space, (ii) numerically
trace the recovery boundary in two dimensional parameter space, and (iii) find
the closest point on the recovery boundary to a nominal value in arbitrary
dimensional parameter space to compute the safety margin.
The optimization problem of (iii) is nonconvex, can possess multiple solutions,
and has a solution that can vary nonsmoothly with respect to the nominal
parameter value, so it may not be well-posed and is particularly challenging
to solve.

In the present paper, for an initial parameter value sufficiently close to the
recovery boundary, we provide well-posedness and convergence guarantees for
algorithms (i) and (ii), show that there exists a unique solution to the
nonconvex optimization problem of (iii) that depends smoothly on the nominal
value,
and provide convergence guarantees for algorithm (iii).
It is important to note that these convergence guarantees are for convergence
to the true recovery boundary, not to an estimate or approximation of it, and
therefore represent non-conservative guarantees for these methods of
finding the recovery boundary or the closest point on it.
These theoretical guarantees facilitate successful application
of these algorithms for computing the recovery boundary or the safety margin to
a large class of nonlinear systems.



The remainder of the paper is organized as follows.
Section~\ref{sec:back} provides background information,
Section~\ref{sec:ex} shows a motivating example,
Section~\ref{sec:res} provides the main results,
Sections~\ref{sec:proofs1} and \ref{sec:proofs2} show the proofs,
and Section~\ref{sec:conc} offers concluding remarks.

\section{Background}\label{sec:back}

Let $J$ be a connected smooth manifold that is a subset of some Euclidean space
and represents parameter space.
Let $P$ be a symmetric positive definite matrix, and define the inner
product $\langle p,q \rangle_P = p^\intercal Pq$,
metric $d_p(p,q)^2 = \langle p-q,p-q \rangle_P$,
and norm $||p||_P^2 = \langle p,p \rangle_P$.
For any $p \in J$ and $Q \subset J$, define the set distance
$d_P(p,Q) = \inf_{q \in Q} d_P(p,q)$.
Let $M = \mathbb{R}^n$ be Euclidean space and represent state space.
Let $\{V_p\}_{p \in J}$ be a family of $C^r$ vector fields over $M$ for $r \geq 1$
which vary $C^r$ continuously with parameter.\footnote{In particular, they vary
continuously with respect to the strong $C^r$ topology (see \cite{Hi76} for a
rigorous definition and further background).}
We can define a composite vector field $V$ on $M \times J$ by
$V(x,p) = (V_p(x),0)$ and write $\dot{x} = V(x,p)$.
Then $V$ is $C^r$ so it possesses a $C^r$ flow $\phi$, where
$\phi_{(t,p)}(x)$ denotes the flow from the initial condition $x$ at time $t$
for parameter value $p$.

Consider a fixed $\hat{p} \in J$.
For any $x \in M$ we say that $x$ is nonwandering under $V_{\hat{p}}$ if for any
open neighborhood $U$ of $x$ and any $T \in \mathbb{R}$,
there exists $t \in \mathbb{R}$ of the same sign as $T$ with $|t| > T$ such
that $\phi_{(t,\hat{p})}(U) \cap U \neq \emptyset$.
Let $\Omega(V_{\hat{p}})$ denote the set of all nonwandering points of
$V_{\hat{p}}$, and note that this includes all equilibria and periodic orbits of
$V_{\hat{p}}$, as well as chaotic dynamics and other forms of recurrent behavior.
Let a {\it critical element} $X_{\hat{p}}$ be either an equilibrium point or
a periodic orbit of $V_{\hat{p}}$.
An equilibrium point $X_{\hat{p}}$ of $V_{\hat{p}}$ is hyperbolic if its
linearization $\pd{V_{\hat{p}}(X_{\hat{p}},\hat{p})}{x}$ has no purely imaginary eigenvalues.
A periodic orbit $X_{\hat{p}}$ of $V_{\hat{p}}$ is hyperbolic if there
exists $x \in X_{\hat{p}}$ and
a cross section $S$ containing $x$ such that the Poincare first return
map $\tau:S \to S$ is well-defined and its linearization $\d{\tau(x)}{x}$ has
no eigenvalues of norm one.
Every hyperbolic critical element $X_{\hat{p}}$ of $V_{\hat{p}}$ possesses a
stable manifold
$W^s(X_{\hat{p}})$ and an unstable manifold $W^u(X_{\hat{p}})$ where
$W^s(X_{\hat{p}})$ $\left(W^u(X_{\hat{p}})\right)$ consists of all initial
conditions that converge to
$X_{\hat{p}}$ in forwards (backwards) time.
Furthermore, there exist local stable and unstable manifolds,
denoted $W^s_{\text{loc}}(X_{\hat{p}})$ and $W^u_{\text{loc}}(X_{\hat{p}})$ and which
are invariant
in forwards and backwards time, respectively.
Furthermore, for any hyperbolic critical element $X_{\hat{p}}$ of $V_{\hat{p}}$,
for $J$ sufficiently small and any $p \in J$ there exists a unique hyperbolic
critical element $X_p$ that is equal to $X_{\hat{p}}$ at $p = \hat{p}$, and
such that $X_p$, $W^s_{\text{loc}}(X_p)$, and
$W^u_{\text{loc}}(X_p)$ all vary $C^r$ continuously with $p$.
For any $J' \subset J$, define
$X_{J'} = \cup_{p \in J'} X_p \times \{p\} \subset M \times J$.
Define $W^s(X_{J'})$ and $W^u(X_{J'})$ analogously.


The notion of a generic property is meant to capture typical or expected
behavior, such as if a vector field or parameter were chosen at random, and
can be used to rule out undesirable pathological behavior.
In a measure space, we say a property is generic if it holds for a set of
full measure (i.e., its complement has measure zero).
In a topological space, we say a property is generic if it holds for a
countable intersection of open dense sets.
In this paper, the topological notion of generic is used when referring to
properties of vector fields (for which there is a natural topology, but no
natural measure), and the measure space notion is used when referring to
parameters (which lie in the smooth manifold $J$, where the notion of measure
zero is well-defined).

For any set $A$ in a topological space, let $\partial A$ denote its topological
boundary, and let $\overline{A}$ denote its topological closure.
For any sets $A, B$ in a metric space with metric $d$, define the set distance
$d(A,B)$ to be the infimum over $x \in A$ and $y \in B$ of $d(x,y)$.
For any point $x$ in a metric space and any $r > 0$, define $B_r(x)$ to
be the open ball of radius $r$ centered at $x$.
For any manifold $A$ and $x \in A$, let $T_xA$ denote the tangent space to
$A$ at $x$.
Two manifolds $A, B \subset M$ are transverse if for every $x \in A \cap B$,
$T_xA + T_xB = T_xM$.
For additional background in differential topology, including regular values,
embedded submanifolds, tubular neighborhoods, and retractions, we refer the
reader to \cite[Chapters 4-6]{Lee13}.

Suppose $p_0 \in J$ such that $V_{p_0}$ possesses a hyperbolic
stable equilibrium point (SEP) $X^s_{p_0}$.
Then for $J$ sufficiently small and any $p \in J$, there
exists a unique hyperbolic SEP $X^s_p$ near $X^s_{p_0}$.
Note that for $p \in J$,
$W^s(X^s_p)$ is the RoA of $X^s_p$, and $\partial W^s(X^s_p)$
is the RoA boundary.

We model a finite-time nonlinear parameter-dependent disturbance
as a parameter-dependent post-disturbance initial condition (which we simply
refer to as the initial condition) $y:J \to M$ given by $y_p := y(p)$.
Then, the subsequent dynamics are given by the vector field $V_p$ for
each $p \in J$.
For a parameter value $p \in J$, the system recovers from the
disturbance if and only if $y_p \in W^s(X^s(p))$.
Let the {\it recovery region} $R$ be the set of parameter values in $J$ for
which the system recovers, and define the {\it recovery boundary} to be its
topological boundary $\partial R$ in $J$.
For a fixed $p_0 \in R$, the distance $d(p_0,\partial R)$ is the recovery
margin, and provides a quantitative measure of the margins for safe operation.
Therefore, it is valuable to develop algorithms for finding points on the
recovery boundary - often, the closest such point - in order to determine
the safety margin.

Proposition~\ref{thm:multi} was proved in \cite{Fi23,Fi22}, and states that
under general assumptions satisfied
by a large class of dynamical systems, and for
sufficiently small $J$, the following holds.
The RoA boundary in state space varies continuously with respect to parameter,
and is equal to the union of the stable manifolds of the critical elements
it contains.
Every parameter value in the recovery boundary has corresponding initial
condition which lies on the RoA boundary in state space, and therefore in the
stable manifold of some critical element, which we call the controlling
critical element.
The unstable manifold of the controlling critical element intersects the RoA.
This proposition provides an explicit link between the recovery boundary in
parameter space and the RoA boundary in state space,
which will be exploited for the design and theoretical guarantees of
algorithms which compute points on the recovery boundary.

\begin{proposition}
  \label{thm:multi}
  \cite[Theorem 4.17, Corollary 4.18, Theorem 4.21, Corollary 4.23]{Fi23}
Assume there exists $p_0 \in J$ such that:
\begin{itemize}
\item[(i)] Every critical element in $\partial W^s(X^s_{p_0})$ is hyperbolic.
\item[(ii)] The intersections of the stable and unstable manifolds of the
  critical elements in $\partial W^s(X^s_{p_0})$ are transverse.
\item[(iii)] The intersection of $\partial W^s(X^s_{p_0})$ with $\Omega(V_{p_0})$
  consists of a finite union of critical elements $\{X^i_{p_0}\}_{i \in I}$.
\item[(iv)] There exists a neighborhood of infinity which contains no
  nonwandering points of $V_{p_0}$ and no orbits which diverge to infinity in
  both forwards and backwards time.
\item[(v)] Additional generic assumptions.\footnote{The additional generic
  assumptions are that certain
  lower semicontinuous functions over $J$ are continuous at $p_0$
  (see \cite[Theorem~4.17]{Fi23} for more details).}
\end{itemize}
Then for $J$ sufficiently small, $\partial W^s(X^s_p)$ varies continuously with
$p$,\footnote{The continuity is with respect to the Chabauty topology, which
  is an extension of the Hausdorff topology to (noncompact) Euclidean space.}
$\partial W^s(X^s_J) = \bigcup_{i \in I} W^s(X^i_J)$,
and for any $p^* \in \partial R$, $y_{p^*} \in \partial W^s(X^s_{p^*})$.
Then $y_{p^*} \in W^s(X^j_{p^*})$ for a unique $j \in I$, we call
$X^j_{p^*}$ the controlling critical element for $p^*$, and
$W^u(X^j_{p^*}) \cap W^s(X^s_{p^*}) \neq \emptyset$.
\end{proposition}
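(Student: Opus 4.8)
The plan is to obtain Proposition~\ref{thm:multi} as a \emph{parametrized} version of the classical characterization of the stability boundary due to Chiang, Hirsch, and Wu \cite{Ch88}. First I would verify the conclusion at the single parameter value $p_0$: assumptions (i)--(iv), together with the genericity hypotheses (v), are exactly (a version of) the hypotheses of that theorem, which yields the decomposition $\partial W^s(X^s_{p_0}) = \bigcup_{i \in I} W^s(X^i_{p_0})$ into finitely many stable manifolds of critical elements, together with $W^u(X^i_{p_0}) \cap W^s(X^s_{p_0}) \neq \emptyset$ for each $i \in I$. Everything else is a matter of propagating this to a neighborhood $J$ of $p_0$ and relating it to the recovery boundary.

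Second, I would show that hypotheses (i)--(v) are robust under small parameter perturbations, so that, shrinking $J$, they hold at every $p \in J$. Hyperbolicity of an equilibrium persists by the implicit function theorem, and of a periodic orbit by applying the implicit function theorem to its Poincar\'e return map; this produces the unique continuations $X^i_p$ and their $C^r$-continuous local stable and unstable manifolds already recorded in Section~\ref{sec:back}. Transversality of the stable and unstable manifolds of the $X^i_p$ is an open condition, and assumption (iv) persists by a compact-exhaustion argument, since the relevant nonwandering set and orbit segments stay inside a fixed compact set for $p$ near $p_0$. Assumption (v) is precisely what upgrades the automatic lower semicontinuity of the relevant set-valued data to continuity at $p_0$, and this is what pins down the index set $I$: no critical element enters $\partial W^s(X^s_p)$ and none leaves as $p$ varies near $p_0$. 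With (i)--(v) in force on $J$, applying the characterization at each $p \in J$ gives $\partial W^s(X^s_J) = \bigcup_{i \in I} W^s(X^i_J)$.

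Third, for Chabauty continuity of $p \mapsto \partial W^s(X^s_p)$ I would write each global stable manifold as the increasing union $W^s(X^i_p) = \bigcup_{t \le 0} \phi_{(t,p)}\bigl(W^s_{\text{loc}}(X^i_p)\bigr)$; since $W^s_{\text{loc}}(X^i_p)$ varies $C^r$-continuously and $\phi$ is $C^r$, each $W^s(X^i_p)$, and hence the finite union over $i \in I$, varies continuously in the Chabauty topology, with the ``no escape to infinity'' half of Chabauty continuity supplied by (iv). For the recovery boundary, I would first note that $\{(x,p) : x \in W^s(X^s_p)\}$ is open in $M \times J$ --- any point of the open RoA enters $W^s_{\text{loc}}(X^s_p)$ in finite time, and both $\phi$ and $W^s_{\text{loc}}(X^s_p)$ persist --- so $R$, being the preimage of this open set under the continuous map $p \mapsto (y_p,p)$, is open, whence $p^* \in \partial R$ forces $y_{p^*} \notin W^s(X^s_{p^*})$. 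Choosing $p_n \in R$ with $p_n \to p^*$, continuity of $y$ gives $y_{p_n} \to y_{p^*}$ with $y_{p_n} \in W^s(X^s_{p_n}) \subset \overline{W^s(X^s_{p_n})}$, and Chabauty continuity of $\overline{W^s(X^s_p)} = W^s(X^s_p) \cup \partial W^s(X^s_p)$ (a consequence of the boundary continuity above) passes the limit to $y_{p^*} \in \overline{W^s(X^s_{p^*})}$; together with $y_{p^*} \notin W^s(X^s_{p^*})$ this gives $y_{p^*} \in \partial W^s(X^s_{p^*})$. Finally, since distinct critical elements have disjoint stable manifolds, $y_{p^*}$ lies in exactly one $W^s(X^j_{p^*})$, yielding the unique controlling critical element $X^j_{p^*}$, and $W^u(X^j_{p^*}) \cap W^s(X^s_{p^*}) \neq \emptyset$ is part of the characterization applied at $p^*$.

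The main obstacle is the second step: proving that the index set $I$ is genuinely preserved under perturbation, i.e., that the decomposition $\partial W^s(X^s_{p_0}) = \bigcup_{i \in I} W^s(X^i_{p_0})$ neither gains nor loses pieces for $p$ near $p_0$. This is an $\Omega$-stability phenomenon localized to the stability boundary rather than on all of $M$, and it is exactly what assumption (v) is designed to secure; the bookkeeping needed to make the Chabauty continuity of the noncompact stable manifolds precise is the other delicate point.
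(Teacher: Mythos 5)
The paper does not contain a proof of this proposition: it is imported wholesale from the cited reference \cite{Fi23} (Theorem 4.17, Corollary 4.18, Theorem 4.21, Corollary 4.23), as both the citation in the statement itself and the preceding sentence (``Proposition~\ref{thm:multi} was proved in \cite{Fi23,Fi22}'') make explicit. There is therefore no in-paper argument against which to measure your sketch. For what it is worth, your reconstruction follows the natural line one would expect such a proof to take --- establish the Chiang--Hirsch--Wu decomposition at $p_0$, persist hypotheses (i)--(iv) to a neighborhood via the implicit function theorem and openness of transversality, invoke (v) to prevent the index set $I$ from changing, and establish Chabauty continuity of the stable manifolds to transfer the decomposition to $\partial W^s(X^s_J)$ and pull it back through $y$ to $\partial R$ --- and you correctly flag the two genuinely delicate points (the stability of $I$, which is exactly what the lower-semicontinuity-to-continuity upgrade in (v) is engineered to secure, and the noncompactness issues in the Chabauty continuity argument). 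But since the actual technical content lives entirely in \cite{Fi23}, which is not reproduced here, your proposal can only be judged as a plausible outline, not checked against the source.
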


Note that Assumptions (i), (ii), and (v) are generic, and it is generically
true that $\Omega(V_{p_0})$ is equal to the closure of the union of the
critical elements of $V_{p_0}$
(compare to Assumption (iii)), so the conditions of
Proposition~\ref{thm:multi} are very
general and hold for a large class of realistic engineering system models.

As $\phi_{(t,p)}(y(p))$ gives the flow of the vector field $V_p$
at time $t$ starting from initial condition $y(p)$,
as $t$ varies from zero to infinity this traces out the system trajectory
after the disturbance.
Then $\d{\phi_{(t,p)}(y(p))}{p}$ is the derivative of this flow with respect to
parameter, and represents the sensitivity of the post-disturbance system
trajectory with respect to parameter: the trajectory sensitivity.
The trajectory sensitivities are a function of parameter $p$ and time $t$.
We note that trajectory sensitivities can be efficiently computed numerically
as a byproduct of the underlying numerical integration of the system dynamics
\cite{Hi00}.

We will use local linearizations near an equilibrium point to
analyze the behavior of the trajectory sensitivities.
However, since trajectory sensitivities require a derivative of the flow,
standard $C^0$ linearizations (such as that provided by the Hartman-Grobman
Theorem) will not preserve the trajectory sensitivities in the local
coordinates.
Therefore, we will need to resort to smooth linearizations \cite{Se85},
which require an additional assumption: the strong Sternberg condition.
We define an integer combination of order $Q$ to be a linear combination
with nonnegative integer coefficients where the sum of the coefficients is
equal to $Q$.
We say that a matrix $N$ 
satisfies the strong Sternberg condition
of order $Q$ if (i) every integer combination of its eigenvalues of order
between $2$ and $Q$ is not equal to any single eigenvalue, and
(ii) the real part of every integer combination of its eigenvalues of order $Q$
is not equal to the real part of any single eigenvalue.
This is a mild non-resonance assumption.
Define $\rho^-$ to the ratio of the maximum over the minimum of the absolute
values of the real parts of the stable eigenvalues of $N$, and define
$\rho^+$ analogously for the unstable eigenvalues.
Then for any positive integer $Q$, the $Q$-smoothness of $N$ is the largest
integer $K \geq 0$ such that $Q = Q^+ + Q^-$ for $Q^+, Q^- \geq 0$ integers,
$Q^+ \geq K\rho^+$, and $Q^- \geq K\rho^-$.  Note that the $Q$-smoothness
approaches infinity as $Q$ does, so for $Q$ sufficiently large it will be
at least three.
For any complex-valued matrix $N$, let $||N||_1 = \sum_{i,j} |N_{i,j}|$.
Let $\text{diag}(A_1,...,A_n)$ refer to the block diagonal matrix with
matrices $A_1$, ..., $A_n$ on the diagonal.

\section{Motivating Example}\label{sec:ex}

To motivate the algorithms of this paper, we provide intuition from
the simple power system model of a single machine infinite bus, which has the
same structure as a nonlinear pendulum as follows
\begin{align*}
  \dot{x}_1 &= x_2 \\
  \dot{x}_2 &= p_1 \sin(x_1) - 0.5 x_2 + p_2
\end{align*}
where $p_1$ and $p_2$ are parameters.
The disturbance we consider is a temporary short circuit which is modeled by
temporarily setting $p_1 = 0$ for a time of $0.8$ seconds and then restoring
$p_1$ to its prior value.
We define the reciprocal trajectory sensitivity function $G$ to be
\begin{align*}
  G(p) = \inf_{t \geq 0} \left|\left|\pd{\phi_{(t,p)}(y(p))}{p}\right|\right|_1^{-1}.
\end{align*}

Figure~\ref{fig:roa}(a) shows the recovery region and recovery
boundary\footnote{In this simple example the recovery boundary is linear, but
this is typically not the case for more complex systems.} in
parameter space for
this disturbance, and the safety margin for a nominal parameter value of
$\begin{bmatrix} p_1 & p_2 \end{bmatrix} =
\begin{bmatrix} 1.9 & 1.5 \end{bmatrix}$.
Figure~\ref{fig:roa}(b) shows the (post-disturbance) initial condition and
the RoA boundary of the SEP in state space for three parameter values shown in
Figure~\ref{fig:roa}(a).
For the parameter values inside the recovery region, on the recovery boundary,
and outside the recovery boundary, their initial conditions lie inside
the RoA, on the RoA boundary, and outside the RoA boundary, respectively.
Figure~\ref{fig:traj}(a) shows a contour plot of 
$G$ in parameter space.  Note that $G$ converges to zero
exactly at the true recovery boundary.\footnote{In this simple example $G$
approaches zero monotonically as the recovery boundary is approached, but this
is not always the case for more complex systems.}
Figure~\ref{fig:traj}(b) plots a representative trajectory sensitivity as
a function of time for three parameter values shown in Figure~\ref{fig:traj}(a).
Note that the trajectory sensitivities grows large as parameter values approach
the recovery boundary, providing intuition for the contour plot of $G$ shown
in Figure~\ref{fig:traj}(a).


\begin{figure}[t!]
  \centering
  \begin{subfigure}[t]{0.49\linewidth}
    \centering
    \includegraphics[width=\linewidth]{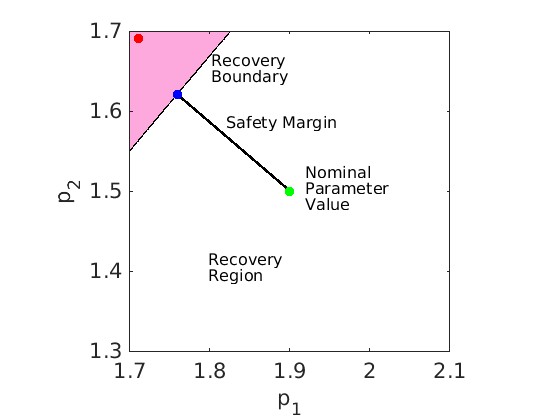}
    \caption{The recovery region, recovery boundary, safety margin,
      and three highlighted parameter values (open circles).}
  \end{subfigure}%
  \hfill
 \begin{subfigure}[t]{0.49\linewidth}
   \centering
   \includegraphics[width=\linewidth]{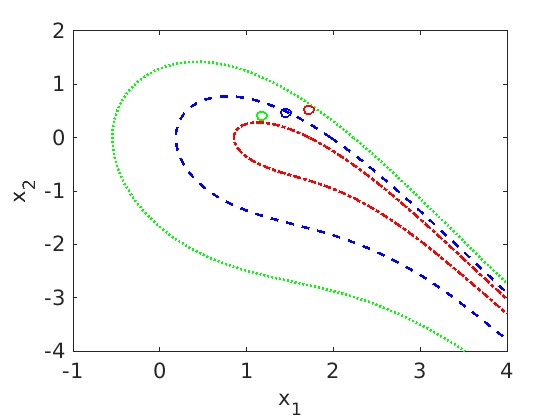}
   \caption{The post-disturbance initial conditions (open circles) and the
   RoA boundaries (dotted and dashed lines) corresponding by color to the
   highlighted parameter values in $(a)$.}
 \end{subfigure}%
 \caption{Parameter space and state space for the example of
   Section~\ref{sec:ex}.}
 \label{fig:roa}
\end{figure}

\begin{figure}[t!]
  \centering
  \begin{subfigure}[t]{0.47\linewidth}
    \centering
    \includegraphics[width=\linewidth]{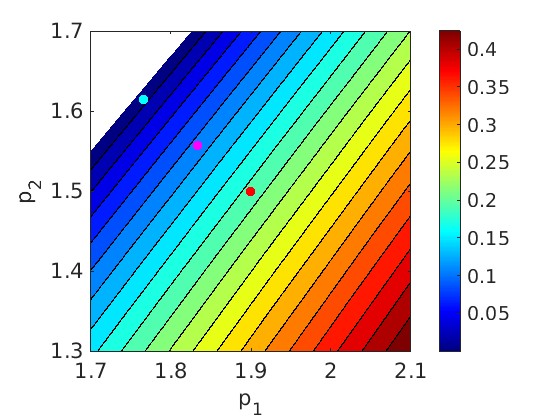}
    \caption{Contour plot for the trajectory sensitivity function $G$
      and three highlighted parameter values.}
  \end{subfigure}%
  \hfill
 \begin{subfigure}[t]{0.51\linewidth}
   \centering
   \includegraphics[width=\linewidth]{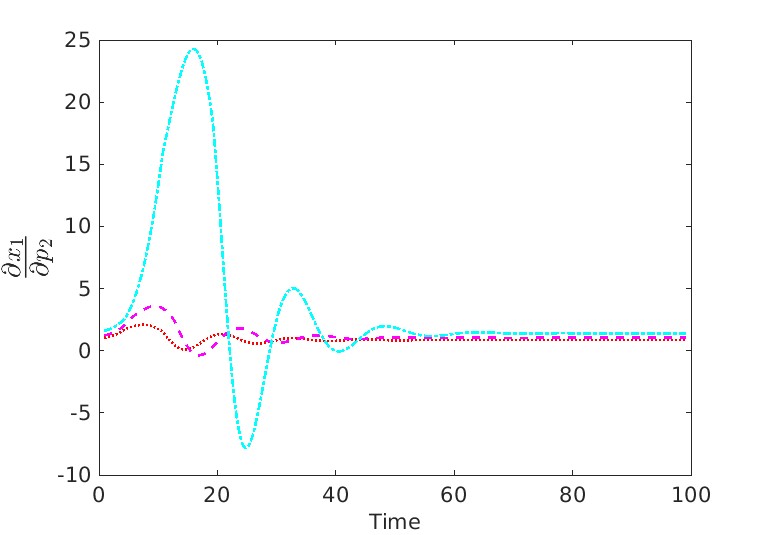}
   \caption{The trajectory sensitivity of state $x_1$ with respect to
     parameter $p_2$ as a function of time for the highlighted parameter
     values in (a).}
 \end{subfigure}%
 \caption{Trajectory sensitivities for the example of Section~\ref{sec:ex}.}
 \label{fig:traj}
\end{figure}

\section{Main Results}\label{sec:res}

\subsection{Expressing Recovery Boundary via Trajectory Sensitivities}

The recovery boundary $\partial R$ is defined abstractly, is typically
nonsmooth and nonconvex, and is usually intractable to compute or
approximate accurately, especially in high dimensions.
The first main result of this work is to provide an exact characterization of
the recovery boundary in terms of trajectory sensitivities, which measure the
derivative of the flow with respect to parameters and can be
efficiently computed numerically.  This characterization will then be used
to transform the abstract problem of finding points on the recovery boundary
into a concrete formulation which can be solved using tractable and efficient
numerical algorithms \cite{Fi24,Fi18c}, as discussed further in
Section~\ref{sec:algos}.
Before presenting the main result of this section, several assumptions and
definitions are required.

Assume the conditions of Proposition~\ref{thm:multi}.
Then for each $p^* \in \partial R$ there exists
a controlling critical element $X^*(p^*)$ such that $y(p^*) \in W^s(X^*(p^*))$.
All of the assumptions below are made for $J$ sufficiently small; in other
words, they are local assumptions in parameter space.

\begin{assumption}
  \label{as:yj}
By Proposition~\ref{thm:multi},
$\partial W^s(X^s_J) = \bigcup_{i \in I} W^s(X^i_J)$.
Assume that $y_J$ is transverse to $W^s(X^i_J)$ for all $i \in I$.
\end{assumption}

\begin{remark}
Assumption~\ref{as:yj} is generic because $C^1$ submanifolds
are generically transverse \cite[Theorem A.3.20]{Ka99}.
\end{remark}

\begin{assumption}
  \label{as:init}
  Assume that for each $p \in R$, there exists some time $t \geq 0$ such
  that $\d{\phi_{(t,p)}(y(p))}{p} \neq 0$.
\end{assumption}

\begin{remark}
 Assumption~\ref{as:init} implies that the parameter $p$ has a nonzero effect
 on the flow, which is a very mild assumption.
\end{remark}

\begin{assumption}
  \label{as:smooth}
The vector field family $\{V_p\}_{p \in J}$ is strong $C^3$ continuous,
each vector field $V_p$ for $p \in J$ is $C^\infty$ at each equilibrium point
in $\overline{W}^s(X^s(p))$, and $y$ is $C^3$.
\end{assumption}

\begin{remark}
  Assumption~\ref{as:smooth} is required to ensure the existence of a smooth
  local linearization (see Assumption~\ref{as:res}) near equilibrium points.
  It is possible to relax the condition that each vector field be $C^\infty$
  at equilibria to $C^Q$ for a particular finite $Q$, but for simplicity of
  presentation we focus on the $C^\infty$ case.
  It may be possible in future work to relax the assumption of $C^3$ continuity
  to certain classes of piecewise smooth vector fields
  (see, e.g., \cite{Fi19b}).
\end{remark}


\begin{assumption}
  \label{as:res}
  For any $p \in \overline{R}$,  if $p \in R$ let $X = X^s(p)$ and if
  $p \in \partial R$ let $X = X^*(p)$.
 Assume that the matrix $\pd{V_p(X)}{x}$
 satisfies the strong Sternberg condition for $Q$ with $Q$ sufficiently large
 such that the $Q$-smoothness is at least three.
\end{assumption}

\begin{remark}
The strong Sternberg condition for $Q$ of Assumption~\ref{as:res}
is a non-resonance condition that will ensure the existence of a $C^3$
local linearization about the controlling critical element.
Note that this is stronger than the $C^0$ local linearization given by
the Hartman-Grobman Theorem, and is required to ensure that the trajectory
sensitivities in the local linear coordinates
can be related to those of the original nonlinear system.
\end{remark}

\begin{assumption}
  \label{as:oneD}
For every $p^* \in \partial R$, 
if $W^s(X^*(p^*))$ has codimension one then $X^*(p^*)$ is an equilibrium point.
\end{assumption}

\begin{remark}
For many practical nonlinear systems including power systems,
Assumption~\ref{as:oneD} is ubiquitous and
has been observed through numerical experiments on a wide range of
realistic models \cite{Tr96}.
\end{remark}

For any $p^* \in \partial R$ where $W^s(X^*(p^*))$ has codimension one,
since by Assumption~\ref{as:oneD} $X^*(p^*)$ is an equilibrium point, this
implies by hyperbolicity that the unstable manifold $W^u(X^*(p^*))$ is one
dimensional.

\begin{assumption}
  \label{as:unique}
Let $p^* \in \partial R$ such that $W^s(X^*(p^*))$ has codimension one.
We will define a function $f(t,p)$ that is equal to
$g(p)\left|\left|\d{\phi_{(t,p)}(y(p))}{p}\right|\right|_1$, where $g(p)$ is
a particular scalar function, in the limit as $p \to p^*$.
We will show that $f(t,p^*)$ attains a finite maximum over $t \geq 0$.
Assume that this maximum is attained at a unique time $\tilde{t}$.
\end{assumption}

\begin{remark}
  Assumption~\ref{as:unique} is required to ensure the trajectory
  sensitivities vary $C^2$ with parameter near $\partial R$.  It could be
  relaxed to attaining the maximum at a finite number of times.
\end{remark}

Define a function $H:[0,\infty) \times \overline{R} \to [0,\infty]$
by $H(t,p) = \left|\left|\d{\phi_{(t,p)}(y(p))}{p}\right|\right|_1^{-1}$.
Then $H(t,p)$ represents the reciprocal of the norm of the trajectory
sensitivity at time $t$ and for parameter value $p$.
Define the function $G:\overline{R} \to [0,\infty)$ by
$G(p) = \inf_{t \in [0,\infty)} H(t,p)$.
Then $G$ gives the infimum over time of the reciprocal of the norm of the
trajectory sensitivity.

Intuitively, it is natural to expect that the flow would become infinitely
sensitive to perturbations along the RoA boundary, and less sensitive from
within the RoA.
This would imply that the trajectory sensitivities would diverge to infinity
along the RoA boundary and stay finite within the RoA, and that the reciprocal
trajectory sensitivities would approach zero along the RoA boundary and stay
positive within the RoA.
By Proposition~\ref{thm:multi}, $p^* \in \partial R$ implies that the initial
condition $y(p^*)$ lies on the RoA boundary.
Therefore, it seems intuitive that $p^* \in \partial R$ would imply that
$G(p^*) = 0$, and that $p \in R$ would imply that $G(p) > 0$.
Theorem~\ref{thm:Gtot} makes this intuition explicit, thus providing a
characterization of
$\overline{R}$ in terms of the reciprocal trajectory sensitivities $G$.
Furthermore, it shows that $G$ is continuous so that $G(p)$ approaches zero as
$p$ approaches $\partial R$, motivating algorithms which find points on
$\partial R$ by solving for roots of $G$ \cite{Fi24,Fi18c}.
It also shows that $G$ extends to a $C^2$ function near $\partial R$, which
is valuable for providing convergence guarantees for these algorithms, as
presented in Section~\ref{sec:algos}.

\begin{theorem}
  \label{thm:Gtot}
Assume the conditions of Proposition~\ref{thm:multi}
and Assumptions~\ref{as:yj}-\ref{as:unique}.
Then $G$ 
is finite, strictly positive, and 
continuous over $R$.
For generic\footnote{By Lemma~\ref{lem:Rmanifold}, $\partial R$ consists of
  a finite union of $C^3$ submanifolds.
  The generic assumption is that $p^*$ belongs to a submanifold with
  codimension one.} $p^* \in \partial R$, 
$G(p^*) = 0$, $G$ is continuous at $p^*$, and
$G$ extends
to a $C^2$ function over an open
neighborhood of $p^*$ in $J$
which is strictly negative outside of $\overline{R}$.
\end{theorem}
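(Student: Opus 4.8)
The plan is to split the argument into two regimes, the interior $R$ and (generic) boundary points $p^* \in \partial R$, and to build the boundary analysis out of a local linearization near the controlling critical element.

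\emph{Step 1: finiteness, positivity, and continuity of $G$ on $R$.}
For $p \in R$, the initial condition $y(p)$ lies in the open set $W^s(X^s(p))$, so the flow $\phi_{(t,p)}(y(p)) \to X^s(p)$ as $t \to \infty$, and by hyperbolicity of the SEP this convergence is exponential. The trajectory sensitivity $\d{\phi_{(t,p)}(y(p))}{p}$ satisfies the variational equation, whose solution converges (again exponentially) to the sensitivity of the equilibrium $X^s(p)$ with respect to $p$, a finite quantity. Hence $H(t,p)$ is bounded below by a positive constant for $t$ large, and since it is continuous and strictly positive on the compact interval $[0,T]$ (Assumption~\ref{as:init} rules out the sensitivity being identically zero, and the $C^3$ flow makes $H$ continuous where finite), the infimum $G(p) = \inf_t H(t,p)$ is attained, finite, and strictly positive. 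Continuity of $G$ on $R$ follows from continuity of the flow and its sensitivity in $(t,p)$ together with the uniform-in-$p$ exponential tail estimate, which lets one localize the infimum to a compact time window on a neighborhood of any $p_0 \in R$; $G$ is then an infimum of a jointly continuous function over a compact set, hence continuous.

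\emph{Step 2: local normal-form analysis at a generic boundary point.}
Fix generic $p^* \in \partial R$, so $W^s(X^*(p^*))$ has codimension one; by Assumption~\ref{as:oneD}, $X^*(p^*)$ is an equilibrium with a one-dimensional unstable manifold, and by Proposition~\ref{thm:multi} $W^u(X^*(p^*))$ meets $W^s(X^s(p^*))$. Using Assumptions~\ref{as:smooth} and~\ref{as:res}, apply a $C^3$ smooth (Sternberg) linearization in a neighborhood of $X^*(p^*)$, varying $C^r$ with $p$; in these coordinates the flow is linear, so the trajectory sensitivity is computed explicitly from the matrix exponential, and the single unstable direction dominates: as $t \to \infty$ along an orbit that enters the local chart, the sensitivity grows like $e^{\lambda^+ t}$ times a vector whose magnitude is governed by the (signed) coordinate measuring displacement off $W^s(X^*(p^*))$. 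The key identity is that this displacement coordinate, call it $c(p)$, vanishes exactly when $y(p) \in W^s(X^*(p^*))$, i.e.\ on the codimension-one piece of $\partial R$ through $p^*$, and by Assumption~\ref{as:yj} (transversality of $y_J$ to $W^s(X^i_J)$) $c$ is a $C^3$ submersion near $p^*$ whose zero set is precisely that boundary sheet. Define $g(p)$ to be the reciprocal of the prefactor in the exponential growth (a $C^3$, strictly positive function coming from the linearization), so that $f(t,p) := g(p)\,\|\d{\phi_{(t,p)}(y(p))}{p}\|_1$ extends continuously to $p^*$, attains a finite maximum over $t \ge 0$, and by Assumption~\ref{as:unique} attains it at a unique time $\tilde t$. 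Then near $p^*$ one has, up to a $C^2$ positive factor, $\sup_t \|\d{\phi_{(t,p)}(y(p))}{p}\|_1 \sim |c(p)|^{-1}$ times a bounded $C^2$ factor — more precisely $G(p) = |c(p)|\cdot g(p)/f(\tilde t, p) \cdot (1+o(1))$ — so that $G$ extends across $\partial R$ to $\operatorname{sgn}(c(p))\,|c(p)| \cdot (\text{positive }C^2) = c(p)\cdot(\text{positive }C^2)$, which is $C^2$, vanishes on the boundary sheet, equals $G$ on $R$ (where $c>0$ by orientation), and is strictly negative outside $\overline R$ (where $c<0$). In particular $G(p^*)=0$ and $G$ is continuous at $p^*$.

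\emph{Step 3: upgrading the asymptotics to a clean $C^2$ statement.}
The delicate point, and what I expect to be the main obstacle, is making Step~2 rigorous \emph{uniformly} in $p$ near $p^*$: one must show that the infimum defining $G$ is, for $p$ near $p^*$, attained (or asymptotically controlled) at times lying in the tail where the linear normal form applies, patch the finite-time part of the trajectory (which flows from $y(p)$ into the local chart, uniformly $C^3$ in $p$) to the asymptotic part, and verify that the $o(1)$ error terms in the estimate $G(p) \approx c(p)\,g(p)/f(\tilde t,p)$ are genuinely $C^2$-small — i.e.\ that $G$ minus the claimed $C^2$ extension has vanishing value and first two derivatives along the boundary sheet. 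This requires quantitative versions of the smooth linearization (uniform $C^3$ bounds as $p$ varies, from \cite{Se85}), the exponential dichotomy estimates for the variational equation, and careful bookkeeping of how the ``worst-case time'' for the sensitivity norm behaves as $c(p) \to 0$ (it escapes to $+\infty$ logarithmically in $1/|c(p)|$, which is why the reciprocal $G$ is the right object to regularize the blow-up). Once the uniform estimates are in hand, Lemma~\ref{lem:Rmanifold} supplies the $C^3$ submanifold structure of $\partial R$ so that ``generic $p^*$'' (codimension-one sheet) is well-defined, and the theorem follows by assembling Steps~1--3.
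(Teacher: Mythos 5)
Your plan follows essentially the same route as the paper's proof: the interior analysis via a Sternberg $C^3$ linearization near the SEP giving exponential control of the variational solution; the boundary analysis via a Sternberg linearization near the one‑dimensionally‑unstable controlling equilibrium, identification of the transverse coordinate $x_u$ (your $c(p)$) that vanishes on the codimension‑one boundary sheet by Assumption~\ref{as:yj}, the normalized function $f$ with unique maximizing time $\tilde t$ (Assumption~\ref{as:unique}) upgraded to a $C^2$ map $\hat t(p)$ by the implicit function theorem, the logarithmic escape time to a fixed unstable height $c_u$, and the extension of $G$ across $\partial R$ via the sign of the transverse coordinate in a slice chart. One small slip: you write $G(p) = |c(p)| \cdot g(p) / f(\tilde t,p)\cdot(1+o(1))$, but since your $g(p)$ is the normalizing factor (the paper's $g(p)=x_u(p)\sim c(p)$), this should read $G(p)=g(p)/f(\hat t(p),p)$ without the extra $|c(p)|$; the rest of the derivation (and the sign of the extension) is unaffected.
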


\begin{remark}
  If all critical elements in the RoA boundary are equilibria,
  it can be shown that $G(p^*) = 0$ for all
  $p^* \in \partial R$, so that $\partial R = G^{-1}(0)$.
  We conjecture that the same holds even in the presence of periodic
  orbits in the RoA boundary.
\end{remark}

\subsection{Recovery Boundary Algorithms and Convergence Guarantees}
\label{sec:algos}

Based on the characterization of the recovery boundary in terms of trajectory
sensitivities from Theorem~\ref{thm:Gtot}, we present algorithms that minimize
inverse trajectory sensitivities to find points on the recovery boundary, and
provide convergence guarantees for these algorithms.
We note for use in these algorithms that for any $p \in R$, $G(p)$ and $DG(p)$
can be efficiently computed numerically from a single time domain simulation
\cite{Fi24,Fi18c}.
The algorithms are organized based on the dimension of the parameter space
$J$ in which the points on the recovery boundary are computed, which is
determined by the user based on their selection of the parameters of interest.

\subsubsection{Finding Recovery Boundary in One Dimensional Parameter Space}

When the parameter space $J$ is one dimensional,
a connected component of the recovery boundary
consists of a single point.
Therefore, in this setting our goal is to find a single point on the recovery
boundary, i.e., $p^* \in \partial R$.
To do so, motivated by Theorem~\ref{thm:Gtot} we wish to solve for $p^*$
which satisfies $G(p^*) = 0$.
To solve this nonlinear equation, we would like to apply Newton-Raphson,
which results in the following update at each iteration $s$:
\begin{align}
  \tilde{F}(p^s) := p^s - DG(p^s)^{-1}G(p^s).
  \label{eq:1dim}
\end{align}
However, it is possible that for $p^s \in R$, $\tilde{F}(p^s) \not\in R$.
As the asymptotic behavior of the system is not known outside of $R$, where
the system
will not recover from the disturbance, and since $G$ is not defined outside
of $\overline{R}$, it is desirable to enforce that each iteration of the
algorithm lies inside $R$.
This is achieved with a backtracking line search performed by bisection
which is defined formally as follows. Let $m \in \{0, 1, ...\}$,
$\tilde{F}$ be as in \eqref{eq:1dim}, and define
\begin{align}
  \begin{split}
  F_m(p^s) &= p^s + \f{1}{2^m}\left(\tilde{F}(p^s)-p^s\right) \\
  m(p^s) &= \min\{m \in \{0, 1, ...\}:F_m(p^s) \in R\} \\
  p^{s+1} &= F(p^s) := F_{m(p^s)}(p^s).
  \label{eq:backtrack}
  \end{split}
\end{align}
As $R = \{p:G(p) > 0\}$ is open by continuity of $G$, for any $p^s \in R$,
$m(p^s)$ is finite.
Thus, $p^{s+1} \in R$ so $R$ is forward invariant under this algorithm.

Theorem~\ref{thm:1dim} provides convergence guarantees for finding a point
on the recovery boundary in one dimensional parameter space using the algorithm
defined by \eqref{eq:backtrack} with $\tilde{F}$ as in \eqref{eq:1dim}.

\begin{theorem}
  \label{thm:1dim}
Assume the conditions of Theorem~\ref{thm:Gtot}
and an additional generic assumption.\footnote{The generic assumption is that
zero is a regular value of the $C^2$ function $\tilde{G}$.}
Then there exists an open neighborhood $N$ of $\partial R$ such that for
$p_0 \in N$, the sequence $\{p^s\}_{s=1}^\infty$ starting
from $p^1 = p_0$ and defined by the algorithm \eqref{eq:backtrack} with
$\tilde{F}$ as given by \eqref{eq:1dim} is well-defined and converges to a
unique $p^* \in \partial R$.
\end{theorem}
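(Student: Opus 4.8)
\emph{The plan is} to reduce the claim, using Theorem~\ref{thm:Gtot}, to the classical local convergence theory of Newton--Raphson at a simple root, and then to show that the backtracking line search of \eqref{eq:backtrack} is triggered at most once per step near $\partial R$ and cannot spoil a geometric contraction. First I would use the one-dimensionality of $J$: by Lemma~\ref{lem:Rmanifold}, $\partial R$ is a finite union of $C^3$ submanifolds of $J$; since $\partial R = \overline{R}\setminus R$ is the boundary of an open set it has empty interior, so no piece is one-dimensional, and each point of $\partial R$ lies in a codimension-one submanifold. Hence Theorem~\ref{thm:Gtot} and the added generic assumption apply at every $p^*\in\partial R$: there is an open interval $U_{p^*}\ni p^*$ and a $C^2$ function $\tilde{G}$ on $U_{p^*}$ with $\tilde{G}=G$ on $U_{p^*}\cap\overline{R}$, $\tilde{G}(p^*)=0$, $\tilde{G}<0$ on $U_{p^*}\setminus\overline{R}$, and $D\tilde{G}(p^*)\neq 0$; by the implicit function theorem $\tilde{G}^{-1}(0)\cap U_{p^*}=\{p^*\}$, so (since every boundary point has $G=0$) $\partial R$ is in fact a closed, discrete set of isolated points. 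Shrinking $U_{p^*}$ and relabeling so $D\tilde{G}>0$, I may assume $D\tilde{G}>0$ and $|D^2\tilde{G}|$, $|D\tilde{G}|^{-1}$ bounded on $U_{p^*}$, $U_{p^*}\cap\partial R=\{p^*\}$, and therefore $U_{p^*}\cap R=\{p\in U_{p^*}:p>p^*\}$, on which $G=\tilde{G}$ and $DG=D\tilde{G}$ (using $G>0$ on $R$ and $\tilde{G}<0$ on $U_{p^*}\setminus\overline{R}$).

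\emph{Next} I would record the standard Newton estimate: for $p\in U_{p^*}$, rewriting the map of \eqref{eq:1dim} in terms of $\tilde{G}$ and Taylor-expanding about $p^*$ (using $\tilde{G}(p^*)=0$),
\begin{align*}
  \tilde{F}(p)-p^* \;=\; (p-p^*)-\f{\tilde{G}(p)}{D\tilde{G}(p)}
  \;=\; \f{\tfrac12 D^2\tilde{G}(\xi)\,(p-p^*)^2}{D\tilde{G}(p)}
\end{align*}
for some $\xi$ between $p$ and $p^*$, whence $|\tilde{F}(p)-p^*|\le K|p-p^*|^2$ with $K:=\tfrac12\big(\sup_{U_{p^*}}|D^2\tilde{G}|\big)\big(\inf_{U_{p^*}}|D\tilde{G}|\big)^{-1}<\infty$. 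Fix $\delta>0$ with $B_\delta(p^*)\subset U_{p^*}$ and $K\delta<1$, and take $p^s\in R\cap B_\delta(p^*)$, so $G(p^s)>0$, $DG(p^s)>0$, and the full step satisfies $\tilde{F}(p^s)<p^s$. If $\tilde{F}(p^s)>p^*$ then $\tilde{F}(p^s)\in R$, so $m(p^s)=0$, $p^{s+1}=\tilde{F}(p^s)$, and $|p^{s+1}-p^*|\le K\delta\,|p^s-p^*|$. Otherwise $\tilde{F}(p^s)\le p^*\notin R$, so $m(p^s)\ge1$; but the estimate gives
\begin{align*}
  F_1(p^s)-p^* \;=\; \f{(p^s-p^*)+(\tilde{F}(p^s)-p^*)}{2}
  \;\ge\; \f{|p^s-p^*|\big(1-K|p^s-p^*|\big)}{2} \;>\; 0,
\end{align*}
and $F_1(p^s)\in B_\delta(p^*)$ (since $\tilde{F}(p^s)\in(p^*-\delta,p^s)$ as $K\delta<1$), hence $F_1(p^s)\in R$, so $m(p^s)=1$, $p^{s+1}=F_1(p^s)$, and $|p^{s+1}-p^*|\le\tfrac12(1+K\delta)\,|p^s-p^*|$. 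In either case $p^{s+1}\in R\cap B_\delta(p^*)$ and $|p^{s+1}-p^*|\le c\,|p^s-p^*|$ with $c:=\tfrac12(1+K\delta)<1$.

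\emph{Then} I would conclude by induction: for $p^1=p_0\in R\cap B_\delta(p^*)$, every iterate stays in $R$, so (by openness of $R$, as noted after \eqref{eq:backtrack}) the line search terminates at each step and $\{p^s\}$ is well-defined, and $|p^s-p^*|\le c^{s-1}|p_0-p^*|\to 0$, i.e.\ $p^s\to p^*$. To obtain $N$, pick such a $\delta(p^*)$ for each $p^*\in\partial R$, shrinking the radii so the balls $B_{\delta(p^*)}(p^*)$ are pairwise disjoint (possible since $\partial R$ is closed and discrete, so every $p^*$ has positive distance to $\partial R\setminus\{p^*\}$), and set $N:=\bigcup_{p^*\in\partial R}B_{\delta(p^*)}(p^*)$; restricting to $p_0\in N\cap\overline{R}$, on which the algorithm is defined (the case $p_0\in\partial R$ being trivial), each such $p_0$ lies in exactly one ball and the iterates converge to that unique $p^*$.

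\emph{The main obstacle} I anticipate is the interaction of the line search with the Newton step: one must show that near $\partial R$ backtracking is invoked at most once and never undoes progress, which is exactly the inequality $F_1(p^s)-p^*>0$ above. That inequality works only because the quadratic Newton error bound is combined with the fact that, locally, $R$ is a half-interval bounded by a \emph{regular} zero of the $C^2$ extension $\tilde{G}$ — so it leans essentially on both the added generic assumption ($0$ a regular value of $\tilde{G}$) and the $C^2$ regularity supplied by Theorem~\ref{thm:Gtot}. A secondary, more bookkeeping, issue is the global structure in one dimension — establishing that $\partial R$ is a discrete set of isolated regular boundary points — so that the local neighborhoods can be glued into a single open $N$.
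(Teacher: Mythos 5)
Your proposal is correct, and it takes a genuinely different route from the paper's proof. The paper establishes a \emph{linear} contraction bound $\|\tilde{F}(p)-p^*\|\le k\|p-p^*\|$ with $k<1$ via the mean value theorem and uniform continuity of $d\tilde{G}$ (using only $C^1$ regularity of $\tilde{G}$), shows that the ball $\overline{B}_r(p^*)\cap R$ is forward invariant, and then invokes the separate Lemma~\ref{lem:conv}(a), which handles the interaction of arbitrary backtracking depth $m(p^s)$ with the contraction by an $\omega$-limit-set argument. You instead exploit the $C^2$ regularity of $\tilde{G}$ and the regular-value hypothesis to get the classical \emph{quadratic} Newton bound $|\tilde{F}(p)-p^*|\le K|p-p^*|^2$, and then show directly that near a simple root the backtracking is invoked at most once ($m(p^s)\le 1$), yielding the explicit contraction ratio $\tfrac12(1+K\delta)<1$ and a self-contained induction; this avoids Lemma~\ref{lem:conv} entirely and exposes that the iteration is eventually undamped Newton, hence locally quadratically convergent. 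You also differ in how you argue $\partial R$ is isolated: the paper asserts $R$ is path-connected (so $\partial R$ has at most two points), whereas you get discreteness directly from Lemma~\ref{lem:Rmanifold} (the pieces have empty interior, hence are zero-dimensional in $\dim J=1$) together with the implicit function theorem applied to the regular zero of $\tilde{G}$, which is arguably more robust since it does not presuppose connectedness of $R$. What the paper's route buys is modularity (Lemma~\ref{lem:conv} is reused in Theorems~\ref{thm:cont} and~\ref{thm:conv}) and weaker regularity requirements; what your route buys is a sharper, self-contained local analysis and a quantitative bound on the backtracking depth. One minor caveat mirrored in both proofs: the sequence is only well-defined for $p_0\in N\cap R$ (the Newton map requires $G,\,DG$ and is degenerate exactly on $\partial R$), so the ``$p_0\in\partial R$ is trivial'' remark should be read as saying $p_0$ itself is already the desired boundary point rather than that the recursion runs.
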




\subsubsection{Tracing Recovery Boundary in Two Dimensional Parameter Space}

When the parameter space $J$ is two dimensional, by the proof of
Theorem~\ref{thm:Gtot} (in particular, Lemma~\ref{lem:Rmanifold}) the recovery
boundary consists of a finite union of
one dimensional manifolds (i.e., smooth curves).
Therefore, in this setting our goal is to
numerically trace the recovery boundary by generating a sequence of points
along it.  To do so, we use a predictor-corrector algorithm
\cite{Fi24} which alternates between predictor steps, which start
from a point on $\partial R$ and move along the tangent to
$\partial R$ to find a predicted point, and corrector steps, which
project onto a point in the intersection of $\partial R$ with the hyperplane
containing the predicted point and orthogonal to that tangent.
More formally, for $\hat{p} \in \partial R$ the predicted point is given by
\begin{align}
  p^{pred}(p) = \hat{p} + \kappa \eta(p) \label{eq:pred}
\end{align}
where $\kappa > 0$ and $\eta(p)$ is the tangent to $\partial R$ at $p$.
Next, the hyperplane for the correction step is equal to the set of $p \in J$
that satisfy
\begin{align*}
  (p-p^{pred}(p))^\intercal(p^{pred}(p)-\hat{p}) = 0
\end{align*}
which is equivalent by \eqref{eq:pred} to
\begin{align*}
  (p-\hat{p})^\intercal \eta(p) - \kappa = 0,
\end{align*}
so let $H_\kappa$ denote this hyperplane.
Thus, to find a point in the intersection of $H_\kappa$ and $G^{-1}(0)$,
it suffices to solve
\begin{align*}
  f(p) := \begin{bmatrix}
    G(p) \\
    (p-\hat{p})^\intercal \eta(p) - \kappa
  \end{bmatrix} = 0.
\end{align*}
To solve these nonlinear equations, we use the following Newton-Raphson update
at each iteration $s$:
\begin{align}
  \tilde{F}(p^s) := p^s - Df(p^s)^{-1}f(p^s) \label{eq:2dim}
\end{align}
together with the backtracking line search of \eqref{eq:backtrack} so that
$R$ is forward invariant under this algorithm.
If $p^{pred}(\hat{p}) \in R$ then we set $p^1 = p^{pred}(\hat{p})$.
Otherwise, observing that $H_\kappa$ is a line in $J$, we perform a line
search using, for example, the bisection or golden section search
  methods, to find $p^1 \in H_\kappa \cap R$.

Theorem~\ref{thm:cont} provides convergence guarantees for finding the next
point on the recovery boundary from the current point using the algorithm
defined by \eqref{eq:backtrack} with $\tilde{F}$ as in \eqref{eq:2dim}.
Repeated application of this theorem therefore guarantees that this algorithm
will numerically trace a
sequence of points along the recovery boundary, and can approximate the
recovery boundary to arbitrary accuracy as the step size $\kappa \to 0$.

\begin{theorem}
  \label{thm:cont}
  Assume the conditions of Theorem~\ref{thm:Gtot} and an additional generic
assumption.\footnote{The generic assumption is that zero is a regular value of
  the $C^2$ function $\tilde{G}$.}
Then for generic $\hat{p} \in \partial R$ there exists $r > 0$ such that for
generic $\kappa \in (0,r)$,  the
sequence $\{p^s\}_{s=1}^\infty$ starting from $p^1 \in H_\kappa \cap R$ and
defined by the algorithm \eqref{eq:backtrack} with
$\tilde{F}$ as given by \eqref{eq:2dim} is well-defined and
converges to a unique $p^* \in \partial R \cap H_\kappa$.
\end{theorem}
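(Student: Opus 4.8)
The plan is to reduce the statement to a local convergence result for backtracked Newton's method at a nondegenerate zero near $\hat p$ of the $C^2$ map $\tilde f := (\tilde G,\,h)$, where $\tilde G$ is the $C^2$ extension of $G$ supplied by Theorem~\ref{thm:Gtot}, $h(p) := (p-\hat p)^\intercal\eta(p)-\kappa$, and $\eta$ is a fixed $C^2$ extension near $\partial R$ of its unit tangent field (equivalently the constant $\eta(\hat p)$, for which $H_\kappa$ is a genuine line). Since $\hat p$ is generic, Theorem~\ref{thm:Gtot} and Lemma~\ref{lem:Rmanifold} give a neighborhood $W$ of $\hat p$ on which $\tilde G$ is $C^2$, $R\cap W=\{\tilde G>0\}\cap W$, $\tilde G<0$ on $W\setminus\overline{R}$, and, by the regular value hypothesis, $D\tilde G$ is nonvanishing on $\partial R\cap W=\tilde G^{-1}(0)\cap W$; hence $\partial R\cap W$ is a $C^2$ curve with tangent orthogonal to $\nabla\tilde G$. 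On $R$ we have $f\equiv\tilde f$, so the iteration \eqref{eq:backtrack} with \eqref{eq:2dim} agrees with backtracked Newton for $\tilde f$ as long as the iterates remain in $R$.

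First I would identify the limit. Parametrizing $\partial R\cap W$ by arclength $\gamma(s)$ with $\gamma(0)=\hat p$, $\gamma'(0)=\eta(\hat p)$, the function $\psi(s,\kappa):=(\gamma(s)-\hat p)^\intercal\eta(\gamma(s))-\kappa$ has $\psi(0,0)=0$ and $\partial_s\psi(0,0)=1$, so the implicit function theorem yields a $C^2$ branch $s(\kappa)$ with $s(0)=0$, hence a unique point $p^*(\kappa):=\gamma(s(\kappa))\in\partial R\cap H_\kappa$ near $\hat p$, and $s(\kappa)=\kappa+O(\kappa^2)$ so $p^*(\kappa)=\hat p+\kappa\eta(\hat p)+O(\kappa^2)$. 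Because $\nabla\tilde G(\hat p)$ and $\eta(\hat p)$ span the plane, $D\tilde f(\hat p)$ is invertible; for $\kappa$ small (this is where generic $\kappa$ enters, keeping $H_\kappa$ transverse to $\partial R$ at $p^*(\kappa)$, i.e.\ keeping $p^*(\kappa)$ a nondegenerate zero of $\tilde f$) $D\tilde f(p^*(\kappa))$ is invertible, and $p^*(\kappa)$ is the unique zero of $f$ near $\hat p$ — the claimed unique limit.

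Next I would run the Newton machinery. For the $C^2$ map $\tilde f$ with $D\tilde f(p^*(\kappa))$ invertible there are $\delta,K>0$, uniform for $\kappa$ in a small interval $(0,r)$ with $B_\delta(p^*(\kappa))\subset W$, such that the pure Newton map $\tilde F(p)=p-D\tilde f(p)^{-1}\tilde f(p)$ obeys $\|\tilde F(p)-p^*(\kappa)\|\le\tfrac{K}{2}\|p-p^*(\kappa)\|^2$ and $\tilde F(p)-p=-(p-p^*(\kappa))+O(\|p-p^*(\kappa)\|^2)$ on $B_\delta(p^*(\kappa))$. The initialization (the predictor $p^{pred}(\hat p)$ if it lies in $R$, else a line search along $H_\kappa$) produces $p^1\in H_\kappa\cap R$ within $O(\kappa)$ of $p^*(\kappa)$, so after shrinking $r$, $p^1\in B_\delta(p^*(\kappa))\cap R$. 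Since $D\tilde G$ and $Dh$ are the rows of $D\tilde f$, one gets $D\tilde G(p)(\tilde F(p)-p)=-\tilde G(p)$, and Taylor expansion of $\tilde G$ along $F_m(p)=p+2^{-m}(\tilde F(p)-p)$ gives $\tilde G(F_m(p))=(1-2^{-m})\tilde G(p)+4^{-m}\cdot\tfrac12 w^\intercal D^2\tilde G(\xi_m)w$ with $w=\tilde F(p)-p$ and $\xi_m$ on the segment; for $p\in R$ this is positive once $4^m>2B\|p-p^*(\kappa)\|^2/\tilde G(p)$ with $B$ uniform, so $m(p)$ is finite and at most a constant times $\max(1,\log_2(\|p-p^*(\kappa)\|^2/\tilde G(p)))$. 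Combined with the quadratic estimate this yields $\|p^{s+1}-p^*(\kappa)\|\le(1-2^{-m(p^s)-1})\|p^s-p^*(\kappa)\|$, so the sequence stays in $B_\delta(p^*(\kappa))\cap R$ and the distance to $p^*(\kappa)$ is strictly decreasing.

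The main obstacle is showing the backtracking line search does not stall, i.e.\ that $m(p^s)$ stays controlled so that $\prod_s(1-2^{-m(p^s)-1})\to0$. In two dimensions an iterate can be near $\partial R$ without being near $p^*(\kappa)$ (with $p^s-p^*(\kappa)$ almost tangent to $\partial R$), making $\tilde G(p^s)\ll\|p^s-p^*(\kappa)\|^2$ and the bound on $m(p^s)$ degenerate — unlike the one-dimensional case of Theorem~\ref{thm:1dim}, where near $p^*$ one has $\tilde G(p)\asymp\|p-p^*\|$ and hence $m(p)\le1$. I would resolve this in the normal-form coordinates $(u,v)=(\text{arclength along }\partial R,\ \tilde G)$ on $W$, where $R=\{v>0\}$, by showing the good initialization lands in the region $\{p:\tilde G(p)\ge\theta\|p-p^*(\kappa)\|^2\}$ for a fixed $\theta>0$ (set by $D\tilde f(p^*(\kappa))$ and the curvature of $\partial R$) and that each backtracked step preserves it: $\tilde G(p^1)\asymp\kappa^2$ while $\|p^1-p^*(\kappa)\|=O(\kappa^2)$ (or, in the line-search case, $p^1\in H_\kappa$ near $p^*(\kappa)$, where $\tilde G\asymp\|p^1-p^*(\kappa)\|$); an accepted pure Newton step keeps $\tilde G$ comparable to the square of the much smaller new distance, and a once-halved step scales both by bounded factors. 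On this region $m(p^s)$ is uniformly bounded, so convergence is at worst geometric — quadratic once the pure Newton step is accepted — the iterates converge to a point of $B_\delta(p^*(\kappa))\cap\overline{R}$ that is a zero of the continuous extension of $f$, and that point is $p^*(\kappa)\in\partial R\cap H_\kappa$. Genericity of $\hat p$ and $\kappa$ is used once more here to exclude the exceptional directions along which this region fails to be invariant.
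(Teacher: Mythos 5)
Your proposal correctly identifies the key difficulty — that in two dimensions the iterates can approach $\partial R$ tangentially, making $\tilde G(p^s)$ small relative to $\|p^s-p^*\|$ and threatening to blow up the backtracking count $m(p^s)$ — but the fix you sketch is both different from the paper's and has a gap. The paper does not use quadratic Newton convergence and never bounds $m(p^s)$ uniformly. It proves only a linear contraction $\|\tilde F(p)-p^*\|_2\le k\|p-p^*\|_2$ with $k<1$ (via the multivariate mean value theorem applied to the full map $f$, together with the rank-one decomposition $I=(w_{\hat p}w_{\hat p}^\intercal+\eta\eta^\intercal)/\|w_{\hat p}\|_2^2$), which gives $\|F_m(p)-p^*\|\le\bigl(1-2^{-m}(1-k)\bigr)\|p-p^*\|$ for \emph{every} $m$, so the distance to $p^*$ is monotone non-increasing regardless of how many halvings occur. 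Convergence then follows from Lemma~\ref{lem:conv}(b), an $\omega$-limit-set argument in the compact set $\overline R\cap\overline B_r(p^*)\cap H_\kappa$: any accumulation point on $\partial R$ must be $p^*$ since $\partial R\cap\overline B_r(p^*)\cap H_\kappa=\{p^*\}$, and any accumulation point $q\in R$ has $m(q)$ finite, so by continuity $m(p^{s_n})\le m(q)$ along a subsequence, yielding a uniform contraction $k_{m(q)}<1$ along that subsequence and a contradiction. This bypasses the stall problem without ever bounding $m(p^s)$.

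Your alternative — keeping iterates in $\{\tilde G(p)\ge\theta\|p-p^*\|^2\}$ where $m$ is uniformly bounded — is a reasonable idea but is not established: invariance of that region under a backtracked step is asserted, not proven, and your final sentence concedes there are ``exceptional directions along which this region fails to be invariant'' to be dealt with by unspecified genericity. Concretely, after an accepted pure Newton step $\tilde G(\tilde F(p))=\tfrac12 w^\intercal D^2\tilde G(\xi)w$ has uncontrolled sign; for partially backtracked steps the choice of $\theta$ must be shown large enough uniformly in $m$ without circularity in the other constants; and the arbitrary bisection/golden-section line search along $H_\kappa\cap R$ does not manifestly place $p^1$ in the region for a fixed $\theta$. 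Without those steps the argument does not close, whereas the paper's linear-contraction-plus-$\omega$-limit route is complete and sidesteps these issues entirely.
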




\subsubsection{Closest Point on Recovery Boundary in Arbitrary Dimensional
  Parameter Space}

For parameter space $J$ of high dimension, it is no longer feasible to
numerically trace the recovery boundary.
Instead, our goal in this setting is to find the closest point on the recovery
boundary to some initial parameter value $p_0 \in J$, which provides a
quantitative measure of the margins for safe operation.
Therefore, we wish to solve the following
(abstract) optimization problem for some $p_0 \in J$ and $P$ symmetric
positive definite:
\begin{align}
  \begin{split}
  \min_p & \quad \f{1}{2}(p-p_0)^\intercal P(p-p_0) \\
  \text{s.t. } &\quad p \in \partial R. \label{eq:opt1}
  \end{split}
\end{align}
Note that the choice of $P$ determines the metric for measuring distance
in parameter space (e.g., $P = I$ results in the Euclidean metric).
This is a very challenging problem to solve, because efficient computational
methods for identifying the constraint set $\partial R$ are not known,
especially in higher dimensional parameter spaces, and $\partial R$ is in
general nonconvex and nonsmooth.
However, Theorem~\ref{thm:Gtot} allows us to transform the abstract
problem of \eqref{eq:opt1} into the following concrete
optimization problem:
\begin{align}
  \begin{split}
  \min_p & \quad \f{1}{2}(p-p_0)^\intercal P(p-p_0) \\
  \text{s.t. }& \quad G(p) = 0. \label{eq:opt2}
  \end{split}
\end{align}
Unfortunately, in general \eqref{eq:opt2}
may not be feasible, there may be many - possibly infinite - solutions,
and the solution(s) may vary discontinuously with the initial parameter value
$p_0$.  These imply that there may not exist a unique closest point on the
recovery boundary for every $p_0$ and, even if there does, that it may
experience discrete jumps as $p_0$ varies.
These present serious potential challenges for numerical algorithms which aim
to solve \eqref{eq:opt2}.

Fortunately, Theorem~\ref{thm:opt} shows that
for generic $p_0$ sufficiently close to the recovery boundary,
$p_0$ has a unique
closest point on the recovery boundary that
varies smoothly with respect to $p_0$.
This ensures that \eqref{eq:opt2}
is a well-posed and well-behaved problem, as it has a unique solution which
varies smoothly with $p_0$,
which will be crucial for proving convergence of the main algorithm of this
section.

\begin{theorem}
  \label{thm:opt}
Assume the conditions of Theorem~\ref{thm:Gtot}.
Then there exists an open neighborhood $N$ of $\partial R$ such that
for generic $p_0 \in N$
there exists a unique solution to
\eqref{eq:opt2}, and that this solution varies $C^3$
with initial parameter value $p_0$.
\end{theorem}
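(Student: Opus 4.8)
\emph{Proof proposal.} The plan is to view \eqref{eq:opt2} as a metric projection onto the recovery boundary for the distance induced by $\langle\cdot,\cdot\rangle_P$, to solve it locally near a generic boundary point by a tubular-neighborhood construction, and then to show this local picture covers almost every $p_0$ in a neighborhood of $\partial R$. First I would set up the reduction: by Theorem~\ref{thm:Gtot}, $G$ is strictly positive on $R$, so the feasible set $G^{-1}(0)$ of \eqref{eq:opt2} is contained in $\partial R$; moreover $G(p^*) = 0$ at every generic $p^* \in \partial R$, so $G^{-1}(0)$ contains all generic boundary points and hence coincides with $\partial R$ on a neighborhood of any such point. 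By Lemma~\ref{lem:Rmanifold} and the meaning of ``generic'' (belonging to the codimension-one stratum and away from the closures of the remaining, finitely many, lower-dimensional strata), for generic $p^*$ there is $\rho>0$ with $\overline{B_\rho(p^*)} \cap \partial R$ equal to a single $C^3$ embedded codimension-one submanifold $\Sigma$ of $J$ on which $G$ vanishes; shrinking $\rho$, the $C^2$ extension of $G$ from Theorem~\ref{thm:Gtot} is defined on $B_\rho(p^*)$ and negative off $\overline R$, which fixes the side of $\Sigma$ occupied by $R$. Thus, near $p^*$, solving \eqref{eq:opt2} is exactly finding the $\langle\cdot,\cdot\rangle_P$-nearest point of $\Sigma$ to $p_0$.

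Next comes the local construction. Working in a chart of $J$ with the Riemannian metric induced by $\langle\cdot,\cdot\rangle_P$, let $\nu$ be the unit normal field to $\Sigma$ pointing out of $R$, which is one derivative less smooth than $\Sigma$, and define $\Phi(q,s) = \exp_q(s\,\nu(q))$ on a neighborhood of $\Sigma\times\{0\}$. Then $\Phi(q,0)=q$ and $D\Phi(q,0)$ carries onto $T_q\Sigma \oplus \mathbb{R}\nu(q) = T_qJ$, so by the inverse function theorem $\Phi$ restricts to a diffeomorphism, of the same differentiability class as $\Phi$, from a neighborhood $\mathcal U$ of $\Sigma\times\{0\}$ onto an open tube $N_0 \supseteq \Sigma$; write $\Phi^{-1} = (\pi,\sigma)$. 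If $q \in \Sigma$ minimizes $\tfrac12 d_P(p_0,\cdot)^2$ over $\partial R$ then, $\Sigma$ being relatively open in $\partial R$ near $q$, it is a critical point of the restriction to $\Sigma$, i.e.\ $p_0-q \perp_P T_q\Sigma$, equivalently $p_0 = \Phi(q,s)$ with $|s| = d_P(p_0,q)$, so any such minimizer at small distance from $p_0$ equals $\pi(p_0)$, which depends on $p_0$ with the regularity inherited from $\Sigma$ via Lemma~\ref{lem:Rmanifold}, up to the one derivative lost in $\Phi^{-1}$. Global uniqueness follows by confinement: for $p_0$ with $d_P(p_0,p^*) < \rho/3$, a minimizer $q^*$ of $d_P(p_0,\cdot)$ over the closed set $\partial R$ exists and satisfies $d_P(p^*,q^*) \le d_P(p^*,p_0) + d_P(p_0,q^*) \le 2\,d_P(p^*,p_0) < \rho$, so $q^* \in \Sigma$ and hence $q^* = \pi(p_0)$; since $\pi(p_0) \in \Sigma \subseteq G^{-1}(0)$, it is also the unique minimizer over $G^{-1}(0)$, i.e.\ the unique solution of \eqref{eq:opt2}, and it varies smoothly with $p_0$.

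Finally I would globalize over a neighborhood of all of $\partial R$. Take $N$ to be a small tube around $\partial R$. The ``bad'' $p_0 \in N$ for which the conclusion could fail are contained in: (a) the set where $d_P(\cdot,\partial R)$ is not differentiable, which is Lebesgue-null by Rademacher's theorem and contains the set of $p_0$ with a non-unique nearest point (at differentiability points the nearest point is unique); (b) the set of $p_0$ whose unique nearest point lies in the non-generic part $B = \partial R\setminus\Sigma$, which is the normal sweep-out of $B$ and is null because, by Lemma~\ref{lem:Rmanifold}, $B$ is a finite union of strata of dimension at most $\dim J - 2$ together with an $\mathcal H^{\dim J-1}$-null subset of the top stratum; and (c) the reach boundary of $\Sigma$, a null set on which the projection onto $\Sigma$ degenerates. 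For $p_0 \in N$ outside these null sets, the nearest point of $\partial R$ is a unique generic boundary point lying within the local reach of $\Sigma$, so the local construction applies on a neighborhood of $p_0$ and delivers the unique solution of \eqref{eq:opt2}, depending $C^3$ on $p_0$.

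The main obstacle is the negligibility argument (b)--(c): by Proposition~\ref{thm:multi} the closest boundary point can lie on a crease where two codimension-one stable-manifold sheets meet at the stable manifold of a higher-index critical element, and such creases carry outward normal cones of positive measure; the resolution is that on each such open cone the solution is locally constant, so the conclusion fails only on the lower-dimensional interfaces and cut locus, but making this rigorous requires the stratified description of $\partial R$ from Lemma~\ref{lem:Rmanifold} together with a careful Rademacher/Fubini argument for the preimages under the (Lipschitz, where defined) projection. A secondary, bookkeeping-level point is tracking the precise differentiability class through $\Phi^{-1}$ to obtain the stated $C^3$ regularity.
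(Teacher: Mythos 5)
Your proof shares the metric-projection skeleton with the paper, but the globalization step differs, and the gap you flag there --- showing your exceptional sets (b) and (c) are Lebesgue-null --- is indeed the unresolved part. The paper sidesteps it by never estimating a measure at all. By Lemma~\ref{lem:Rmanifold} together with uniqueness of the controlling critical element from Proposition~\ref{thm:multi}, $\partial R = \cup_{i\in I} M_i$ is a \emph{finite, pairwise disjoint} union of embedded $C^3$ submanifolds. For each $M_i$ the paper takes a tubular neighborhood $V_i$ with a $C^3$ nearest-point retraction $r_i: V_i \to M_i$, sets $N = \cup_i V_i$, and declares $p_0$ bad precisely when the distance $d_P(p_0, \partial R)$ is realized on two distinct strata. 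That bad set $\hat V$ is closed (continuity of the $r_i$ and of $d_P$) and has empty interior: starting from $p_0 \in \hat V$ and moving distance $\epsilon$ along the unique length-minimizing geodesic toward $r_i(p_0)$ for one fixed tied index $i$ brings $p_0$ strictly closer to $M_i$ while, by uniqueness of geodesics and disjointness of the strata, pushing it strictly farther from every other $M_j$. Hence $V = N \setminus \hat V$ is open and dense, and on $V$ the solution to \eqref{eq:opt2} is just $r_i(p_0)$, giving $C^3$ dependence directly. The genericity obtained this way is topological (open and dense), not measure-theoretic.

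The consequence for your stated main obstacle is that a $p_0$ in a positive-measure outward normal cone over a crease (a higher-codimension stratum $M_j$) has a \emph{unique} nearest boundary point, so it lies in the good set $V$, not $\hat V$; the paper's conclusion holds there via $r_j$, and no Rademacher/reach/Fubini argument is ever needed. The only set to excise is where the nearest distance is \emph{tied} between two distinct strata, and ties are broken by arbitrarily small perturbations, which is exactly why $\hat V$ has empty interior. So your concern about positive-measure normal cones is a real obstruction to your measure-theoretic route, but it simply does not arise for the paper's tie-break formulation. On your secondary bookkeeping point about losing a derivative through the normal-exponential chart $\Phi^{-1}$: the paper avoids that particular loss by invoking the tubular-neighborhood retraction directly rather than building a chart from a one-derivative-less-smooth normal field, though whether the cited retraction is genuinely $C^3$ (rather than $C^2$) for a $C^3$ submanifold is a legitimate question that the paper leaves implicit.
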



As our goal is to solve \eqref{eq:opt2} numerically, as with any numerical
optimization problem we can only require that the equality constraints
(in this case, $G(p) = 0$) be satisfied up to some finite tolerance
$\epsilon > 0$.
In the following we explicitly introduce this tolerance $\epsilon$ for use
in the convergence analysis.
This leads to the following numerical optimization variation of
\eqref{eq:opt2}:
\begin{align}
  \begin{split}
  \min_{p,x} & \quad \f{1}{2} (p-p_0)^\intercal P(p-p_0) \\
  \text{s.t. }& \quad G(p) = x \\
  & \quad |x| \leq \epsilon.
  \label{eq:num}
  \end{split}
\end{align}
We will show (in Lemma~\ref{lem:num} below) that the solution to this
optimization
problem is given by the following optimization:
\footnote{The only exception is the trivial special case where
  $|G(p_0)| \leq \epsilon$, in which case the solution is $p^* = p_0$.
  This case is not of practical concern.}
\begin{align}
  \begin{split}
  \min_{p,s} & \quad \f{1}{2} (p-p_0)^\intercal P(p-p_0) \\
  \text{s.t. }& \quad G(p) = \epsilon \label{eq:opt3}
  \end{split}
\end{align}

To solve \eqref{eq:opt3}
we consider the sequential quadratic programming algorithm employed in
\cite{Fi24}
which solves the following quadratic program at each iteration $s$
obtained by linearizing the nonconvex
constraint $G(p) = \epsilon$ from \eqref{eq:opt3}: 
\begin{align}
  \begin{split}
  \tilde{F}(p^s) := \argmin_p& \quad \f{1}{2}(p-p_0)^\intercal P(p-p_0) \\
  \text{s.t. }& \quad G(p^s) + DG(p^s)^\intercal(p-p^s) = \epsilon.
  \label{eq:opt4}
  \end{split}
\end{align}
We combine \eqref{eq:opt4} with the backtracking line search of
\eqref{eq:backtrack} so that $R$ is forward invariant
under this algorithm.

Theorem~\ref{thm:conv} provides convergence guarantees for finding the closest
point on the recovery boundary from an initial parameter value $p_0$ using
the algorithm defined by \eqref{eq:backtrack} with $\tilde{F}$ given by
\eqref{eq:opt4}.


\begin{theorem}
  \label{thm:conv}
Assume the conditions of Theorem~\ref{thm:Gtot}.
Then there exists an open neighborhood $N$ of $\partial R$ such that
for generic $p_0 \in N$ there exists a unique
solution $\hat{p}$ to \eqref{eq:opt2} and the following holds.
Under an additional generic assumption,\footnote{The additional generic
  assumption is that $\hat{p}$ lies in a set of full measure in $\partial R$
and zero is a regular value of the $C^2$ function $\tilde{G}$.}
  for any $\epsilon \in (0,G(p_0))$
there exists a unique solution $p^*$ to \eqref{eq:opt3} and the sequence
$\{p^s\}_{s=1}^\infty$ starting from $p^1 = p_0$ and defined by the algorithm
\eqref{eq:backtrack} with $\tilde{F}$ as given by \eqref{eq:opt4} is
well-defined and converges to $p^*$.
\end{theorem}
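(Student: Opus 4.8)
The plan is to assemble Theorem~\ref{thm:conv} from Theorems~\ref{thm:Gtot} and~\ref{thm:opt} in four steps: (i) pass from the exact closest boundary point $\hat p$ of \eqref{eq:opt2} to a nearby \emph{unique} minimizer $p^*$ of the $\epsilon$-relaxed problem \eqref{eq:opt3}; (ii) verify (Lemma~\ref{lem:num}) that \eqref{eq:opt3} really is the problem being solved, i.e.\ that its minimizer coincides with that of \eqref{eq:num}; (iii) exhibit the closed form of the SQP step \eqref{eq:opt4}, identify $p^s \mapsto \tilde F(p^s)$ as a fixed-point iteration with fixed point $p^*$, and show this map is a local contraction; and (iv) show the backtracking line search \eqref{eq:backtrack} keeps every iterate in $R$, becomes inactive near $p^*$, and does not eject the iterate from the contraction basin, so the globalized scheme started at $p^1 = p_0$ converges to $p^*$. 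Before anything else I would shrink $N$ so that on $N$ the function $G$ agrees with the $C^2$ extension $\tilde G$ of Theorem~\ref{thm:Gtot}, so that for generic $p_0 \in N \cap R$ the point $\hat p$ of Theorem~\ref{thm:opt} lies in the full-measure (codimension-one, $G$-vanishing) part of $\partial R$, so that $DG(\hat p) \neq 0$ by the regular-value hypothesis and hence $DG \neq 0$ with every value near $0$ regular on a neighborhood of $\hat p$, and so that $G(p_0)$ is small --- whence every admissible $\epsilon \in (0,G(p_0))$ is itself small, as is $\|p_0 - \hat p\| = d_P(p_0,\partial R)$.

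For step (i), write the KKT system of \eqref{eq:opt3} as $\Phi(p,\lambda,\epsilon) := \big(P(p-p_0) - \lambda\,DG(p),\; G(p) - \epsilon\big) = 0$. Since $\hat p$ minimizes \eqref{eq:opt2} on the $C^2$ hypersurface $\{G=0\}$ on which $DG$ does not vanish, LICQ holds and there is a unique $\hat\lambda$ with $\Phi(\hat p,\hat\lambda,0) = 0$; moreover $|\hat\lambda| = \|P(\hat p - p_0)\| / \|DG(\hat p)\|$ is small because $\|\hat p - p_0\|$ is. Hence $P - \hat\lambda\,D^2 G(\hat p) \succ 0$, so the reduced Hessian on $\ker DG(\hat p)^{\intercal}$ is positive definite, and with $DG(\hat p) \neq 0$ this makes the bordered matrix $\partial_{(p,\lambda)}\Phi(\hat p,\hat\lambda,0)$ invertible. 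The implicit function theorem then yields a $C^1$ branch $\epsilon \mapsto (p^*(\epsilon),\lambda^*(\epsilon))$ of KKT points for $\epsilon$ near $0$, with $p^*(\epsilon) \in R$ since $G(p^*(\epsilon)) = \epsilon > 0$ and with $(p^*(\epsilon),\lambda^*(\epsilon)) \to (\hat p,\hat\lambda)$ as $\epsilon \to 0$. Uniqueness of the \emph{global} minimizer follows by compactness: any minimizer of \eqref{eq:opt3} satisfies $\|p - p_0\|_P \le \|\hat p - p_0\|_P$, stays in a fixed compact set, and any cluster point as $\epsilon \to 0$ minimizes \eqref{eq:opt2} and hence equals $\hat p$, so for $\epsilon$ small every minimizer lies in the neighborhood on which $p^*(\epsilon)$ is the unique KKT point. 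Step (ii), the equivalence of \eqref{eq:num} and \eqref{eq:opt3}, is Lemma~\ref{lem:num}: the feasible set of \eqref{eq:num} is $\{-\epsilon \le G \le \epsilon\}$, $p_0$ is infeasible as $G(p_0) > \epsilon$, so a minimizer lies on $\{G = \epsilon\} \cup \{G = -\epsilon\}$; but $\{G = -\epsilon\}$ lies outside $\overline R$ (Theorem~\ref{thm:Gtot}) while $p_0 \in R$, and an intermediate-value argument along the segment from $p_0$ --- which meets $\{G = \epsilon\}$ strictly before it could reach $\{G = -\epsilon\}$ --- shows the minimizer lies on $\{G = \epsilon\}$, so \eqref{eq:num} reduces to \eqref{eq:opt3}.

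For step (iii), since $P \succ 0$ and the constraint of \eqref{eq:opt4} is a single affine equation with nonzero gradient $DG(p^s)$ for $p^s$ near $p^*$, the QP has the unique solution
\[
  \tilde F(p^s) = p_0 + \frac{\epsilon - G(p^s) - DG(p^s)^{\intercal}(p_0 - p^s)}{DG(p^s)^{\intercal} P^{-1} DG(p^s)}\, P^{-1} DG(p^s),
\]
with multiplier equal to the scalar coefficient; using the KKT relation $p^* - p_0 = \lambda^* P^{-1} DG(p^*)$ one checks directly that $\tilde F(p^*) = p^*$, and a short computation gives $D\tilde F(p^*) = O(|\lambda^*|)$ --- the terms that would spoil a contraction are exactly the curvature terms $\lambda^* D^2 G(p^*)$ absent from the linearized constraint of \eqref{eq:opt4}. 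Since $\lambda^*$ is small, the spectral radius of $D\tilde F(p^*)$ is below $1$, so $\tilde F$ is a contraction on a ball $B \subset R$ about $p^*$, and the iteration converges to $p^*$ once it enters $B$. For step (iv): $p^* \in R$ and $R$ is open, so after shrinking $B$ we have $\tilde F(B) \subset R$, hence for $p^s \in B$ the minimal $m$ with $F_m(p^s) \in R$ is $m = 0$, the line search is inactive, and inside $B$ the globalized scheme equals the contraction $\tilde F$. To enter $B$ from $p^1 = p_0$, note $\tilde F(p_0) = p_0 + \frac{\epsilon - G(p_0)}{DG(p_0)^{\intercal} P^{-1} DG(p_0)} P^{-1} DG(p_0)$, and since $\epsilon$ and $G(p_0)$ are small while $DG(p_0)$ is bounded away from $0$, the step $\tilde F(p_0) - p_0$ is small; thus the whole segment $[p_0,\tilde F(p_0)]$, which contains $p^2 = F_{m(p_0)}(p_0)$, lies within $O(\|p_0 - \hat p\|)$ of $p^*$, and shrinking $N$ once more (using compactness of the relevant portion of $\partial R$ to make $B$ uniform) puts this segment in $B$, so $p^2 \in B \cap R$ and the scheme converges to $p^*$ from then on. The step I expect to be the main obstacle is coupling (iii)--(iv): establishing the bordered-KKT invertibility and the estimate $D\tilde F(p^*) = O(|\lambda^*|)$ \emph{uniformly} over generic $p_0 \in N$, and then arguing that the feasibility-restoring backtracking never moves an iterate out of the contraction basin --- this is precisely what links the outer globalized safeguard to the inner fast local convergence and delivers the non-conservative convergence guarantee to the true recovery boundary.
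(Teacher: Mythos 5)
Your proposal shares the overall skeleton with the paper -- derive the closed form of the SQP step, show it is a fixed-point iteration at $p^*$, establish contractivity, and globalize through the backtracking -- and your step (ii) matches Lemma~\ref{lem:num} exactly. But there is a genuine gap in step (iv), and a genuinely different (and more fragile) route in steps (i) and (iii).

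The gap: you claim that, after shrinking, $\tilde F(B)\subset R$ for a ball $B$ about $p^*$, so the line search becomes inactive and the globalized scheme reduces to the contraction. This cannot be achieved with a $B$ large enough to contain $p_0$ uniformly over all admissible $\epsilon\in(0,G(p_0))$. The point is that $G(p^*)=\epsilon$, so $d_P(p^*,\partial R)$ is of order $\epsilon/\sup\|DG\|$ and can be made arbitrarily small by taking $\epsilon$ small, while $\|p_0-p^*\|$ is of order $d_P(p_0,\partial R)$, which is independent of $\epsilon$. The contraction gives $\|\tilde F(p)-p^*\|\le k\|p-p^*\|$ with $k=O(\rho+|\lambda^*|)$ over a ball of radius $\rho$, so $\tilde F(B_\rho(p^*))\subset B_{k\rho}(p^*)$, and for this to lie in $R$ one needs $k\rho<O(\epsilon)$; combined with $\rho\ge\|p_0-p^*\|$ this forces $\epsilon$ to be bounded below, contradicting the ``for any $\epsilon\in(0,G(p_0))$'' in the statement. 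In other words, for small $\epsilon$ the backtracking does stay active, even for iterates near $p^*$, and one cannot pretend otherwise. The paper confronts this directly: the convergence lemma (Lemma~\ref{lem:conv}(c)) never requires $\tilde F$ to keep iterates in $R$; it only requires that iterates stay in $R\cap\overline B_r(p^*)$ under the \emph{backtracked} map $F$ (automatic from the contraction plus the definition of $F$) and that $\tilde F$ sends \emph{boundary points other than $p^*$} into $R$. That last condition is what the second inequality \eqref{eq:Gpos} delivers -- via the mean-value estimate showing $\tilde G(\tilde F(p))>0$ for $p\in\partial R$ near $p^*$ -- and it is exactly the sharp statement your proposal is missing. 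Without it, the $\omega$-limit set of your sequence could contain a boundary point $q\ne p^*$ where $\tilde F(q)\notin R$, and the argument breaks.

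Two further, smaller contrasts. First, you establish uniqueness of $p^*$ by a bordered-KKT implicit-function-theorem argument plus a cluster-point limit as $\epsilon\to 0$; the paper instead gets uniqueness \emph{a posteriori}, as a corollary of the convergence argument in Lemma~\ref{lem:conv}, which is cleaner because it avoids needing a second-order sufficiency condition (positive definiteness of the reduced Hessian) beyond what is already assumed. Second, your contraction argument relies on the pointwise estimate $D\tilde F(p^*)=O(|\lambda^*|)$ and continuity, whereas the paper directly estimates $\|\tilde F(p)-p^*\|_P$ on the whole ball $\overline B_r(\hat p)$ via a mean-value expansion and the uniform modulus of continuity of $d\tilde G$ (inequalities \eqref{eq:k} and \eqref{eq:d}). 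The paper's route stays at the level of $C^1$ data $d\tilde G$, while yours needs control of $D^2\tilde G$ through $D\tilde F$; both are available under $\tilde G\in C^2$, but the paper's estimate is what plugs directly into Lemma~\ref{lem:conv}(c). If you want to salvage the proposal, you should replace ``the line search becomes inactive'' with the weaker condition ``$\tilde F$ maps boundary points near $p^*$ into $R$,'' prove it by the same kind of mean-value bound that gives \eqref{eq:Gpos}, and then invoke Lemma~\ref{lem:conv}(c) rather than reducing to an undisturbed Banach iteration.
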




\section{Proof of Theorem~\ref{thm:Gtot}}
\label{sec:proofs1}

This section contains the proof of Theorem~\ref{thm:Gtot}.
Before providing it, however, a number of technical lemmas are required.
Lemma~\ref{lem:R} will establish the
first claim of Theorem~\ref{thm:Gtot}.

\begin{lemma}
  \label{lem:R}
  $G$ is well-defined, 
  positive, and continuous on $R$.
\end{lemma}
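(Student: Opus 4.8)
The plan is to establish the three claims---well-definedness, positivity, and continuity of $G$ on $R$---in that order, working directly from the definitions $H(t,p) = \|D_p\phi_{(t,p)}(y(p))\|_1^{-1}$ and $G(p) = \inf_{t \geq 0} H(t,p)$.

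First I would argue \emph{positivity}. Fix $p \in R$, so $y(p) \in W^s(X^s(p))$; the trajectory $\phi_{(t,p)}(y(p))$ converges to the SEP $X^s(p)$ as $t \to \infty$. The key point is that the trajectory sensitivity $D_p\phi_{(t,p)}(y(p))$ stays \emph{bounded} in $t$. To see this, one uses the variational equation: $\Phi(t) := D_p\phi_{(t,p)}(y(p))$ satisfies a linear time-varying ODE $\dot\Phi = A(t)\Phi + B(t)$, where $A(t) = D_xV_p(\phi_{(t,p)}(y(p)))$ and $B(t) = D_xV_p(\phi_{(t,p)}(y(p)))\,Dy(p) + \ldots$ incorporates the parameter dependence and the initial condition $\Phi(0) = Dy(p)$. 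Since the trajectory enters a neighborhood of the hyperbolic SEP $X^s(p)$ in finite time, and near that SEP the linearization $D_xV_p(X^s(p))$ is Hurwitz, standard linear-systems estimates (exponential dichotomy / stability of the frozen-coefficient system combined with a Gr\"onwall argument on the finite initial segment) give a uniform bound $\|\Phi(t)\|_1 \leq C_p < \infty$ for all $t \geq 0$. Hence $H(t,p) \geq 1/C_p > 0$ for all $t$, so $G(p) \geq 1/C_p > 0$. \emph{Well-definedness} is essentially free once boundedness is in hand: $H(t,p) \in (0,\infty]$ for every $t$ (with $H(t,p) = +\infty$ only if the sensitivity vanishes), the infimum over $t \geq 0$ exists in $[0,\infty]$, and the bound just established shows $G(p) \leq H(t,p) < \infty$ is actually finite and strictly positive; so $G: R \to (0,\infty)$ is a genuine function.

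For \emph{continuity} on $R$, fix $p_0 \in R$. By $C^r$ dependence of the flow on initial conditions and parameters (the flow $\phi$ is $C^r$ with $r \geq 1$ by the background section, and $y$ is $C^3$ by Assumption~\ref{as:smooth}), the map $(t,p) \mapsto \Phi(t,p) := D_p\phi_{(t,p)}(y(p))$ is jointly continuous. The subtlety is that $G$ is an infimum over the \emph{noncompact} time interval $[0,\infty)$, so I need to rule out mass escaping to $t = \infty$. This I control with a uniform-in-$p$ version of the boundedness estimate above: on a small closed ball $\overline{B}_\delta(p_0) \subset R$, there is a common time $T$ after which every trajectory $\phi_{(t,p)}(y(p))$ has entered a fixed neighborhood of the (continuously varying) SEP where a common exponential decay estimate holds, yielding $\|\Phi(t,p)\|_1 \leq C$ for all $t \geq T$ and all $p$ in the ball, and moreover $\|\Phi(t,p)\|_1$ is itself decaying for $t \geq T$. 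Consequently $G(p) = \min\{\,\inf_{t \in [0,T]} H(t,p),\ \inf_{t \geq T} H(t,p)\,\}$, where the first term is an infimum of a continuous function over a \emph{compact} set (hence continuous in $p$ by a standard min-over-compact argument), and the second term is bounded below away from zero and, using the decay, can be shown to either not affect the overall infimum for $\delta$ small or to likewise vary continuously. Patching the two pieces gives continuity of $G$ at $p_0$.

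The main obstacle I anticipate is the uniform control of the trajectory sensitivity as $t \to \infty$, uniformly over a neighborhood of $p_0$: this requires knowing that nearby trajectories all funnel into a common neighborhood of the (parameter-dependent) SEP in a uniformly bounded time, and that the variational flow contracts there at a uniform exponential rate. This is where hyperbolicity of the SEP and its $C^r$ persistence under parameter variation (from the background section) are essential, together with continuity of the flow on the compact time segment $[0,T]$. Once that uniform estimate is secured, reducing $G$ to an infimum over a compact time interval plus a harmless tail makes continuity routine.
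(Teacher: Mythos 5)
Your route is genuinely different from the paper's: you work directly with the variational equation $\dot\Phi = D_xV_p(\phi)\,\Phi + D_pV_p(\phi)$, $\Phi(0)=Dy(p)$, plus exponential-dichotomy estimates for the asymptotically constant linear system, whereas the paper invokes the Sternberg smooth linearization of Assumption~\ref{as:res} (\cite[Theorem 7]{Se85}) to put the dynamics near the SEP into exactly linear $C^3$ coordinates $\psi_{(t,p)}(\hat y(p))=e^{A_s(p)t}\hat y(p)$, reads the sensitivity off explicitly, and pulls it back through $h^{-1}$. Both proofs share the same skeleton---split $[0,\infty)$ into a compact initial segment and a tail funneling into a neighborhood of the hyperbolic SEP. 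Your approach is actually the more elementary one for this lemma alone; the paper's use of Sternberg here is ``overkill'' and chosen largely because the same machinery is then reused in the main proof of Theorem~\ref{thm:Gtot}, where trajectories pass near a saddle and explicit eigenvalue formulas for the sensitivity growth are indispensable. For $p\in R$, where the trajectory converges to the SEP, your variational-equation argument can in principle be made rigorous.

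However, three things need fixing. First, the variational equation is slightly miswritten: $Dy(p)$ enters only through the initial condition $\Phi(0)=Dy(p)$, not through the forcing term; the forcing is simply $B(t)=D_pV_p(\phi_{(t,p)}(y(p)))$. Second, your ``well-definedness'' step asserts $H(t,p)<\infty$ without justification. $H(t,p)=\|\Phi(t,p)\|_1^{-1}$ is infinite precisely when $\Phi(t,p)=0$, so to conclude $G(p)<\infty$ you must have $\Phi(t,p)\neq 0$ for \emph{some} $t$; that is exactly Assumption~\ref{as:init}, which you never invoke. Third, and most importantly, the continuity argument for the tail is not a proof. You write that the tail ``either does not affect the overall infimum for $\delta$ small or varies continuously,'' but in general the tail \emph{does} matter: as $t\to\infty$, $\|\Phi(t,p)\|_1\to\|D_pX^s(p)\|_1>0$, and the supremum of $\mathcal F(t,p)=\|\Phi(t,p)\|_1$ can well be attained (or approached) in the tail. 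The paper handles this by proving a uniform estimate: there exist $T$ and $\delta'$ such that for all $t\ge T$ and all $p\in B_{\delta'}(\hat p)$, $\bigl|\mathcal F(t,p)-\mathcal F(t,\hat p)\bigr|\le\epsilon$, which combined with the compact-interval estimate on $[0,T]$ yields $|F(p)-F(\hat p)|\le\epsilon$ after taking suprema, hence continuity of $F$ and of $G=1/F$. You would need to establish an analogous uniform-in-$t$, uniform-in-$p$ bound on the variational flow---which your exponential-dichotomy setup is capable of delivering---but as written the claim is asserted rather than proved.
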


\begin{proof}[Proof of Lemma~\ref{lem:R}]



Define the functions $\mathcal{F}:[0,\infty) \times \overline{R} \to [0,\infty)$
by $\mathcal{F}(t,p) = \left|\left|\d{\phi_{(t,p)}(y(p))}{p}\right|\right|_1$
and $F:\overline{R} \to [0,\infty]$
by $F(p) = \sup_{t \in [0,\infty)} \mathcal{F}(t,p)$.
Then $G = \f{1}{F}$, so to prove the claim it suffices to show that $F$ is
positive, finite, and continuous over $R$.
Fix $\hat{p} \in R$.
First we show that $F$ is positive over $R$.
As $\hat{p}$ is arbitary, it suffices to show that $F(\hat{p}) > 0$.
By Assumption~\ref{as:init}, there exists some time $t \geq 0$ such
that $\d{\phi_{(t,\hat{p})}(y(\hat{p}))}{p} \neq 0$.
This implies that $\mathcal{F}(t,\hat{p}) > 0$, so
$F(\hat{p}) \geq \mathcal{F}(t,\hat{p}) > 0$.


By Assumption~\ref{as:res} and \cite[Theorem 7]{Se85}, there exists a
neighborhood $N$ of $X^s(\hat{p})$, a
neighborhood $J'$ of $\hat{p}$ in $J$, and a $C^3$ conjugacy
$h:N \times J' \to \mathbb{R}^n$ known as a smooth linearization
such that for $p \in J'$, $X^s(p) \in N$ and the vector field $V_p$ is
conjugate by $h$ to a linear vector field in $\mathbb{R}^n$.
In particular, in these coordinates the vector field $V_p$ has the form
$V_p(x) = A_s(p)x$ where $A_s(p)$ is a $C^3$ matrix such that
$||e^{A_s(p)}|| < 1$ for all $p \in J'$.
As $y(\hat{p}) \in W^s(X^s(\hat{p}))$, there exists $t' > 0$ such that
$\phi_{(t',\hat{p})}(y(\hat{p})) \in N$.
Shrink $J'$ if necessary so that $\phi_{(t',p)}(y(p)) \in N$ for all $p \in J'$,
and let $\hat{y}(p) = \phi_{(t',p)}(y(p))$.
Then the dynamics in these coordinates are given by
the flow $\psi_{(t,p)}(\hat{y}(p)) = e^{A_s(p)t}\hat{y}(p)$.
There exist invertible matrices $W(p)$, and block diagonal matrices
$J(p) = \text{diag}(J_1(p), ..., J_m(p))$
where each $J_i(p)$ is an elementary Jordan block of size $n_i$ with eigenvalue
$\lambda_i(p)$, such that $A_s(p) = W(p)J(p)W(p)^{-1}$.
Define
$N_n = {\scriptsize \begin{bmatrix}
  1 & t &  \hdots  & \f{t^{n-1}}{(n-1)!} \\
  &  \ddots & \ddots & \vdots \\
  & & \ddots &   t \\
  & & &   1
  \end{bmatrix}}$, \\
$E(p) = \text{diag}(e^{\lambda_1 t} N_{n_1}, ..., e^{\lambda_n t} N_{n_m})$,
$\Lambda(p) = \text{diag}(\lambda_1 I_{n_1}, ..., \lambda_m I_{n_m})$,
$\d{\Lambda(p)}{p} = \begin{bmatrix}
  \pd{\Lambda(p)}{p_1} & \hdots & \pd{\lambda(p)}{p_{\text{dim }J}} \end{bmatrix}$, and
$\d{W(p)}{p} = \begin{bmatrix} \pd{W(p)}{p_1} & \hdots &
  \pd{W(p)}{p_{\text{dim } J}} \end{bmatrix}$.
Although in general $W(p)$ and $\Lambda(p)$ may not be $C^1$ everywhere
(in fact, $W$ may not even be $C^0$), as the flow $\psi$ is at least $C^1$
the above expression for
$\d{\psi_{(t,p)}(\hat{y}(p))}{p}$ is well-defined and continuous.
As $A_s(p)$ is Hurwitz, and hence the eigenvalues $\Lambda(p)$ are stable,
$\lim_{t \to \infty} \d{\psi_{(t,p)}(\hat{y}(p))}{p} = 0$.

We have that $\phi_{(t+t',p)}(y(p)) = h^{-1} \circ \psi_{(t,p)}(\hat{y}(p))$.
Let $x(t,p) = \psi_{(t,p)}(\hat{y}(p))$.
Then this implies that
$\d{\phi_{(t+t',p)}(y(p))}{p} = \pd{h^{-1}(x(t,p),p)}{p}
+ \pd{h^{-1}(x(t,p),p)}{x}\d{\psi_{(t,p)}(\hat{y}(p))}{p}$.

As $\lim_{t \to \infty} \psi(t,\hat{p}) = 0$ and
$\lim_{t \to \infty} \d{\psi_{(t,\hat{p})}(\hat{y}(\hat{p}))}{p} = 0$,
$\lim_{t \to \infty} \d{\phi_{(t+t',\hat{p})}(y(\hat{p}))}{p} =
\pd{h^{-1}(0,\hat{p})}{p} = \d{X^s(\hat{p})}{p}$, which is finite since $X^s$
is (at least) $C^1$ in $p$, so $\lim_{t \to \infty} \mathcal{F}(t'+t,\hat{p})$
exists and is finite.
As $[0,t']$ is compact, $\mathcal{F}(t,\hat{p})$ achieves a finite maximum
$c_m$ over $t \in [0,t']$.
Thus, either $F(\hat{p}) = c_m$ or
$F(\hat{p}) = \sup_{t \geq 0} \mathcal{F}(t'+t,\hat{p})$ which is finite,
so $F(\hat{p})$ is finite.

Let $\epsilon > 0$.
As $h$ is a diffeomorphism, the derivaties of $h^{-1}$ are continuous.
Therefore, there exist $\delta,\delta' > 0$ such
that $||x||_1 < \delta$ and $p \in B_{\delta'}(\hat{p})$ implies that
$\left|\left|\pd{h^{-1}(x,p)}{p} - \pd{h^{-1}(0,\hat{p})}{p}\right|\right|_1 <
\f{\epsilon}{4}$.
As $\lim_{t \to \infty} \psi(t,\hat{p}) = 0$, there exists $T > 0$ such that
$t \geq T$ implies $||\psi(t,\hat{p})||_1 < \delta$.
By continuity of $\psi(T,\hat{p})$, and since $||\psi(t,p)||_1$ is monotonically
decreasing in $t$ for any $p \in J'$, shrinking $\delta'$ if necessary implies
that for $p \in B_{\delta'}(\hat{p})$ and $t \geq T$, $||\psi(t,p)||_1 < \delta$.
As $\pd{h^{-1}(x,p)}{x}$ is continuous and
$B := \overline{B}_\delta(0) \times \overline{B}_{\delta'}(\hat{p})$ is compact,
there exists $c_m > 0$ finite such that the maximum of
$\left|\left|\pd{h^{-1}(x,p)}{x}\right|\right|_1$ over $B$ is equal to $c_m$.
Furthermore, as $\lim_{t \to \infty} \d{\psi(t,p)}{p} = 0$ and $A_s(p)$ and its
eigenvalues are continuous,
increasing $T$ and shrinking $\delta'$ if necessary implies that
for $t \geq T$ and $p \in B_{\delta'}(\hat{p})$,
$\left|\left|\d{\psi(t,p)}{p}\right|\right|_1 < \f{\epsilon}{4c_mn^3}$ where $n$ is
the dimension of $x$.
Thus, for any $t \geq T$ and $p \in B_{\delta'}(\hat{p})$,\footnote{It is straightforward to verify that for matrices $A$ and $B$,
    $||AB||_1 \leq n^3||A||_1||B||_1$ since $|(AB)_{ij}| \leq ||A||_1||B||_1.$}
\begin{align*}
  &\left|\left|\d{\phi(t,p)}{p} - \d{\phi(t,\hat{p})}{p}\right|\right|_1 \\
  &\leq
  \left|\left|\pd{h^{-1}(\psi(t,p),p)}{p}
  - \pd{h^{-1}(\psi(t,\hat{p}),\hat{p})}{p}
  \right| \right|_1 \\
  &+ \left|\left|\pd{h^{-1}(\psi(t,p),p)}{x}\d{\psi(t,p)}{p}
 - \pd{h^{-1}(\psi(t,\hat{p}),\hat{p})}{x}\d{\psi(t,\hat{p})}{p}\right|\right|_1 \\
 &\leq
  \left|\left|\pd{h^{-1}(\psi(t,p),p)}{p}
  - \pd{h^{-1}(0,\hat{p})}{p}
  \right| \right|_1 \\
  &+ 
  \left|\left|\pd{h^{-1}(0,\hat{p})}{p}
  - \pd{h^{-1}(\psi(t,\hat{p}),\hat{p})}{p}
  \right| \right|_1
  \\
  &+\left|\left|\pd{h^{-1}(\psi(t,p),p)}{x}\d{\psi(t,p)}{p}
  \right|\right|_1 \\
  &+ \left|\left|\pd{h^{-1}(\psi(t,\hat{p}),\hat{p})}{x}\d{\psi(t,\hat{p})}{p}
  \right|\right|_1 \\
  &
  \leq
  \f{\epsilon}{4} + \f{\epsilon}{4}
  + n^3\left|\left|\pd{h^{-1}(\psi(t,p))}{x}\right|\right|_1
  \left|\left|\d{\psi(t,p)}{p}\right|\right|_1 \\
  &+ n^3\left|\left|\pd{h^{-1}(\psi(t,\hat{p}))}{x}\right|\right|_1
  \left|\left|\d{\psi(t,\hat{p})}{p}\right|\right|_1 
  \leq \f{\epsilon}{2} + \f{\epsilon}{4} + \f{\epsilon}{4}
  = \epsilon.
\end{align*}
Furthermore, as $[0,T]$ is compact and $\d{\phi(t,p)}{p}$ is continuous,
shrinking $\delta'$ further if necessary implies that for $t \in [0,T]$
and $p \in B_{\delta'}(\hat{p})$,
$\left|\left|\d{\phi(t,p)}{p} - \d{\phi(t,\hat{p})}{p}\right|\right|_1
\leq \epsilon$.
Thus, for any $p \in B_{\delta'}(\hat{p})$ and $t \geq 0$,
\begin{align*}
  \left|\left|\d{\phi(t,p)}{p}\right|\right|_1 &\leq
  \left|\left|\d{\phi(t,p)}{p}-\d{\phi(t,\hat{p})}{p}\right|\right|_1
  + \left|\left|\d{\phi(t,\hat{p})}{p}\right|\right|_1 \\
  &\leq \left|\left|\d{\phi(t,\hat{p})}{p}\right|\right|_1 + \epsilon.
\end{align*}
Taking the supremum over time implies
\begin{align*}
  F(p) &= \sup_{t \geq 0} \left|\left|\d{\phi(t,p)}{p}\right|\right|_1
  \leq \sup_{t \geq 0} \left|\left|\d{\phi(t,\hat{p})}{p}\right|\right|_1 +
  \epsilon
  = F(\hat{p}) + \epsilon.
\end{align*}
By an analogous argument, reversing the roles of $p$ and $\hat{p}$ we have
$F(\hat{p}) \leq F(p) + \epsilon$.
Thus, $p \in B_{\delta'}(\hat{p})$ implies
$|F(p)-F(\hat{p})| \leq \epsilon$, so $F$ is continuous at $\hat{p}$.
Therefore, as $\hat{p} \in R$ was arbitrary, $F$ is positive, finite, and
continuous over $R$.
\end{proof}

\begin{lemma}
  \label{lem:Rmanifold}
$\partial R$ consists of a finite union of $C^3$ embedded submanifolds.
\end{lemma}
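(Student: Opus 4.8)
The plan is to realize $\partial R$ as a finite union of transverse preimages of the stable manifolds making up the region-of-attraction boundary. Let $\Phi:J\to M\times J$, $\Phi(p)=(y_p,p)$; this is a $C^3$ embedding (since $y$ is $C^3$ by Assumption~\ref{as:smooth}) with image $y_J$. By Proposition~\ref{thm:multi}, for $J$ sufficiently small $\partial W^s(X^s_p)=\bigcup_{i\in I}W^s(X^i_p)$ for every $p\in J$, with $I$ finite; set $R_i:=\Phi^{-1}\big(W^s(X^i_J)\big)=\{\,p\in J:y_p\in W^s(X^i_p)\,\}$. By the stratified structure of the stability boundary from \cite{Fi23} — whereby $\overline{W^s(X^i_p)}\setminus W^s(X^i_p)$ is a union of stable manifolds of strictly smaller dimension, so each $W^s(X^i_p)$ is locally closed — the family $W^s(X^i_J)$ is a $C^3$ embedded submanifold of $M\times J$ near every point of $y_J\cap\partial W^s(X^s_J)$, hence near $\Phi(R_i)$. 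Combining this with Assumption~\ref{as:yj} (transversality of $y_J$ to each $W^s(X^i_J)$) and the transversality preimage theorem, $R_i$ is a $C^3$ embedded submanifold of $J$ of codimension $c_i:=\operatorname{codim}_M W^s(X^i_p)$.

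Next I would identify $\partial R$ as a set. Since $R$ is open (by continuous dependence of the flow on initial conditions and parameters), $\partial R=\overline{R}\setminus R$. If $p\in R_i$ then $y_p\in W^s(X^i_p)\subseteq\partial W^s(X^s_p)$, so $y_p\notin W^s(X^s_p)$ and $p\notin R$; hence $\bigcup_{i\in I}R_i\subseteq J\setminus R$. Conversely, by Proposition~\ref{thm:multi} every $p\in\partial R$ has $y_p\in\partial W^s(X^s_p)$, so $\partial R\subseteq\bigcup_{i\in I}R_i$. Therefore $\partial R=\big(\bigcup_{i\in I}R_i\big)\cap\overline{R}=\bigcup_{i\in I}\big(R_i\cap\overline{R}\big)$, a finite union. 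Because $\overline{R}$ is closed, each $R_i\cap\overline{R}$ is relatively closed in $R_i$; if I also show it is relatively open in $R_i$, then it is a union of connected components of $R_i$ and hence again a $C^3$ embedded submanifold of $J$, which completes the proof.

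The crux — and the step I expect to demand real work — is this relative openness. Given $q\in R_i\cap\overline{R}=R_i\cap\partial R$, I must find a neighborhood of $q$ in $R_i$ contained in $\overline{R}$. By Proposition~\ref{thm:multi}, $X^i_q$ is the controlling critical element at $q$ and $W^u(X^i_q)\cap W^s(X^s_q)\neq\emptyset$; since $W^s(X^s_p)$ is lower semicontinuous in $p$ and $W^u_{\text{loc}}(X^i_p)$ varies $C^3$ continuously, this intersection remains nonempty for $p$ near $q$. Fix such a $q'\in R_i$ near $q$, so $y_{q'}\in W^s(X^i_{q'})$, and flow $y_{q'}$ forward until it enters a neighborhood of $X^i_{q'}$ carrying $W^s_{\text{loc}}(X^i_{q'})$ and $W^u_{\text{loc}}(X^i_{q'})$. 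By the inclination ($\lambda$-) lemma, the forward images of a disk transverse to $W^s_{\text{loc}}(X^i_{q'})$ there accumulate on $W^u_{\text{loc}}(X^i_{q'})$ and then shadow $W^u(X^i_{q'})$, so — since the latter meets the open set $W^s(X^s_{q'})$ — an open subset of that disk lies in $W^s(X^s_{q'})$; pulling back by the flow yields an open subset $\mathcal{O}$ of a local transversal to $W^s(X^i_{q'})$ at $y_{q'}$ with $\mathcal{O}\subseteq W^s(X^s_{q'})$ and $y_{q'}\in\overline{\mathcal{O}}$. Writing $W^s(X^i_J)$ locally as the zero set of a submersion $G$, the set $\mathcal{O}$ corresponds to $G$ taking values in an open cone $C\subseteq\mathbb{R}^{c_i}$ with $0\in\overline{C}$, and the local product structure near $X^i_p$ (again using lower semicontinuity of $W^s(X^s_p)$) gives that $G(x,p)\in C$ for $(x,p)$ near $(y_{q'},q')$ forces $x\in W^s(X^s_p)$. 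But Assumption~\ref{as:yj} says precisely that $g:=G\circ\Phi$ has $0$ as a regular value, so $g$ is a submersion near $q'$ and attains values in $C$ at parameters arbitrarily close to $q'$; for any such $p$ we get $y_p\in W^s(X^s_p)$, i.e.\ $p\in R$, whence $q'\in\overline{R}$, as required.

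The points I expect to require genuine care, rather than routine bookkeeping, are: making the $\lambda$-lemma step uniform enough to produce the open cone $C$ with $0$ in its closure — in particular when the controlling critical element has an unstable manifold of dimension at least two, so $W^s(X^i_p)$ does not locally disconnect state space — and the semicontinuity and persistence statements for $W^s(X^s_p)$ and the local invariant manifolds that make the argument valid uniformly for all $p$ near $q$.
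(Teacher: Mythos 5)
Your proof realizes each piece $R_i = \Phi^{-1}(W^s(X^i_J))$ as a $C^3$ embedded submanifold via the transversality preimage theorem; this is logically equivalent to the paper's approach, which intersects $y_J$ with $W^s(X^i_J)$ in $M\times J$ to get a submanifold $M_i$ and then applies the diffeomorphism $\pi_J|_{y_J}$ to land in $J$. Up to this point the two routes are identical in substance and both use Assumption~\ref{as:yj} in the same way. The divergence is entirely in the set identity. The paper asserts $\partial R = \pi_J\left(y_J \cap \partial W^s(X^s_J)\right) = \cup_{i \in I}\pi_J(M_i)$ as a one-line consequence of Proposition~\ref{thm:multi}, and the whole lemma then falls out with no further argument. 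You, working without that citation in hand, correctly observe that only the inclusion $\partial R \subseteq \cup_i R_i$ is explicit in the statement of Proposition~\ref{thm:multi} reproduced here, write down the weaker identity $\partial R = \cup_i (R_i \cap \overline{R})$, and then try to recover manifold structure by showing $R_i\cap\overline{R}$ is relatively open in $R_i$ (hence a union of components). You then propose to prove relative openness from scratch by a $\lambda$-lemma argument: flow a transversal disk to $W^s(X^i_{q'})$ forward, let it shadow $W^u(X^i_{q'})$ which meets the RoA (by Proposition~\ref{thm:multi}), pull back an open cone $C$ of transverse directions, and invoke the submersivity of $G\circ\Phi$ given by Assumption~\ref{as:yj} to hit that cone. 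This is a genuinely different, self-contained route, and the underlying idea — transversality plus the fact that some nearby transverse direction enters the RoA — is the right mechanism for the reverse inclusion and is the same mechanism that ultimately underlies the cited results in \cite{Fi23}.

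What the two approaches buy: the paper's version is economical because it outsources the reverse inclusion to \cite{Fi23} and gets the stronger statement $\partial R=\cup_i R_i$ (each $R_i$ is fully contained in $\partial R$), so no relative-openness argument is needed. Your version is longer and proves a priori a weaker set identity, but it is self-contained. The price you pay is exactly the passage you flag yourself: the cone construction must be made uniform in the parameter (so that $G(x,p)\in C$ near $(y_{q'},q')$ implies $x\in W^s(X^s_p)$), and when the unstable manifold of $X^i$ has dimension $\geq 2$ the flow does not act by pure dilation on the transverse directions, so the ``accessible'' set $S_p\subset\mathbb{R}^{c_i}$ of transverse directions entering the RoA is not obviously conical. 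A cleaner way to close this step without a cone is to note only that $S_p$ is relatively open with $0\in\overline{S_p}$, that the transversality in Assumption~\ref{as:yj} makes the induced local map $g:=G\circ\Phi$ a submersion near $q'$ and hence open, so $g(U)$ contains a ball about $0$ for every small neighborhood $U$ of $q'$, and that any ball about $0$ meets $S_p$; the residual uniformity-in-$p$ issue is then reduced to the lower semicontinuity of $W^s(X^s_p)$ that you already invoke. So your proposal is not wrong, but the $\lambda$-lemma step as written is not yet a proof, and the paper avoids the issue entirely by citing the set identity.
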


\begin{proof}[Proof of Lemma~\ref{lem:Rmanifold}]
  By Assumption~\ref{as:yj}, $y_J$ is transverse to $W^s(X^i_J)$ for all
  $i \in I$.
  Therefore, $M_i = y_J \cap W^s(X^i_J)$ is a $C^3$ embedded submanifold
  with codimension equal to the sum of the codimensions of $y_J$ and
  $W^s(X^i_J)$ in $M \times J$.
  In particular, the dimension of $M_i$ is less than the dimension of $y_J$,
  which is equal to the dimension of $J$.
  As $y_J = \text{graph }y$, $y:J \to y_J$ is a $C^3$ diffeomorphism onto its
  image
  with inverse $y^{-1} = \pi_J|_{y_J}$ the projection that sends $(x,p) \to p$.
  Thus, $\pi_J|_{y_J}$ is a $C^3$ diffeomorphism onto its image,
  so since $M_i \subset y_J$, its restriction $\pi_J|_{M_i}$ is a $C^3$ embedding
  into $J$.
  Thus, $\pi_J(M_i) = \pi_J|_{M_i}(M_i)$ is a $C^3$ embedded submanifold in $J$
  for each $i \in I$.
  However, by Proposition~\ref{thm:multi} we have
  $\partial R = \pi_J\left(y_J \cap \partial W^s(X^s_J)\right) =
  \pi_J\left(y_J \cap \cup_{i \in I} W^s(X^i_J)\right) = \cup_{i \in I} \pi_J(M_i)$
  so $\partial R$ consists of a finite union of $C^3$ embedded submanifolds
  in $J$.
  Moreover, for any $p^* \in \partial R$ there exists a unique critical element
  $X^{i^*}(p^*)$ such that $y(p^*) \in W^s(X^{i^*}(p^*))$.
  Then we have that $p^* \in M_{i^*} = \pi_J(y_J \cap W^s(X^{i^*}_J))$.
\end{proof}

\begin{lemma}
  \label{lem:onedim}
For any $p^* \in \partial R$, by Proposition~\ref{thm:multi}
$y(p^*) \in W^s(X^*(p^*))$
for some critical element $X^*$.
Then for $p^* \in \partial R$ almost everywhere, $W^s(X^*(p^*))$ has
codimension one.
\end{lemma}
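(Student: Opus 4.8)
The plan is to read off the conclusion from the stratification of $\partial R$ obtained in Lemma~\ref{lem:Rmanifold}, showing that the controlling critical element can fail to have a codimension-one stable manifold only on the lower-dimensional strata, which are negligible inside $\partial R$.

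First I would recall from the proof of Lemma~\ref{lem:Rmanifold} that $\partial R = \bigcup_{i \in I} \pi_J(M_i)$ with $M_i = y_J \cap W^s(X^i_J)$, that the index set $I$ is finite (by Proposition~\ref{thm:multi}), and that each $\pi_J(M_i)$ is a $C^3$ embedded submanifold of $J$ with $\dim \pi_J(M_i) = \dim M_i$. I would then pin down its dimension by a routine count: $y_J$ has codimension $n$ in $M \times J$, and $W^s(X^i_J)$ has codimension $n - \dim W^s(X^i_p)$ (since $W^s(X^i_p)$ varies $C^r$ with $p$ with constant dimension), so by the transversality of Assumption~\ref{as:yj} the intersection $M_i$ has codimension $2n - \dim W^s(X^i_p)$ in $M \times J$, whence $\operatorname{codim}_J \pi_J(M_i) = \operatorname{codim}_M W^s(X^i_p)$. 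Thus $\pi_J(M_i)$ has codimension one in $J$ exactly when $W^s(X^i_p)$ has codimension one in $M$.

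Next I would observe that every critical element $X^i_p$ appearing in the RoA boundary satisfies $\operatorname{codim}_M W^s(X^i_p) \geq 1$: by Proposition~\ref{thm:multi}, $W^s(X^i_p) \subseteq \partial W^s(X^s_p)$, which is the topological boundary of an open set and so has empty interior in $M$, so $W^s(X^i_p)$ cannot be an open (codimension-zero) submanifold. Hence, if $W^s(X^i_p)$ does not have codimension one then it has codimension at least two, in which case $\dim \pi_J(M_i) \leq \dim J - 2$ and $\pi_J(M_i)$ has $(\dim J-1)$-dimensional Hausdorff measure zero. Setting $I_{\geq 2} = \{i \in I : \operatorname{codim}_M W^s(X^i_p) \geq 2\}$, the set $\bigcup_{i \in I_{\geq 2}} \pi_J(M_i)$ is then a finite union of $C^3$ submanifolds of dimension at most $\dim J - 2$, and is $\mathcal{H}^{\dim J - 1}$-null.

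Finally I would close the argument: for any $p^* \in \partial R$ the controlling critical element $X^*(p^*) = X^{i^*}(p^*)$ is the unique $X^i_{p^*}$ with $y(p^*) \in W^s(X^i_{p^*})$, and for that index $p^* \in \pi_J(M_{i^*})$; so if $W^s(X^*(p^*))$ does not have codimension one then $i^* \in I_{\geq 2}$ and $p^* \in \bigcup_{i \in I_{\geq 2}} \pi_J(M_i)$. Therefore the set of exceptional $p^* \in \partial R$ lies in an $\mathcal{H}^{\dim J - 1}$-null finite union of submanifolds, which is precisely the sense in which the conclusion holds for $p^* \in \partial R$ almost everywhere (matching the ``codimension-one submanifold'' notion of generic used in Theorem~\ref{thm:Gtot}). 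There is no hard analytic step here; the point I would be most careful about is the codimension bookkeeping identifying $\operatorname{codim}_J \pi_J(M_i)$ with $\operatorname{codim}_M W^s(X^i_p)$, together with the small but essential fact that $W^s(X^i_p)$ can never be codimension zero.
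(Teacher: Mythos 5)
Your proof is essentially the paper's own argument, just with the codimension bookkeeping spelled out more explicitly. Both you and the paper decompose $\partial R = \bigcup_{i \in I}\pi_J(M_i)$ via Lemma~\ref{lem:Rmanifold}, identify $\operatorname{codim}_J\pi_J(M_i)$ with $\operatorname{codim}_M W^s(X^i_p)$, rule out codimension zero because $W^s(X^i_p)$ lies in the boundary of an open set, and conclude the exceptional set is a finite union of submanifolds of codimension at least two, hence $\mathcal{H}^{\dim J - 1}$-null.

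The one thing the paper does that you omit is to verify non-vacuity: the paper invokes a dimension-theoretic result (cited as \cite[Corollary~2]{Hu48}) to argue that, because $R$ is open and not dense, at least one $\pi_J(M_i)$ must actually have codimension one, so $\partial R$ has positive $(\dim J - 1)$-dimensional measure. Your argument, as written, would be vacuously true if every stratum had codimension $\geq 2$, and the usefulness of this lemma downstream (e.g., the footnote to Theorem~\ref{thm:Gtot} asserting that generic $p^*\in\partial R$ lies on a codimension-one stratum, and the constructions in Theorems~\ref{thm:1dim}--\ref{thm:conv} that take such a $p^*$ as a genuine starting point) depends on the codimension-one set being nonempty and indeed of full measure. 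So while your proof literally verifies the stated ``almost everywhere'' claim, you should add the short observation that a codimension-one stratum must exist; otherwise the measure-theoretic statement carries no content.
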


\begin{proof}[Proof of Lemma~\ref{lem:onedim}]
By Lemma~\ref{lem:Rmanifold} and its proof,
$\partial R = \cup_{i \in I} M_i$, where each
$M_i$ is a $C^3$ embedded submanifold in $J$ with codimension equal to the
codimension of $W^s(X^i(p^*))$ for each $p^* \in M_i$.
Note that no $M_i$ can have codimension zero, since this would contradict
that $M_i \subset \partial R$.
As $R$ is open and not dense in $J$,
by \cite[Corollary 2]{Hu48} there must
exist at least one $i^*$ such that $M_{i^*}$ has codimension one.
Therefore, $\partial R$ has positive Lebesgue measure in $(\text{dim } J - 1)$
dimensions.
Furthermore, for each $M_i$ which corresponds to $W^s(X^i(p^*))$ for
$p^* \in M_i$ with codimension greater than one, that $M_i$ also has
codimension greater than one and, therefore,
zero Lebesgue measure in $(\text{dim } J - 1)$ dimensions.
Hence, the set of $p^* \in \partial R$ such that $W^s(X^*(p^*))$ has
codimension greater than one is equal to the finite union of the $M_i$ which
each have codimension greater than one, and therefore has
zero Lebesgue measure in $(\text{dim } J - 1)$ dimensions.
Thus, for $p^* \in \partial R$ almost everywhere, $W^s(X^*(p^*))$ has
codimension one.
\end{proof}

\begin{figure}[t!]
  \centering
  \begin{subfigure}[t]{0.58\linewidth}
    \centering
    \includegraphics[width=\linewidth]{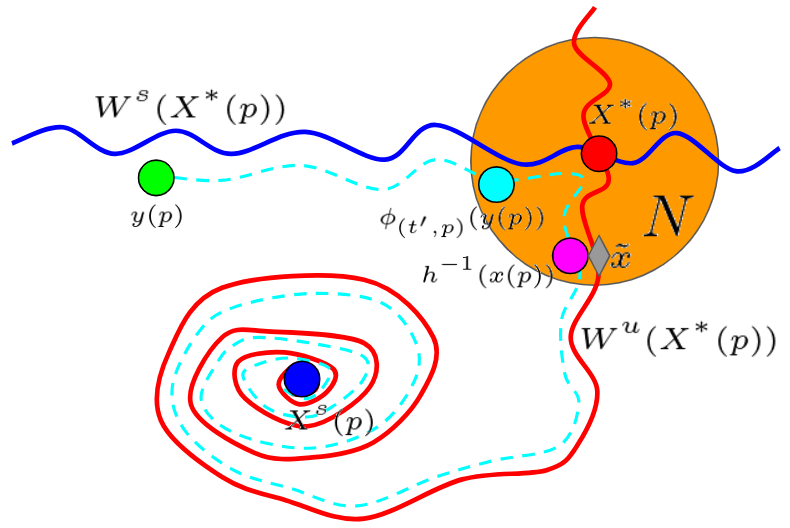}
    \caption{The system trajectory (cyan dashed line) for parameter values
    near the recovery boundary.}
  \end{subfigure}%
  \hfill
 \begin{subfigure}[t]{0.4\linewidth}
   \centering
   \includegraphics[width=\linewidth]{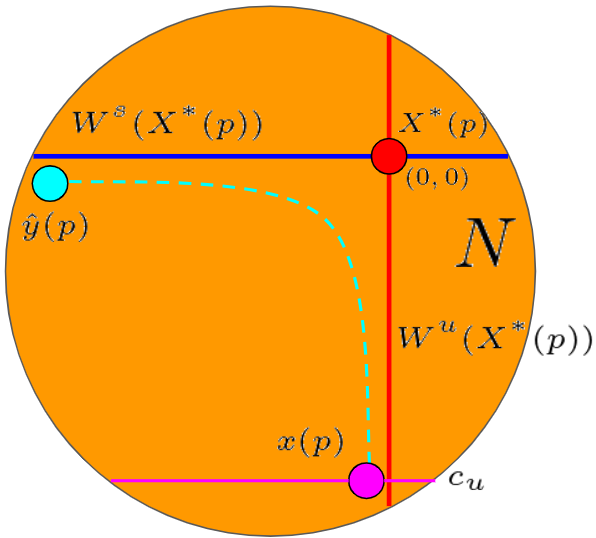}
   \caption{The system trajectory (cyan dashed line) in the smooth local
   coordinates $N$.}
 \end{subfigure}%
 \caption{Local coordinates for the proof of Theorem~\ref{thm:Gtot}.}
 \label{fig:proof}
\end{figure}

\begin{proof}[Proof of Theorem~\ref{thm:Gtot}]
Let $p^* \in \partial R$.
Then $y(p^*) \in W^s(X^*(p^*))$ for some critical element $X^*(p^*)$.
Assume that $W^s(X^*(p^*))$ has codimension one, which is true almost always
by Lemma~\ref{lem:onedim}.
By Assumption~\ref{as:oneD}, $X^*$ is an equilibrium point.
By Assumption~\ref{as:res} and \cite[Theorem 7]{Se85},
there exists a neighborhood $N$ of $X^*(p^*)$, a neighborhood $J'$ of
$p^*$ in $J$, and a $C^3$ conjugacy $h:N \times J' \to \mathbb{R}^n$
known as a smooth linearization
such that for $p \in J'$, $X^*(p) \in N$ and the vector field $V_p$ is
conjugate by $h$ to a linear vector field in $\mathbb{R}^n$.
Shrinking $N$ and $J'$ if necessary, there exist local coordinates on $N$ as
constructed in \cite[p. 81]{Pa82} which straighten out the
stable and unstable manifolds of a hyperbolic equilibrium point so that they
are orthogonal linear subspaces of the local coordinates.
With an abuse of notation, let $h$ denote the composition of the smooth
linearization $h$ with these local coordinates.
Figure~\ref{fig:proof} illustrates the coordinate systems and notation used
in this proof.
Under the local conjugacy $h$, $V_p$ becomes linear and decomposable into
orthogonal eigenspaces representing the (straightened out) stable and unstable
eigenspaces of the linearization of $V_p$ at $X^*(p)$.
In particular, in these coordinates $V_p$ has the 
form
  $V_p(x)  = \begin{bmatrix}
    A_u(p) & 0 \\
    0 & A_s(p)
  \end{bmatrix}$
where $A_u(p)$ and $A_s(p)$ are $C^3$ matrices such that
$e^{A_u(p)} > 1$ (since by Assumption~\ref{as:oneD} the unstable manifold
of $X^*(p)$ is one dimensional) and $||e^{A_s(p)}|| < 1$ for all $p \in J'$.
Let $\pi_u$ and $\pi_s$ denote projection onto the unstable and stable
manifolds (which are also eigenspaces) of the equilibrium point in these
coordinates.
As $y(p^*) \in W^s(X^*(p^*))$ and $X^*(p^*) \in N$, there exists
$t' > 0$ such that $\phi_{(t',p^*)}(y(p^*)) \in N$.
Let $\hat{y}(p) = h \circ \phi_{(t',p)}(y(p))$.
By continuity of the flow and $y(p)$, shrinking $J'$ if necessary
implies that for $p \in J'$, $\hat{y}(p)$ is well-defined and $C^3$.
Let $x_u(p) = \pi_u\hat{y}(p)$ and $x_s(p) = \pi_s\hat{y}(p)$ for $p \in J'$.
Then the dynamics in these coordinates are given by (where we
let $\psi$ denote the flow in this coordinate system) 
$\psi_{(t,p)}(\hat{y}(p)) = \begin{bmatrix}
    e^{A_u(p)t} x_u(p) \\
    e^{A_s(p)t} x_s(p)
    \end{bmatrix}$.
Assume that $A_s(p^*)$ has distinct eigenvalues, which is generically true.
As this is an open condition, shrinking $J$ if necessary implies that $A_s(p)$
is diagonalizable for all $p \in J$.
Thus, we can write $A_s(p) = W(p) \Lambda(p) W(p)^{-1}$ where $\Lambda(p)$
is diagonal and $W(p)$ is invertible.  Note that by the implicit function
theorem, and since the eigenvalues are distinct, $\Lambda(p)$ and $V(p)$ are
(at least) $C^3$ over $J$.
Define $\d{W(p)}{p} = \begin{bmatrix} \pd{W(p)}{p_1} & \hdots &
  \pd{W(p)}{p_{\text{dim } J}} \end{bmatrix}$
and $\d{\Lambda(p)}{p} = \begin{bmatrix} \pd{\Lambda(p)}{p_1} & \hdots &
  \pd{\Lambda(p)}{p_{\text{dim } J}} \end{bmatrix}$.
Then the derivative of the flow with respect to $p$ is given by the following,
where we omit the dependence on $p$ for brevity
\begin{align}
  &\d{\psi_{(t,p)}(\hat{y}(p))}{p} = \begin{bmatrix}
      t e^{A_u t} \d{A_u}{p} x_u + e^{A_u t} \d{x_u}{p} \\
  \d{W}{p}\left(I \otimes W^{-1}e^{A_s t}x_s\right) \end{bmatrix} \nonumber \\
  &+ \begin{bmatrix}
    0 \\
     e^{A_s t}\left(\left(tW\d{\Lambda}{p}- \d{W}{p}\right)
    \left(I \otimes W^{-1}x_s\right) + \d{x_s}{p} \right)
     \end{bmatrix}.
    \label{eq:dphi}
\end{align}

As $\phi_{(t',p)}(y(p^*)) \in W^s(X^*(p^*)) \cap N$ and
$h(W^s(X^*(p^*)) \cap N) = \{x:~\pi_ux = 0\}$, $x_u(p^*) = 0$.
Note that, by Assumption~\ref{as:yj}, $\d{x_u(p^*)}{p} \neq 0$ since otherwise
$y(J)$ would not be transversal to $W^s(X^s(p^*))$ at $p^*$.
Then evaluating \eqref{eq:dphi} at $p^*$ implies
$\pi_u \d{\psi_{(t,p^*)}(\hat{y}(p^*))}{p} = e^{A_u(p^*) t} \d{x_u(p^*)}{p}$
since $x_u(p^*) = 0$. Thus, we have
$\lim_{t \to \infty} \pi_u \d{\phi_{(t,p^*)}(\hat{y}(p^*))}{p} =
\infty \d{x_u(p^*)}{p}$ since
$\d{x_u(p^*)}{p} \neq 0$ and $A_u(p^*)$ is an unstable eigenvalue.
Assume that $\d{x_u(p^*)}{p}$ is nonzero in each entry, which is generically
true.
The flow in the original coordinates is given by
$\phi_{(t'+t,p)}(y(p)) = h(\cdot,p)^{-1} \circ \psi_{(t,p)}(\hat{y}(p))$.
Taking a derivative with respect to $p$ yields
  $\d{\phi_{(t'+t,p)}(y(p))}{p}
  =
  \pd{h^{-1}(\psi_{(t,p)}(\hat{y}(p)),p)}{p}
  + \pd{h^{-1}(\psi_{(t,p)}(\hat{y}(p)),p)}{x} \d{\psi_{(t,p)}(\hat{y}(p))}{p}$.
As $h$ is a diffeomorphism, $\pd{h^{-1}}{x}$ is full rank everywhere.
Thus, by \eqref{eq:dphi}, at $p = p^*$ we have
$\sup_{t \geq 0}  \left|\left|\d{\phi_{(t'+t,p)}(y(p))}{p}\right|\right|_1
= \infty$, which implies that $G(p^*) = 0$.
For any $p \in R \cap J'$, $x_u(p) \neq 0$ since otherwise
$\hat{y}(p)$ would lie in the stable manifold of $X^*(p)$, which it cannot
by definition of $R$.
Without loss of generality, suppose $x_u(p) > 0$, and fix a 
constant $c_u > 0$.
For $p \in R \cap J'$,
let $t(p)$ denote the time
that
$\pi_u \psi_{(t(p),p)}(\hat{y}(p)) = c_u$.
Then solving for $t(p)$ yields
$t(p) = \f{1}{A_u(p)}\log \f{c_u}{x_u(p)}$.

As $p^* \in \partial R$, by Lemma~\ref{lem:Rmanifold} and its proof there
exists a $C^3$ embedded submanifold $\hat{M}$ in $J$ with codimension one such
that $p^* \in \hat{M}$.
As $\hat{M}$ is an embedded submanifold, there exists a local slice chart
centered at $p^*$: i.e. local coordinates in $J$
such that the intersection of $\hat{M}$ with this coordinate neighborhood is
equal to $\mathbb{R}^{\text{dim }J - 1} \times \{0\}$.
Shrink $J'$ if necessary so that $J'$ is contained in the neighborhood of
these slice chart coordinates.
As $\hat{M} \subset \partial R$, without loss of generality we may
assume that in this slice chart any point $(p,s)$ with $s < 0$ lies in $R$,
and any point $(p,s)$ with $s > 0$ lies outside of $\overline{R}$.
In this slice chart, abuse notation to represent $p^*$ as $(p^*,0)$ and
let $\gamma:[0,1] \to \overline{R}$ be represented 
by
$\gamma(s) = (p^*,s-1)$ for all $s \in [0,1]$.
Then $\gamma(s) \in R$ for all $s \in [0,1)$ and $\gamma(1) = p^*$.
Also, $\gamma'(1) = (0,1) \neq 0$, so $\gamma$ is transverse to
$\hat{M}$, and thus to $\partial R$, at $p^*$.

For $p \in \gamma$, say $p = \gamma(s)$, define
$\hat{y}'(\gamma'(s)) = d(\hat{y} \circ \gamma)_s$.
Define $\hat{y}''(\gamma''(s))$ and $\hat{y}'''(\gamma'''(s))$ analogously
for the second and third derivatives, which exist and are continuous since
$y$ is $C^3$.
For the remainder of the proof, let $x_u(s) = \pi_u \hat{y}(\gamma(s))$,
$x_s(s) = \pi_s \hat{y}(\gamma(s))$,
and define $x_u'(s)$, $x_s'(s)$, $x_u''(s)$, $x_s''(s)$, $x_u'''(s)$, and
$x_s'''(s)$ analogously
by composing the $i$th derivative of $\hat{y}$ with
$\pi_u$ or $\pi_s$ for $i \in \{1,2,3\}$, respectively.
By Assumption~\ref{as:yj}, $x_u'(\gamma'(1)) \neq 0$ since otherwise
$y(J)$ would not be transversal to $W^s(X^s(p^*))$ at $p^*$.
As $y(J)$ is transverse to $W^s(X^*(J))$ by Assumption~\ref{as:yj} and $h$ is
$C^3$,
$\hat{y}(J)$ is transverse to $W^s(X^*(J))$ in the 
coordinates of $h$.
This implies that $\pi_u d(\hat{y})_{p^*}$ is full rank (i.e. nonzero).
So, since $\gamma'(1) \neq 0$, 
$x_u'(\gamma'(1)) = \pi_u d(\hat{y})_{p^*}(\gamma'(1)) \neq 0$.
For brevity 
we often omit the 
dependence
on $s$.
For $p \in \gamma \cap R \cap J'$, say $p = \gamma(s)$, define
\begin{align}
  &\Phi(p) := \d{\psi_{(t,p)}(\hat{y}(p))}{p}|_{t=t(p)}
  = \begin{bmatrix} \pi_u \Phi(p) \\ \pi_s \Phi(p) \end{bmatrix}
  \nonumber \\
  &\Phi(s) := \Phi(\gamma(s))
  = \begin{bmatrix}
    \f{c_u}{A_u} \d{A_u}{p} \log \f{c_u}{x_u} + \d{x_u}{p} \f{c_u}{x_u} \\
    \d{W}{p}\left(I \otimes W^{-1}e^{\f{A_s}{A_u} \log \f{c_u}{x_u}}x_s\right)
  \end{bmatrix} \nonumber \\
  &+ \begin{bmatrix}
    0 \\
    e^{\f{A_s}{A_u} \log \f{c_u}{x_u}} \left(
    \left(\f{\log \f{c_u}{x_u}}{A_u}  W\d{\Lambda}{p}
    - \d{W}{p}\right)
    \left(I \otimes W^{-1}x_s\right) + \d{x_s}{p} \right)
  \end{bmatrix} \nonumber \\
    &=: \begin{bmatrix}
    B_1(s) \log \f{c_u}{x_u} + B_2(s) \f{1}{x_u} \\
    B_4(s)\left(I \otimes W^{-1}e^{B_3(s) A_s \log \f{c_u}{x_u}}x_s\right)
  \end{bmatrix} \nonumber \\
  &+ \begin{bmatrix}
    0 \\
    e^{B_3(s) A_s\log \f{c_u}{x_u}} \left(\log \f{c_u}{x_u} B_5(s) + B_6(s) \right)
    \end{bmatrix}
    \label{eq:phis}    
\end{align}
where it is straightforward to verify that
$B_1(s)$, $B_2(s)$, $B_3(s)$, $B_4(s)$, $B_5(s)$, and $B_6(s)$ are $C^2$.
Note $B_3(s)A_s(s) = \f{A_s(s)}{A_u(s)}$ is Hurwitz since $A_s(s)$ is
Hurwitz and $A_u(s) > 0$.
For the following, we omit the explicit dependence on $s$
and indicate derivatives with respect to $s$ using primes.
We compute
\begin{align}
  & \pi_s \Phi'(s)
  = B_4'\left(I \otimes W^{-1}e^{B_3 A_s \log \f{c_u}{x_u}}x_s\right) \nonumber \\
  &+ B_4 \left(I \otimes W^{-1}e^{B_3 A_s \log \f{c_u}{x_u}}
  \left(x_s' + \left(-W' \vphantom{\log \f{c_u}{x_u}} \right. \right. \right.
  \nonumber \\
  & \left. \left. \left.
  + W\left(\log \f{c_u}{x_u}(\Lambda'B_3 + \Lambda B_3')
  -\Lambda B_3 \f{x_u'}{x_u}\right)
  \right)W^{-1}x_s \right)\right) \nonumber \\
  &+ \left(W'W^{-1}e^{B_3 A_s \log \f{c_u}{x_u}} + e^{B_3 A_s \log \f{c_u}{x_u}}
  \left(
  W\left(
  -\Lambda B_3 \f{x_u'}{x_u} \right. \right. \right. \nonumber \\
  & \left. \left. \left.
  +\log \f{c_u}{x_u}(\Lambda'B_3 + \Lambda B_3')
  \right)
  - W'\right)W^{-1}
  \right)\left(\log \f{c_u}{x_u} B_5 + B_6 \right) \nonumber \\
  &
  +  e^{B_3 A_s\log \f{c_u}{x_u}} \left(\log \f{c_u}{x_u} B_5'
  - B_5 \f{x_u'}{x_u} + B_6' \right) \nonumber \\
  & =: B_4'\left(I \otimes W^{-1}e^{B_3 A_s \log \f{c_u}{x_u}}x_s\right) \nonumber \\
  &+ B_4 \left(I \otimes W^{-1}e^{B_3 A_s \log \f{c_u}{x_u}}
  \left(B_7 \log \f{c_u}{x_u} + B_8 \f{x_u'}{x_u} + B_9\right)\right) \nonumber
  \\
  &+ \left(e^{B_3 A_s \log \f{c_u}{x_u}}
   \left(B_{11} \log \f{c_u}{x_u} + B_{12} \f{x_u'}{x_u} + B_{13}\right) \right.
   \nonumber \\
   & \left. + B_{10}e^{B_3 A_s \log \f{c_u}{x_u}} \vphantom{\log \f{c_u}{x_u}}\right)
   \left(\log \f{c_u}{x_u} B_5 + B_6 \right) \nonumber \\
   &+  e^{B_3 A_s\log \f{c_u}{x_u}} \left(\log \f{c_u}{x_u} B_5'
   - B_5 \f{x_u'}{x_u} + B_6' \right)
   \label{eq:phiss}
\end{align}
where it is straightforward to verify that
$B_7(s)$, $B_8(s)$, $B_9(s)$, $B_{10}(s)$, $B_{11}(s)$, $B_{12}(s)$, and
$B_{13}(s)$ are $C^1$.  We compute
\begin{align}
  & \pi_s \Phi''(s) =
  B_4''\left(I \otimes W^{-1}e^{B_3 A_s \log \f{c_u}{x_u}}x_s\right) \nonumber \\
  & + 2B_4' \left(I \otimes W^{-1}e^{B_3 A_s \log \f{c_u}{x_u}}
  \left(B_7 \log \f{c_u}{x_u} + B_8 \f{x_u'}{x_u} + B_9\right)\right) \nonumber
  \\
  &+B_4 \left(I \otimes
  W^{-1}e^{B_3 A_s \log \f{c_u}{x_u}}
  \left(W\left((\Lambda'B_3 + \Lambda B_3')\log \f{c_u}{x_u}
  \right. \right. \right. \nonumber \\
  & \left. \left. \left. - \Lambda B_3
  \f{x'}{x}\right) - W'\right)W^{-1}
\left(B_7 \log \f{c_u}{x_u} + B_8 \f{x_u'}{x_u} + B_9\right)\right) \nonumber \\
&+B_4 \left(I \otimes W^{-1}e^{B_3 A_s \log \f{c_u}{x_u}}
\left(B_7' \log \f{c_u}{x_u} - B_7 \f{x_u'}{x_u}
+ B_8' \f{x_u'}{x_u} \right. \right. \nonumber \\
& \left. \left. + B_8 \f{x_u''x_u-(x_u')^2}{x_u^2}
+ B_9'\right) \right) \nonumber \\ 
&+ \left(\left(
e^{B_3 A_s \log \f{c_u}{x_u}}
\left(B_{11} \log \f{c_u}{x_u} + B_{12} \f{x_u'}{x_u} + B_{13}
\right) \right. \right. \nonumber \\
&+ \left. B_{10}e^{B_3 A_s \log \f{c_u}{x_u}} \right)
\left(B_{11} \log \f{c_u}{x_u} + B_{12} \f{x_u'}{x_u} + B_{13}\right) \\
& 
+ e^{B_3 A_s \log \f{c_u}{x_u}}\left(B_{11}'\log \f{c_u}{x_u} - B_{11} \f{x_u'}{x_u}
+ B_{12}' \f{x_u'}{x_u} \right. \nonumber \\
& \left. + B_{12} \f{x_u''x_u - (x_u')^2}{x_u^2} + B_{13}'\right)
+ B_{10}'e^{B_3 A_s \log \f{c_u}{x_u}} \nonumber \\
&  + B_{10}\left(
B_{10}e^{B_3 A_s \log \f{c_u}{x_u}} + e^{B_3 A_s \log \f{c_u}{x_u}}
\left(B_{11} \log \f{c_u}{x_u}  \right. \right. \nonumber \\
& \left. \left. \left. + B_{12} \f{x_u'}{x_u} + B_{13}
\vphantom{\log \f{c_u}{x_u}} \right) \right) \right)
\left(\log \f{c_u}{x_u} B_5 + B_6 \right) \nonumber \\
&+ 2\left(e^{B_3 A_s \log \f{c_u}{x_u}}
\left(B_{11} \log \f{c_u}{x_u} + B_{12} \f{x_u'}{x_u} + B_{13}\right) \right.
\nonumber \\
& \left. + B_{10}e^{B_3 A_s \log \f{c_u}{x_u}} \right)
\left(\log \f{c_u}{x_u} B_5' - B_5 \f{x_u'}{x_u} + B_6' \right) \nonumber \\
&+ e^{B_3 A_s \log \f{c_u}{x_u}}
\left(\log \f{c_u}{x_u} B_5'' - B_5' \f{x_u'}{x_u}
- B_5' \f{x_u'}{x_u} \right. \nonumber \\
& \left. - B_5 \f{x_u''x_u-(x_u')^2}{x_u^2} + B_6'' \right)
\label{eq:phisss}
\end{align}
We compute
\begin{align}
  &\pi_u \Phi'(s) =
  B_1' \log \f{c_u}{x_u} + (B_2' - B_1 x_u') \f{1}{x_u} - B_2 x_u' \f{1}{x_u^2}
  \label{eq:phisu} \\
  &\pi_u \Phi''(s)
  = B_1'' \log \f{c_u}{x_u} + (B_2'' -2 B_1'x_u' - B_1 x_u'') \f{1}{x_u}
  \nonumber \\
  &+ (B_1(x_u')^2-2 B_2'x_u'-B_2x_u'') \f{1}{x_u^2}
  + 2 B_2 (x_u')^2 \f{1}{x_u^3}
  \label{eq:phisuu}
\end{align}

Let $x(p) = \psi_{(t(p),p)}(\hat{y}(p))$ for $p \in J' \cap R$.
We can write
$\phi_{(t,p)}(x(p)) = \phi_{(t,p)} \circ h(\cdot,p)^{-1} \circ
  \psi_{(t(p),p)}(\hat{y}(p))$.
Taking a derivative implies
\begin{align}
\d{\phi_{(t,p)}(x(p))}{p} &= \pd{\phi_{(t,p)}(x(p))}{p} +\pd{\phi_{(t,p)}(x(p))}{x}
\nonumber \\
&*\left(\pd{h^{-1}(x(p),p)}{p}+ \pd{h^{-1}(x(p),p)}{x}\Phi(p) \right).
\label{eq:Phi}
\end{align}
For $t \geq 0$, we define the function
\begin{align*}
  f(t,p) &= x_u(p)
  \left|\left|\pd{\phi_{(t,p)}(x(p))}{p} +\pd{\phi_{(t,p)}(x(p))}{x}
  \right.\right. \\
  &\left.\left.*\left(\pd{h^{-1}(x(p),p)}{p}+ \pd{h^{-1}(x(p),p)}{x}
  \f{c_u}{x_u(p)} \d{x_u(p)}{p} \right)
  \right|\right|_1.
\end{align*}
Let
$\tilde{x} = h^{-1}\left(\begin{bmatrix} c_u & 0 \end{bmatrix}^\intercal\right)$.
Note $\lim_{p \to p^*} x(p) = \tilde{x}$.
As $x_u(p^*) = 0$, 
$f(t,p^*) = c_u\left|\left|\pd{\phi_{(t,p^*)}(\tilde{x})}{x}
\pd{h^{-1}(\tilde{x},p^*)}{x} \d{x_u(p^*)}{p}\right|\right|_1$
is finite and well-defined.
As $\lim_{t \to \infty} \phi_{(t,p)}(x) = X^s(p)$ for any $p \in R$ and
$x \in W^s(X^s(p))$,
$\lim_{t \to \infty} \pd{\phi_{(t,p)}(x)}{x} = 0$.
Thus, since $\pd{\phi_{(t,p)}(x)}{x}$, $\pd{h^{-1}(x,p)}{x}$,
and $\d{x_u(p)}{p}$ are all full rank for $p$ near $p^*$,
$f(t,p)$ is strictly positive for
all $t \geq 0$ and zero in the limit as $t \to \infty$, so it must have a
finite supremum over $t \geq 0$.  As $f$ is continuous, this supremum must
be a maximum that is attained in finite time.
By Assumption~\ref{as:unique}, there exists a unique time $\tilde{t}$ at which
the maximum of $f(t,p^*)$ over $t \geq 0$ is attained.
As this is a local maximum, $\pd{f(\tilde{t},p^*)}{t} = 0$ and
$\f{\partial^2 f(\tilde{t},p^*)}{\partial t^2} < 0$.
Thus, by the implicit function theorem there exist neighborhoods $U$ of $p^*$
and $T$ of $\tilde{t}$, and a $C^2$ function $\hat{t}: U \to T$ such
that $\hat{t}(p^*) = \tilde{t}$ and $p \in U$ implies
$\pd{f(\hat{t}(p),p)}{t} = 0$.
As $f(t,p)$ attains a maximum over $t \geq 0$ in finite time for any $p$
near $p^*$, and since $f$ is $C^2$, shrinking $U$ if necessary implies that
for each $p \in U$, $\hat{t}(p)$ will be the unique global maximum of
$f(t,p)$ over $t \geq 0$.
Shrink $J'$ if necessary so that $J' \subset U$.

As $x_u(p)$ is constant in time, note that
$\argmax_{t \geq 0} \f{1}{x_u(p)} f(t,p) = \argmax_{t \geq 0} f(t,p) = \hat{t}(p)$.
In the limit as $p \to p^*$, \eqref{eq:phis} implies that
$\Phi \to \begin{bmatrix} \pi_u \Phi & 0 \end{bmatrix}^\intercal$
and that $\pi_u \Phi \to \f{c_u}{x_u(p)} \d{x_u(p)}{p}$.
Thus, in the limit as $p \to p^*$, by \eqref{eq:Phi}
$\left|\left|\d{\phi_{(t,p)}(x(p))}{p}\right|\right|_1 \to \f{1}{x_u(p)}f(t,p)$,
so it attains a unique global maximum over $t \geq 0$ at $\hat{t}(p)$.
Then for $t \geq 0$, we define the function
  $H(t,p) = \left|\left|\d{\phi_{(t,p)}(x(p))}{p}\right|\right|_1^{-1}$
and note that in the limit as $p \to p^*$,
  $G(p) = \inf_{t \geq 0} H(t,p) =
  \left(\sup_{t \geq 0} \left|\left|\d{\phi_{(t,p)}(x(p))}{p}\right|\right|_1
  \right)^{-1} 
  =   \left|\left|\d{\phi_{(\hat{t}(p),p)}(x(p))}{p}\right|\right|_1^{-1}
  = H(\hat{t}(p),p)$.
Write $p = \gamma(s)$.
Then we define the functions
\begin{align*}  
  C(s) &= \pd{\phi_{(\hat{t}(p),p)}(x(p))}{p} + \pd{\phi_{(\hat{t}(p),p)}(x(p))}{x}
  \pd{h^{-1}(x(p),p)}{p} \\
  D(s) &= \pd{\phi_{(\hat{t}(p),p)}(x(p))}{x}\pd{h^{-1}(x(p),p)}{x}.
\end{align*}
Then the above implies that for $p = \gamma(s) \in J' \cap R$,
\begin{align}
  G(s) := G(\gamma(s)) = H(\hat{t}(p),p) =
  \left|\left|C(s)+D(s)\Phi(s)\right|\right|_1^{-1}. \label{eq:G}
\end{align}
Define $\text{sign}(M)$ by $\text{sign}(M)_{ij} = \text{sign}(M_{ij})$.
We compute, omitting the explicit dependence on $s$, and letting
primes denote derivatives with respect to $s$,
\begin{align}
  G' &= -G^2 \text{vec}(\text{sign}(C+D\Phi))^\intercal\text{vec}
  (C'+D'\Phi+D\Phi') \label{eq:gp} \\
  G'' &= 2 G^3 \left(\text{vec}(\text{sign}(C+D\Phi))^\intercal\text{vec}
  (C'+D'\Phi+D\Phi')\right)^2 \nonumber \\
  &\mkern-20mu
  - G^2\text{vec}(\text{sign}(C+D\Phi))^\intercal
  \text{vec}(C'' + D''\Phi + 2D'\Phi' + D\Phi'') \label{eq:gpp}
\end{align}
where $G''$ is well-defined only in regions where $\text{sign}(C+D\Phi)$
is locally constant.

Next, we will take the limit as $p \to p^*$ along $\gamma$, which is equivalent
to the limit as $s \to 1$.
Note that each term in \eqref{eq:G}-\eqref{eq:gpp} can be expressed as a signed
sum of the elements in $C+D\Phi$ or its derivatives.
Thus, in the limit as $s \to 1$, these terms will be dominated by the entries
of $C + D\Phi$ (or its derivatives) with the fastest asymptotic growth as
$s \to 1$.
Based on \eqref{eq:phis}-\eqref{eq:phisuu}, in the limit as $s \to 1$ the only
terms which do not converge are the terms involving $\f{1}{x_u}$
because $\lim_{s \to 1} x_u(s) = x_u(p^*) = 0$.
In $\pi_s \Phi$, $\pi_s \Phi'$, and $\pi_s \Phi''$, the fastest growth rate is
$e^{\f{A_s}{A_u} \log \f{c_u}{x_u}} \f{1}{x_u}$, which is strictly slower than
$\f{1}{x_u}$ because $\f{A_s}{A_u}$ is Hurwitz since $A_s$ is Hurwitz
and $A_u > 0$.
In contrast, the fastest growth rates for $\pi_u \Phi$, $\pi_u \Phi'$,
and $\pi_u \Phi''$ are $\f{1}{x_u}$, $\f{1}{x_u^2}$, and $\f{1}{x_u^3}$,
respectively, and the coefficients of these terms are nonzero because they
are a product of $B_2 = \d{x_u}{p}$ and powers of $x_u'$, both of which are
nonzero at $s = 1$ by Assumption~\ref{as:yj} since $y(J)$ is transverse to
$W^s(X^s(p^*))$, and since $\gamma$ is transverse to $\hat{M}$.
Thus, in the limit as $s \to 1$, the entries in $\Phi$, $\Phi'$, and $\Phi''$
with the fastest growth rates lie within $\pi_u \Phi$, $\pi_u\Phi'$, and
$\pi_u\Phi''$, respectively, and dominate all the entries in
$\pi_s \Phi$, $\pi_s\Phi'$, and
$\pi_s\Phi''$, respectively.
Note that $D(1)$ is full rank since the flow $\phi$ and local coordinates $h$
are each diffeomorphisms.
Therefore, based on the terms appearing in \eqref{eq:G}-\eqref{eq:gpp},
for purposes of computing the limits as $s \to 1$ of $G$, $G'$, and $G''$,
it suffices to consider $\begin{bmatrix} \pi_u \Phi \\ 0 \end{bmatrix}$,
$\begin{bmatrix} \pi_u \Phi' \\ 0 \end{bmatrix}$,
and $\begin{bmatrix} \pi_u \Phi'' \\ 0 \end{bmatrix}$
in place of $\Phi$, $\Phi'$, and $\Phi''$, respectively.
Furthermore, this implies
\begin{align*}
  &\lim_{s \to 1} \text{vec}(\text{sign}(C(s) + D(s)\Phi(s))) \\
  &= \lim_{s \to 1} \text{vec}\left(\text{sign}\left(C(1) + D(1)
  \begin{bmatrix} \pi_u\Phi(s) \\ 0 \end{bmatrix} \right)\right) \\
  &= \text{vec}\left(\text{sign}\left(D(1)
  \begin{bmatrix} B_2(1) \\ 0 \end{bmatrix} \right)\right) =: \sigma
\end{align*}
which is nonzero (and thus constant sign) in each entry
since $B_2(1) = c_u\d{x_u(p^*)}{p}$ is nonzero in each entry.
Thus, $G''$ is well-defined for $s$ sufficiently close to $1$.
By \eqref{eq:phis}-\eqref{eq:gpp} 
we have
\begin{align*}
  \lim_{s \to 1} G(s)
  &= 0 = G(1) \\
  \lim_{s \to 1} G'(s) &= x_u'(1) \left|\left|D(1)\begin{bmatrix} B_2(1) \\ 0
  \end{bmatrix} \right|\right|^{-1} \\
  \lim_{s \to 1} G''(s) &=
  \sigma^\intercal
  \text{vec}\left(-2 x_u' D'(1) \begin{bmatrix} B_2(1) \\ 0 \end{bmatrix}
  \right.\\
  & - 2 x_u' D(1) \begin{bmatrix} B_2'(1) \\ 0 \end{bmatrix}
  + 3 (x_u')^2 D(1) \begin{bmatrix} B_1(1) \\ 0 \end{bmatrix} \\
  & \left. + x_u'' D(1) \begin{bmatrix} B_2(1) \\ 0 \end{bmatrix}\right)
  \left|\left|D(1)\begin{bmatrix} B_2(1) \\ 0
  \end{bmatrix} \right|\right|^{-2}
\end{align*}
Furthermore, using L'Hospital's Rule we compute
\begin{align*}
  G'(1) &= \lim_{s \to 1} \f{G(s)-G(1)}{s-1}
  = \lim_{s \to 1} G'(s) \\
  G''(1) &= \lim_{s \to 1} \f{G'(s)-G'(1)}{s-1}
  = \lim_{s \to 1} G''(s).
\end{align*}
Thus, $G$, $G'$, and $G''$ are all continuous at $p^* = \gamma(1)$ along
any smooth curve $\gamma$ from inside $R$ to $\partial R$.
In particular, $G$ is $C^2$ along any such curve.

Next, we extend $G$ to a neighborhood of $p^*$ in $J'$ using the slice
chart coordinates constructed above.
As $G$ is already well-defined for any $p \in J' \cap \overline{R}$,
we define $\tilde{G}$ for $p \not\in \overline{R}$ as follows.
If $p \not\in \overline{R}$, we can abuse notation to write it as
$(p,s)$ for some $s > 0$ in the slice chart coordinates.
Then we define $\tilde{G}((p,s)) = -\tilde{G}((p,-s))$, which is well-defined
since $(p,-s) \in R \cap J'$.
Let $\hat{\gamma}(s) = (p^*,1-s)$.
Then $\tilde{G}(\hat{\gamma}(s)) = -G(\gamma(s))$ for all $s \in [0,1]$,
so the limit as $s \to 1$ of $\tilde{G}(\hat{\gamma}(s))$ and its first and
second derivatives is equal to the same limit of $G(\gamma(s))$ and
its first and second derivatives.
This implies that $\tilde{G}$ and its first and second partial derivatives in
the direction of the $s$ coordinate in the slice chart exist and are continuous.
The other standard coordinate directions for approaching $(p^*,0)$ in the
slice chart are all in the set with $s$ coordinate equal to zero, which is
precisely the linear subspace that lies in $\partial R$, and therefore
$\tilde{G}$ and its first and second derivatives are identically zero over this
subspace.
Thus, $\tilde{G}$ and its first and second partial derivatives exist and are
continuous in all of the standard coordinate directions in the slice chart,
so $\tilde{G}$ is $C^2$ at $p^*$.
Furthermore, by definition of $\tilde{G}$, $\partial R = \{p:\tilde{G}(p) = 0\}$
and $R = \{p:\tilde{G}(p) > 0\}$ since $\tilde{G}(p) < 0$ for all
$p \not\in \overline{R}$.

\end{proof}

\section{Proofs of Theorems~\ref{thm:1dim}-\ref{thm:conv}}\label{sec:proofs2}

\begin{lemma}
  \label{lem:conv}
Suppose $p^* \in J$, $r > 0$, and $k \in (0,1)$ satisfy
  $||\tilde{F}(p)-p^*|| \leq k ||p-p^*||$ 
for all $p \in \overline{B}_r(p^*)$, and one of the following:
\begin{enumerate}
\item[(a)] $R \cap \overline{B}_r(p^*)$ is forward invariant under $F$
  and $\partial R \cap \overline{B}_r(p^*) = \{p^*\}$.
\item[(b)] For some hyperplane $H$,
  $R \cap \overline{B}_r(p^*) \cap H$ is forward invariant under $F$
  and $\partial R \cap \overline{B}_r(p^*) \cap H = \{p^*\}$.
\item[(c)] $R \cap \overline{B}_r(p^*)$ is forward invariant under $F$
  and for each $\hat{p} \in \partial R \cap \overline{B}_r(p^*)$ with
  $\hat{p} \neq p^*$, $\tilde{F}(\hat{p}) \in R$.
\end{enumerate}
Then the sequence $\{p^s\}_{s=1}^\infty$ with $p^1 = p_0$ and generated by
the rule $p^{s+1} = F(p^s)$ converges to $p^*$.
\end{lemma}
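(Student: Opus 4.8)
Write $d_s := \|p^s - p^*\|$; the plan is to show $d_s \to 0$. Taking $p^1 = p_0$ in the forward-invariant set of the hypothesis — $R \cap \overline{B}_r(p^*)$, intersected with $H$ in case (b) — every iterate $p^s$ stays in $R \cap \overline{B}_r(p^*)$ (and in $H$ in case (b)) by forward invariance. I also use that $\tilde F$ is continuous on $\overline{B}_r(p^*)$, which holds in each application since $\tilde F$ is the Newton step (or parametric quadratic program solution map) built from the $C^2$ extension $\tilde G$ of Theorem~\ref{thm:Gtot}.

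The central estimate concerns the damped step \eqref{eq:backtrack}. Since $F(p^s) = F_{m(p^s)}(p^s) = \bigl(1-2^{-m(p^s)}\bigr)(p^s - p^*) + 2^{-m(p^s)}\bigl(\tilde F(p^s)-p^*\bigr)$ is a convex combination of $p^s$ and $\tilde F(p^s)$, the triangle inequality together with the hypothesis $\|\tilde F(p^s)-p^*\| \le k\,d_s$ gives
\[
  d_{s+1} \;\le\; \bigl(1 - 2^{-m(p^s)}(1-k)\bigr)\,d_s \;\le\; d_s,
\]
using $m(p^s)\ge 0$ and $k\in(0,1)$. Hence $d_s$ is non-increasing and converges to some $d^* \ge 0$. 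Suppose, for contradiction, that $d^* > 0$.

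Now I branch on whether the backtracking exponent stays bounded. If $M := \sup_s m(p^s) < \infty$, the displayed bound yields $d_{s+1} \le \bigl(1-2^{-M}(1-k)\bigr)d_s$ with $1-2^{-M}(1-k) \in [k,1)$, so $d_s \to 0$ geometrically, contradicting $d^* > 0$. Otherwise there is a subsequence with $m(p^{s_j}) \to \infty$; since $\overline{B}_r(p^*)$ is compact, pass to a further subsequence so that $p^{s_j} \to q$, and note $\|q - p^*\| = \lim_j d_{s_j} = d^* > 0$, so $q \ne p^*$. For $j$ large enough that $m(p^{s_j})\ge 1$ we have $\|F_{m(p^{s_j})-1}(p^{s_j}) - p^{s_j}\| = 2^{-(m(p^{s_j})-1)}\|\tilde F(p^{s_j})-p^{s_j}\| \le 2^{-(m(p^{s_j})-1)}(1+k)r \to 0$, so the points $F_{m(p^{s_j})-1}(p^{s_j})$ also converge to $q$; and by minimality of $m(p^{s_j})$ each such point lies in the closed set $J \setminus R$, so $q \notin R$. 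As $q$ is a limit of points $p^{s_j} \in R$, also $q \in \overline{R}$, and since $R$ is open this forces $q \in \partial R$; in case (b) the iterates lie in the closed set $H$, so $q \in H$ too.

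It remains to contradict $q \in \partial R$ with $q \ne p^*$. In case (a) this is immediate from $\partial R \cap \overline{B}_r(p^*) = \{p^*\}$, and case (b) is identical after intersecting with $H$. In case (c), $q \in \partial R \cap \overline{B}_r(p^*)$ with $q \ne p^*$ gives $\tilde F(q) \in R$ by hypothesis; but $m(p^{s_j}) \ge 1$ for large $j$ means $\tilde F(p^{s_j}) = F_0(p^{s_j}) \notin R$, so by continuity of $\tilde F$ and closedness of $J\setminus R$ we get $\tilde F(q) = \lim_j \tilde F(p^{s_j}) \notin R$ — contradicting $\tilde F(q)\in R$. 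Hence $d^* = 0$, that is, $p^s \to p^*$. I expect the crux to be exactly the unbounded-exponent branch: the integer-valued backtracking makes $F$ discontinuous, ruling out a direct Banach fixed-point argument, and the real content is that an escaping exponent $m(p^{s_j})\to\infty$ pins the cluster point $q$ onto $\partial R$, where hypotheses (a)--(c) then forbid $q \ne p^*$.
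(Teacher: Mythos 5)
Your proof is correct, and it reaches the conclusion by a route that is genuinely different in its case structure from the paper's. The paper takes an $\omega$-limit-set approach: forward invariance plus compactness of $\overline{R}\cap B$ give a cluster point $q$, and it then splits on whether $q\in R$ or $q\in\partial R$. In the interior case the paper uses continuity of $F_{m(q)}$ at $q$ to bound $m(p^{s_n})\le m(q)$ eventually, yielding a uniform contraction rate $k_{m(q)}<1$; in the boundary case (for (c)) it uses continuity of $\tilde F$ and openness of $R$ to force $m(p^{s_n})=0$ eventually and again contract. You instead prove monotonicity of $d_s$ first, assume $d^*>0$ for contradiction, and split on whether the backtracking exponent $m(p^s)$ is bounded. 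The bounded branch is a direct geometric contraction with no cluster point needed. The unbounded branch is the interesting contribution: you observe that $F_{m(p^{s_j})-1}(p^{s_j})$ lies in the closed set $J\setminus R$ (by minimality of $m$) while its distance to $p^{s_j}$ is $O(2^{-m(p^{s_j})})\to 0$, which pins the cluster point $q$ onto $\partial R$; you then close case (c) by noting $\tilde F(p^{s_j})\notin R$ converges to $\tilde F(q)$, directly contradicting $\tilde F(q)\in R$. This is tidier than the paper's handling of (c), which derives a contraction on the subsequence rather than an immediate topological contradiction. The two arguments rely on the same ingredients (forward invariance, compactness, openness of $R$, continuity of $\tilde F$, the contraction hypothesis), so neither is strictly stronger, but your exponent-dichotomy avoids needing to separately treat an interior cluster point and makes the role of the backtracking mechanism more transparent.
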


\begin{proof}[Proof of Lemma~\ref{lem:conv}]
For cases (a) and (c) let $B = \overline{B}_r(p^*)$, and for case (b) let
$B = \overline{B}_r(p^*) \cap H$.
Recall that for any $p \in R \bigcap B$,
$F(p) = F_{m(p)}(p) = p + \f{1}{2^{m(p)}}\left(\tilde{F}(p)-p\right)$
where $m(p)$ is the smallest integer such that $F_{m(p)}(p) \in R$, which is
finite since $R$ is open.
Furthermore, for any nonnegative integer $m$ let
  $k_m = \left(1-\f{1}{2^m}(1-k)\right)$
and note $k_m \in (0,1)$ since $k \in (0,1)$.
Then for any $p \in B$,
\begin{align*}
  ||F_m(p)-p^*||_P
  &= \left|\left|\left(1-\f{1}{2^m}\right)(p-p^*) +
  \f{\tilde{F}(p)-p^*}{2^m}\right|\right|_P \\
  & \mkern-100mu \leq \left(1-\f{1}{2^m}(1-k)\right) ||p-p^*||_P
  = k_m ||p-p^*||_P.
\end{align*}

As $p^1 \in R \cap B$, by forward invariance $p^s \in R \cap B$ for all
$s \geq 1$.
As $\{p^s\}_{s=1}^\infty \subset R \cap B \subset \overline{R} \cap B$ and
$\overline{R} \cap B$ is compact, this implies that the $\omega$ limit set
of $\{p^s\}_{s=1}^\infty$ is nonempty and contained in
$\overline{R} \cap B$.
So, let $q$ be in the $\omega$ limit set of $\{p^s\}_{s=1}^\infty$.
Then there exists a subsequence $\{p^{s_n}\}_{n=1}^\infty$ such that
$\lim_{n \to \infty} p^{s_n} = q$.
We will show that $q = p^*$.

First suppose $q \in \partial R$.
For cases (a) or (b) this implies that $q = p^*$, so suppose case (c) holds.
Assume towards a contradiction that $q \neq p^*$.
Then since $q \in \partial R \cap B$, $\tilde{F}(q) \in R$.
Thus, since $\tilde{F}$ is continuous and $R$ is open, there exists $N > 0$
sufficiently large such that $n \geq N$ implies that $\tilde{F}(p^{s_n}) \in R$.
So, $n \geq N$ implies $F(p^{s_n}) = \tilde{F}(p^{s_n})$.
Then
$||F(p^{s_n})-p^*||_P = ||\tilde{F}(p^{s_n})-p^*||_P
  \leq k^{n-N}||F(p^{s_N})-p^*||_p$.
As $k \in (0,1)$, taking the limit as $n \to \infty$ implies that
$\lim_{n \to \infty} ||F(p^{s_n})- \nolinebreak p^*||_P = 0$,
which implies that $F(p^{s_n}) \to p^*$.
As $F(p^{s_n}) \to q$, this implies that $q = p^*$, which yields a contradiction.
So, we must have $q = p^*$.

Next suppose $q \in R$.
As $F(q) = F_{m(q)} \in R$, since $F_{m(q)}$ is continuous and $R$ is open,
there exists $N > 0$ sufficiently large such that $n \geq N$ implies that
$F_{m(q)}(p^{s_n}) \in R$.
This implies that for $n \geq N$, $m(p^{s_n}) \leq m(q)$.
Thus, $n \geq N$ implies $F(p^{s_n}) \in \{F_m(p^{s_n})\}_{m=0}^{m(q)}$.
Note that for $m_1 < m_2$, $k_{m_1} < k_{m_2}$.
Hence, for $n \geq N$,
\begin{align*}
  ||F(p^{s_n})-p^*||_P &\leq \max\{||F_m(p^{s_n})-p^*||_P\}_{m=0}^{m(q)} \\
  &\leq \max \{k_m\}_{m=0}^{m(q)} ||F(p^{s_{n-1}})-p^*||_P\\
  &= k_{m(q)} ||F(p^{s_{n-1}})-p^*||_P.
\end{align*}
Applying this iteratively we obtain
$||F(p^{s_n})- p^*||_P \leq \nolinebreak k_{m(q)}^{n-N} ||F(p^{s_N})-p^*||_P$.
As $k_{m(q)} \in (0,1)$, taking the limit as $n \to \infty$ implies that
$\lim_{n \to \infty} ||F(p^{s_n})-p^*||_P = 0$,
which implies that $F(p^{s_n}) \to p^*$.
As $F(p^{s_n}) \to q$, this implies that $q = p^* \in \partial R$, which
contradicts that $q \in R$.
Thus, combining the two cases implies that $q = p^*$.

Hence, as $q$ was arbitrary, the $\omega$ limit set of $\{p^s\}_{s=1}^\infty$ is
equal to the single point $p^*$.
This implies that $\{p^s\}_{s=1}^\infty$ converges and
that $\lim_{s \to \infty} p^s = p^*$.
Recall that $p^*$ was any arbitrary solution to
\eqref{eq:opt3}, so let $\overline{p}^*$ be any other
solution.
By the above, the sequence $\{p^s\}_{s=1}^\infty$ starting from $p^1 = p_0$
converges to $p^*$ and to $\overline{p}^*$,
so we must have $p^* = \overline{p}^*$.
Therefore, there exists a unique solution $p^*$ to
\eqref{eq:opt3}, and $\lim_{s \to \infty} p^s = p^*$.
\end{proof}

\begin{proof}[Proof of Theorem~\ref{thm:1dim}]
As $J$ is one dimensional and $R$ is open and path connected,
$\partial R$ consists of at most two points, say $\partial R = \{p_1,p_2\}$.
As the only 
subset of $\partial R$ with full measure is $\partial R$ itself, by
Theorem~\ref{thm:Gtot} this implies that $G$ extends to a $C^2$ function
$\tilde{G}$ on an open neighborhood $J'$ of $\partial R$.
By Sard's Theorem \cite[Theorem 6.10]{Lee13},
the set of regular values of $\tilde{G}$ has full measure, so it is
generically true that zero is a regular value of
$\tilde{G}$.
Thus, $d\tilde{G}_p$ is full rank over
$\partial R = \tilde{G}^{-1}(0)$.
Let $r > 0$ such that $\overline{B}_r(p_1), \overline{B}_r(p_2) \subset J'$
and $\overline{B}_r(p_1), \overline{B}_r(p_2)$ are disjoint.
Let $B = \overline{B}_r(p_1) \cup \overline{B}_r(p_2)$
and let $N$ be the interior of $B$, which is an open neighborhood of
$\partial R$.
As $d\tilde{G}_p$ is continuous over $B$ compact, there exist constants
$c_m, c_M$ such that
$c_m \leq |d\tilde{G}_p| \leq c_M$
for all $p \in B$.
As $d\tilde{G}_p$ is full rank over $\partial R$, $|d\tilde{G}_p| > 0$ on
$\partial R$, so shrinking $r$ if necessary implies that $c_m, c_M > 0$.
Let $d > 0$ such that
$k := \f{d}{c_m} < 1$.
As $d\tilde{G}_p$ is continuous over $B$ compact, it is uniformly continuous,
so there
exists $\alpha > 0$ such that $p,q \in B$ with $|p-q| < \alpha$ implies that
$|d\tilde{G}_p-d\tilde{G}_q| < d$.
Shrink $r$ further if necessary so that $p,q$ in one connected component of $B$
implies that $|p-q| < \alpha$.
For any $p \in R \cap B$,
by \eqref{eq:1dim} we have that $\tilde{F}(p) = p - (dG_p)^{-1}G(p)$.
Let $p^*$ be the closest point in $\partial R$ to $p$, which is unique
since $\overline{B}_r(p_1), \overline{B}_r(p_2)$ are disjoint.
By the mean value theorem, there exists $q$
such that $q = tp + (1-t)p^*$ for some $t \in [0,1]$ and
$d\tilde{G}_q(p-p^*) = \tilde{G}(p) - \tilde{G}(p^*) = \tilde{G}(p) = G(p)$,
where the last equality holds since $\tilde{G}(p^*) = 0$ since
$p^* \in \partial R$, and since $\tilde{G}(p) = G(p)$ since $p \in R$.
This implies that
\begin{align*}
  \tilde{F}(p)-p^* &= p-p^* - (dG_p)^{-1}G(p) \\
  &= p-p^* - (dG_p)^{-1}d\tilde{G}_q(p-p^*) \\
  &= p-p^* - (dG_p)^{-1}(d\tilde{G}_q-dG_p + dG_p)(p-p^*) \\  
  &= -(dG_p)^{-1}(d\tilde{G}_q-dG_p)(p-p^*).
\end{align*}
Hence, the above bounds on $|d\tilde{G}_p|$ and
$|d\tilde{G}_p-d\tilde{G}_q|$ yield
$|\tilde{F}(p)-p^*| \leq \f{d}{c_m} |p-p^*| = k |p-p^*|$.
As $p^* \in \partial R = \{p_1,p_2\}$,
$p \in B \cap R$ implies that $|p-p^*| \leq r$
so that $|\tilde{F}(p)-p^*| \leq k r$.
Thus, as $p \in R$ and $R$ is open, $m(p)$ is finite so $|F(p)-p^*| < r$.
Since $\partial R \cap \overline{B}_r(p^*) = \{p^*\}$,
by Lemma~\ref{lem:conv}(a)
the sequence $\{p^s\}_{s=1}^\infty$ converges to $p^*$.
\end{proof}

\begin{proof}[Proof of Theorem~\ref{thm:cont}]
Fix $\hat{p} \in \partial R$.
Let $\eta$ be the unit vector which is orthogonal to $dG_{\hat{p}}$.
For any $\kappa \geq 0$, let $H_\kappa$ denote the hyperplane defined by
$(p-\hat{p})^\intercal \eta = \kappa$, and note that $H_\kappa$ is a
one dimensional $C^\infty$ manifold.
By Lemma~\ref{lem:Rmanifold},
$\partial R$ consists of a finite union of $C^2$ manifolds
$\{M_i\}_{i \in I}$.
As $J$ is two dimensional, each $M_i$ has dimension one or less.
As $C^1$ submanifolds
are generically transverse \cite[Theorem A.3.20]{Ka99},
it is generically true that
$H_\kappa$ is transverse to $M_i$ for each $i \in I$ for generic $\kappa \geq 0$,
including at $\kappa = 0$.
Thus, as $J$ is two dimensional, $H_\kappa \cap M_i$ has dimension zero
(or is empty) for each $i \in I$.
Hence, $H_\kappa \cap \partial R = \cup_{i \in I} H_\kappa \cap M_i$ is a zero
dimensional manifold, so it consists of a countable number of isolated points.
As $H_0 \cap \partial R$ contains $\hat{p}$, there exists $j \in I$ such that
$H_0 \cap M_j$ is nonempty and $H_0$, $M_j$ are transverse.
Thus, by the openness of points of transveral intersection
\cite[Corollary~A.3.18]{Ka99},
since $H_\kappa$ varies $C^\infty$ with $\kappa$, there exists $r > 0$ such that
$\kappa \in [0,r]$ implies that $H_\kappa \cap M_j$ is nonempty, with at least
one point of intersection, call it $p(\kappa)$, varying $C^2$ with $\kappa$ and
approaching $\hat{p}$ as $\kappa \to 0$.
Furthermore, by the proof of Theorem~\ref{thm:Gtot}, $G$ extends to a $C^2$
function $\tilde{G}$ on a neighborhood of every manifold $M_i$ with
codimension one, so assume
that $\hat{p}$ lies in one such manifold, say $M_j$, which is true almost
always (i.e., for generic $\hat{p}$).
By Sard's Theorem \cite[Theorem 6.10]{Lee13}, zero is generically a regular
value of $\tilde{G}$.
As $M_j \subset \tilde{G}^{-1}(0)$, this implies that $d\tilde{G}_p$ is full
rank over $M_j$.
In particular, as $\hat{p} \in M_j$, $d\tilde{G}_{\hat{p}}$ is full rank so it is
nonzero.
As $d\tilde{G}_p$ is continuous and
$d\tilde{G}_{\hat{p}}^\intercal d\tilde{G}_{\hat{p}} = ||d\tilde{G}_{\hat{p}}||_2^2> 0$
, shrinking $r$ if necessary implies that for
$p \in p([0,r])$, $d\tilde{G}_p^\intercal d\tilde{G}_{\hat{p}} > 0$.
As $p(\kappa)$ is continuous, there exists $\hat{r} > 0$ such that
$||\hat{p}+\kappa \eta - p(\kappa)|| \leq \hat{r}$ for all $\kappa \in [0,r]$.
As the closed $\hat{r}$-neighborhood of $p([0,r])$ is compact and
$d\tilde{G}_p$ is continuous, shrinking $r$ further if necessary there
exists $c_m > 0$ such that
\begin{align}
  0 < c_m \leq d\tilde{G}_p^\intercal d\tilde{G}_{\hat{p}} \label{eq:cm2}
\end{align}
for all $p$ in the $\hat{r}$-neighborhood of $p([0,r])$.
This implies that the above bound holds for any $\kappa \in [0,r]$ and any
$p$ such that $||p-p(\kappa)|| \leq ||\hat{p}+\kappa \eta - p(\kappa)||$
as the latter is bounded by $\hat{r}$.
As $p(\kappa)$ is isolated in $M_j \cap H_\kappa$ for all $\kappa \in [0,r]$,
and since the $\hat{r}$-neighborhood of $p([0,r])$ is compact, shrinking $r$
further if necessary implies that this $\hat{r}$-neighborhood does not contain
any points in $M_j$ other than $p([0,r])$ (otherwise, transversality of
$M_j$ and $H_\kappa$ over this compact neighborhood would have to be lost for
new points of intersection to arise arbitarily close to $\hat{p} = p(0)$).
Shrink $r$ further if necessary so that the $\hat{r}$-neighborhood of
$p([0,r])$ does not intersect $M_i$ for any $i \in I$ with $i \neq j$
(this is possible since $p([0,r])$ is compact and $M_i$, $M_j$ are pairwise
disjoint for $i \neq j$).
Choose $d > 0$ such that
\begin{align}
  k := \f{4d||d\tilde{G}_{\hat{p}}||}{c_m} < 1 \label{eq:k2}
\end{align}
As $d\tilde{G}_p$ is continuous over the $\hat{r}$-neighborhood of $p([0,r])$,
which is compact, it is uniformly continuous, so there exists $\alpha > 0$
such that $||p-q|| < \alpha$ implies
\begin{align}
  ||d\tilde{G}_p-d\tilde{G}_q|| < d. \label{eq:d2}
\end{align}
Shrink $r$ further if necessary so that $p, q$ in the $\hat{r}$-neighborhood
of $p([0,r])$ implies that $||p-q|| < \alpha$.

Fix $\kappa \in [0,r]$ and let $p^* = p(\kappa)$.
For any $p$ in the domain of $\tilde{G}$, write
$w_p = d\tilde{G}_p = \begin{bmatrix} a & b \end{bmatrix}^\intercal$ and define
the map $\eta_p = \eta(p) = \begin{bmatrix} -b & a \end{bmatrix}^\intercal$.
Then $\eta(p)$ is orthogonal to $dG_p$ and satisfies $||\eta(p)|| = ||dG_p||$.
Note that $\eta(\hat{p}) = \eta$.
Fix any $p \in \overline{B}_{\hat{r}}(p(\kappa)) \cap H_\kappa$,
where $\overline{B}_{\hat{r}}(p(\kappa))$ is the closed ball of radius $\hat{r}$
centered at $p(\kappa)$.
We have that
 $d\tilde{F}_p =
  \begin{bmatrix}
    w_p^\intercal \\
    \eta^\intercal
  \end{bmatrix}$,
and its inverse is given by
  $d\tilde{F}_p^{-1} =
  \f{1}{w_p^\intercal w_{\hat{p}}}
  \begin{bmatrix}
    w_{\hat{p}} & \eta_p
  \end{bmatrix}$,
which is well-defined by the choice of $p$ due to \eqref{eq:cm2}.
By the multivariate mean value theorem, there exists $q$
such that $q = tp + (1-t)p^*$ for some $t \in [0,1]$ and
$\tilde{F}(p) = \tilde{F}(p) - \tilde{F}(p^*) = d\tilde{F}_q(p-p^*)$
where the first equality follows since $\tilde{F}(p^*) = 0$ because
$p^*$ satisfies $G(p^*) = 0$ and $p^* \in H_\kappa$.
This implies that
\begin{align}
  \begin{split}
  \tilde{F}(p)-p^* &= p-p^* - d\tilde{F}_p^{-1}\tilde{F}(p) \\
  &= p-p^* - \f{1}{w_p^\intercal w_{\hat{p}}}
  \begin{bmatrix}
    w_{\hat{p}} & \eta_p
  \end{bmatrix}
  \begin{bmatrix}
    w_q^\intercal \\
    \eta^\intercal
  \end{bmatrix}
  (p-p^*). \label{eq:Fpstar}
  \end{split}
\end{align}
As $w_{\hat{p}}$ and $\eta = \eta_{\hat{p}}$ are orthogonal, we can decompose
the identity matrix $I$ as
$I = \f{w_{\hat{p}} w_{\hat{p}}^\intercal}{w_{\hat{p}}^\intercal w_{\hat{p}}}
  + \f{\eta \eta^\intercal}{w_{\hat{p}}^\intercal w_{\hat{p}}}$
since, letting $R$ denote rotation of a vector by $90^\circ$, we have that
$\eta = Rw_{\hat{p}}$ so
$\eta^\intercal \eta = w_{\hat{p}}^\intercal R^\intercal R w_{\hat{p}}
= w_{\hat{p}}^\intercal w_{\hat{p}}$ because $R$ is orthogonal so
$R^\intercal = R^{-1}$.
Writing $p-p^* = I(p-p^*)$ and substituting this decomposition into
\eqref{eq:Fpstar} yields
\begin{align*}
  \tilde{F}(p)-p^* &= \left(I -
  \f{w_{\hat{p}}w_q^\intercal + \eta_p \eta^\intercal}{w_p^\intercal w_{\hat{p}}}
  \right) (p-p^*) \\
  &= \f{\left(\f{w_p^\intercal w_{\hat{p}}}{w_{\hat{p}}^\intercal w_{\hat{p}}} -1 + 1
    \right)
 w_{\hat{p}} w_{\hat{p}}^\intercal - w_{\hat{p}} w_q^\intercal}{w_p^\intercal w_{\hat{p}}}
  (p-p^*) \\
  &+ \f{\left(\f{w_p^\intercal w_{\hat{p}}}{w_{\hat{p}}^\intercal w_{\hat{p}}} -1 +1
    \right)
    \eta \eta^\intercal - \eta_p\eta^\intercal}{w_p^\intercal w_{\hat{p}}}
  (p-p^*) \\
  &= \f{\f{(w_p-w_{\hat{p}})^\intercal w_{\hat{p}}}{w_{\hat{p}}^\intercal w_{\hat{p}}}
    w_{\hat{p}}w_{\hat{p}}^\intercal + w_{\hat{p}}(w_{\hat{p}}-w_q)^\intercal}
        {w_p^\intercal w_{\hat{p}}} (p-p^*) \\
  &+ \f{\f{(w_p-w_{\hat{p}})^\intercal w_{\hat{p}}}{w_{\hat{p}}^\intercal w_{\hat{p}}}
    \eta \eta^\intercal + (\eta-\eta_p)\eta^\intercal}
        {w_p^\intercal w_{\hat{p}}} (p-p^*).
\end{align*}
For vectors $u$, $v$, we have $||uv^\intercal ||_2 = ||u||_2||v||_2$.
Thus, by submultiplicity, the Cauchy-Schwartz inequality, since
$||\eta|| = ||w_{\hat{p}}||$, and by \eqref{eq:d2}, \eqref{eq:cm2} we have
\begin{align*}
  ||\tilde{F}(p)-p^*||_2 &\leq
  \f{4d ||w_{\hat{p}}||_2}{c_m}||p-p^*||_2 = k ||p-p^*||_2
\end{align*}
where $k \in (0,1)$ by \eqref{eq:k2}.

Note that, as $(p-\hat{p})^\intercal \eta - \kappa$ is a linear equation (which
defines the hyperplane $H_\kappa$), Newton's method automatically ensures
that $\tilde{F}(p)$ satisfies this equation (i.e., that
$\tilde{F}(p) \in H_\kappa$).
If the initial condition $p^1$ lies in $R \cap H_\kappa \cap B_{\hat{r}}(p^*)$,
then $F(p^s)$ must also lie in $H_\kappa$ for all future iterations as well,
so $H_\kappa$ is forward invariant under $F$.
In general, $\hat{p}+\kappa \eta$ may not lie in $R \cap B_{\hat{r}}(p^*)$,
as will be required to show convergence of the method.
Therefore, to find a suitable starting point, we use a line search
(such as with the bisection or golden section methods) along $H_\kappa$
starting from $\hat{p}+\kappa \eta$.
This procedure yields $p^1 \in R \cap H_\kappa \cap B_{\hat{r}}(p^*)$.
Suppose $p \in R \cap H_\kappa \cap B_{\hat{r}}(p^*)$.
Then $||p-p^*||_2 \leq \hat{r}$ so
$||\tilde{F}(p)-p^*||_2 \leq k ||p-p^*||_2 \leq k\hat{r} < \hat{r}$.
Thus, as $p \in R$ and $R$ is open, $m(p)$ is finite so
$||F(p)-p^*||_2 < \hat{r}$.
Thus, $B_{\hat{r}}(p^*)$ is forward invariant as well.
By the choice of $r$ and $\kappa$, the $\hat{r}$-neighborhood of $p([0,r])$
does not contain any points of $M_j$ other than $p([0,r])$, and also does not
intersect $M_i$ for any $i \neq j$.
Thus, $\partial R \cap H_\kappa \cap B_{\hat{r}}(p^*) = \{p^*\}$.
Therefore, by Lemma~\ref{lem:conv}(b) the sequence $\{p^s\}_{s=1}^\infty$
converges to $p^*$.
\end{proof}

\begin{lemma}
  \label{lem:num}
Assume the conditions of Theorem~\ref{thm:conv}.
Then for any $\epsilon \in (0,G(p_0))$,
the solutions to \eqref{eq:num} are equal to the solutions to
\eqref{eq:opt3}.
\end{lemma}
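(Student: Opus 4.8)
The plan is to eliminate the slack variable $x$ from \eqref{eq:num}, reducing it to a single inequality-constrained problem in $p$, and then to show the constraint $G(p)\le\epsilon$ is active at every optimum, which forces the optimum onto the constraint set of \eqref{eq:opt3}. First I would observe that, since $G$ maps its domain $\overline{R}$ into $[0,\infty)$ and $\epsilon>0$, a pair $(p,x)$ is feasible for \eqref{eq:num} precisely when $p\in\overline{R}$, $x=G(p)$, and $G(p)\le\epsilon$; as the objective does not involve $x$, \eqref{eq:num} is equivalent to minimizing $\f{1}{2}(p-p_0)^\intercal P(p-p_0)$ over $\mathcal{C}:=\{p\in\overline{R}:G(p)\le\epsilon\}$, with a minimizer $p^*$ paired with the slack $x^*=G(p^*)$. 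Because $\epsilon\in(0,G(p_0))$ we have $p_0\in R$ and $G(p_0)>\epsilon$, so $p_0\notin\mathcal{C}$ — the nontrivial regime, the footnote to \eqref{eq:opt3} handling $|G(p_0)|\le\epsilon$. Since $\{p:G(p)=\epsilon\}\subseteq\mathcal{C}$, the feasible set of \eqref{eq:opt3} lies inside that of the reduced \eqref{eq:num}, so the optimal value of \eqref{eq:num} is at most that of \eqref{eq:opt3}.

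The key step is that the constraint $G(p)\le\epsilon$ is active at every minimizer of the reduced \eqref{eq:num}. Granting this (and that a minimizer exists — see below), a minimizer $\bar p$ satisfies $G(\bar p)=\epsilon$, so it is feasible for \eqref{eq:opt3}; since $\{p:G(p)=\epsilon\}\subseteq\mathcal{C}$ and $\bar p$ minimizes over $\mathcal{C}$, $\bar p$ is in fact a minimizer of \eqref{eq:opt3}, and the optimal values of \eqref{eq:num} and \eqref{eq:opt3} coincide. Conversely, any minimizer $\hat p$ of \eqref{eq:opt3} is feasible for \eqref{eq:num} and attains this common value, hence minimizes \eqref{eq:num}; restoring the slack $x^*=\epsilon$ then identifies the solutions of \eqref{eq:num} with those of \eqref{eq:opt3}. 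To prove activity, suppose $\bar p$ minimizes the reduced \eqref{eq:num} with $G(\bar p)<\epsilon$, and move from $\bar p$ toward $p_0$ along the segment $\gamma(\lambda)=(1-\lambda)\bar p+\lambda p_0$, $\lambda\in[0,1]$. Since $G(\gamma(0))=G(\bar p)<\epsilon<G(p_0)=G(\gamma(1))$ and $G$ is continuous on $R$ (Lemma~\ref{lem:R}, Theorem~\ref{thm:Gtot}), the intermediate value theorem yields the first $\lambda^*\in(0,1)$ with $G(\gamma(\lambda^*))=\epsilon$; then $\gamma(\lambda^*)$ is feasible for \eqref{eq:num} and $||\gamma(\lambda^*)-p_0||_P=(1-\lambda^*)||\bar p-p_0||_P<||\bar p-p_0||_P$, contradicting optimality of $\bar p$.

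The delicate point — and the main obstacle — is making this intermediate value argument rigorous, since $\overline{R}$ is typically nonconvex, so $\gamma$ may leave $\overline{R}$ before $G$ attains $\epsilon$, and $G$ is only guaranteed continuous on $R$ and at generic (codimension-one) points of $\partial R$ (being the reciprocal of a supremum of continuous functions, $G$ is in general merely upper semicontinuous on $\overline{R}$). In the setting of Theorem~\ref{thm:conv} this is handled cleanly: the optimum of interest lies near $\partial R$ in the open set where $G$ extends to the $C^2$ function $\tilde G$ of Theorem~\ref{thm:Gtot}, so the optimization may be taken over a convex neighborhood on which $\gamma$ never leaves the domain and $\tilde G$ is continuous, making the argument immediate; existence of a minimizer of \eqref{eq:num} then follows from coercivity of the objective together with closedness of $\mathcal{C}$ in that neighborhood. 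For the general statement one instead tracks the first $\lambda_1\in(0,1]$ at which $\gamma$ exits $R$: on $[0,\lambda_1)$, $\gamma$ lies in $R$ where $G$ is continuous, so either $G\circ\gamma$ already attains $\epsilon$ on this subinterval, or $G\circ\gamma<\epsilon$ throughout and the exit point $\gamma(\lambda_1)\in\partial R$ is a point where $G$ is continuous with $G(\gamma(\lambda_1))=\lim_{\lambda\to\lambda_1^-}G(\gamma(\lambda))\le\epsilon$, hence a point of $\mathcal{C}$ strictly closer to $p_0$ than $\bar p$ — which already suffices to contradict optimality and so to force the constraint to be active.
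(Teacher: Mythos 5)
Your proof is correct and follows essentially the same route as the paper's, which likewise contradicts optimality of a putative minimizer with $G<\epsilon$ by applying the intermediate value theorem to $G$ along a path from $p_0$ to that minimizer (written in the paper as a $d_P$-geodesic, which is the same segment you use since $d_P$ is a constant inner-product metric). Your caveat about the path staying in the domain where $G$ is continuous is legitimate: the paper's appeal to $G$ extending ``to a continuous function on a neighborhood of $\partial R$'' slightly overstates what Theorem~\ref{thm:Gtot} literally gives (local extensions around generic boundary points only) and tacitly relies on the local setup of Theorem~\ref{thm:conv}, exactly as you observe.
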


\begin{proof}[Proof of Lemma~\ref{lem:num}]
As \eqref{eq:num} has the same objective function and a strictly larger
feasible set than \eqref{eq:opt3}, the optimal cost of
\eqref{eq:num} is less than or equal to the optimal cost of
\eqref{eq:opt3}.
Assume towards a contradiction that there exists a solution $p^*$ to
\eqref{eq:num} that is not a solution to \eqref{eq:opt3}.
If $G(p^*) = \epsilon$ then $p^*$ would be feasible
for \eqref{eq:opt3}, but suboptimal since by assumption it is
not a solution to \eqref{eq:opt3}.
However, this contradicts that the optimal cost of
\eqref{eq:opt3} is no less than that of \eqref{eq:num}.
So, we must have $G(p^*) < \epsilon$.
Let $\gamma:[0,1] \to J$ be a length-minimizing geodesic with respect to the
metric $d_P$ such that $\gamma(0) = p_0$, $\gamma(1) = p^*$, and $\gamma$
is continuous.
As $G$ extends to a continuous function on a neighborhood of $\partial R$
by Theorem~\ref{thm:Gtot}, $G \circ \gamma$ is continuous and satisfies
$(G \circ \gamma)(0) = G(p_0) > \epsilon$ and
$(G \circ \gamma)(1) = G(p^*) < \epsilon$.
Thus, by the intermediate value theorem there must exist $s \in (0,1)$
such that $(G \circ \gamma)(s) = \epsilon$.
Define $\hat{p} = \gamma(s)$ and note that $G(\hat{p}) = \epsilon$
so $\hat{p}$ is feasible for \eqref{eq:num}.
As $\gamma$ is length-minimizing and $s < 1$,
$d_P(p_0,\hat{p}) = d_P(\gamma(0),\gamma(s)) < d_P(\gamma(0),\gamma(1))
  = d_P(p_0,p^*)$
which, since $\hat{p}$ is feasible for \eqref{eq:num}, contradicts that
$p^*$ is a solution to \eqref{eq:num}.
Thus, every solution to \eqref{eq:num} is a solution to
\eqref{eq:opt3}.
In particular, as \eqref{eq:num} has the same objective function and a strictly
larger feasible set than \eqref{eq:opt3}, this implies that
they have the same solutions.
\end{proof}

\begin{proof}[Proof of Theorem~\ref{thm:opt}]
By Lemma~\ref{lem:Rmanifold}, $\partial R = \cup_{i \in I} M_i$ where each $M_i$ is a $C^3$
embedded submanifold, $I$ is finite, and $M_i$ and $M_j$ are disjoint for
all $j \neq i$ in $I$.
Therefore, for each $i \in I$ there exists a tubular neighborhood $N_i$ of
$M_i$ in $J$ and a $C^3$ retraction $r_i:V_i \to M_i$ such that for
every $p_0 \in V_i$,
$r_i(p_0)$ is the unique closest point in $M_i$ to $p_0$
\cite[Problem~6-5]{Lee13} where distance is measured by the metric $d_P$.
Note that, for any $p_0 \in J$, $d_P(p_0,\partial R) = d_P(p_0,M_i)$
for some $i \in I$ since $\partial R = \cup_{i \in I} M_i$, and $r_i(p_0)$
is the unique closest point in $M_i$ to $p_0$ (in particular,
$d_P(p_0,M_i) = d_P(p_0,r_i(p_0))$).

Let $N = \cup_{i \in I} V_i$, and note that $N$ is an open neighborhood of
$\partial R$.
Let $\hat{V}$ consist of the set of $p_0$ such that
$d_P(p_0,\partial R) = d_P(p_0,M_i) = d_P(p_0,M_j)$ for at least two distinct
$i, j$, and define $V = N - \hat{V}$.
We claim that $V$ is open and dense (in particular, generic) in $N$.
For, let $p_0 \in V$.
Then there exists a unique $i \in I$
such that $d_P(p_0,\partial R) = d_P(p_0,M_i) = d_P(p_0,r_i(p_0))$,
so $r_i(p_0)$ is the unique closest point to $p_0$ in $\partial R$.
By uniqueness, this implies that
$d_P(p_0,\partial R) = d_P(p_0,r_i(p_0)) = d_P(p_0,M_i)
< d_P(p_0,M_j) = d_P(p_0,r_j(p_0))$
for all $j \neq i$ in $I$.
As each $r_i$ is continuous and $d_P$ is continuous, there exists an open
neighborhood $J'$ of $p_0$ in $J$ such that $p \in J'$ implies that
$d_P(p,r_i(p)) < d_P(p,r_j(p))$ for all $j \neq i$ in $I$.
In turn, this implies that
$d_P(p,M_i) = d_P(p,r_i(p)) < d_P(p,r_j(p)) = d_P(p,M_j)$
for all $j \neq i$ in $I$.
Thus, $d_P(p,\partial R) = d_P(p,M_i) = d_P(p,r_i(p))$, and $r_i(p)$ is
the unique closest point in $\partial R$ to $p$.
Therefore, $p \in V$ for all $p \in J'$, so $V$ is open.

Next, suppose $p_0 \in N - V$ and let $\epsilon > 0$.
Let $I' \subset I$ be such that for $i \in I$,
$d_P(p_0,\partial R) = d_P(p_0,M_i)$ if and only if $i \in I'$.
By the choice of $p_0$, $I'$ must contain at least two elements.
Fix any $i \in I'$.
As $d_P(p_0,M_i) = d_P(p_0,r_i(p_0))$, and by construction of the tubular
neighborhood $V_i$, there exists a unique length-minimizing geodesic
$\gamma:[0,d_P(p_0,M_i)] \to J$ such that
$\gamma(0) = p_0$, $\gamma(1) = r_i(p_0)$,
$d_P(\gamma(t),\gamma(s)) = s-t$ for every $t < s \in [0,d_P(p_0,M_i)]$,
and $r_i$ is constant over $\gamma$ (i.e. $r_i(\gamma(s)) = r_i(p_0)$ for
all $s \in [0,d_P(p_0,M_i)]$).
Consider $\gamma(\epsilon)$.
First of all,
$d_P(p_0,\gamma(\epsilon)) = d_P(\gamma(0),\gamma(\epsilon)) = \epsilon$.
Furthermore,
$d_P(\gamma(\epsilon),M_i) = d_P(\gamma(\epsilon),r_i(\gamma(\epsilon)))
= d_P(\gamma(\epsilon),\gamma(d_P(p_0,M_i))) = d_P(p_0,M_i) - \epsilon$
so $\gamma(\epsilon)$ is $\epsilon$ closer to $M_i$ than $p_0$.
Consider any $j \in I'$ with $j \neq i$.
If $\gamma(\epsilon)$ lies on the same length-minimizing geodesic as the one
from $r_j(p_0)$ to $p_0$ (i.e. the geodesic from $M_j$ to $p_0$),
then by uniqueness this must be the same geodesic (extended to larger $s$) as
$\gamma$.
In this case, we must have
$d_P(\gamma(\epsilon),M_j) = d_P(p_0,M_j) \pm \epsilon = d_P(p_0,M_i) \pm
\epsilon$.
But, if
$d_P(\gamma(\epsilon),M_j)) = d_P(p_0,M_i) - \epsilon$ then
$d_P(\gamma(\epsilon),r_j(p_0)) = d_P(\gamma(\epsilon),M_j))
= d_P(p_0,M_i) - \epsilon = d_P(\gamma(\epsilon),r_i(p_0))$
and
$d_P(p_0,r_j(p_0)) = d_P(p_0,r_i(p_0))$
so $r_i(p_0), r_j(p_0)$ both lie on the same geodesic $\gamma$ and are
equidistant from $p_0$ and $\gamma(s)$, which are distinct and also lie on
$\gamma$.
This is only possible if $r_i(p_0) = r_j(p_0)$, which contradicts that
$M_i$ and $M_j$ are disjoint.
Thus, in case $\gamma(\epsilon)$ lies on the same geodesic as the one from
$p_0$ to $r_j(p_0)$, we must have
$d_P(\gamma(\epsilon),M_j) = d_P(p_0,M_i) + \epsilon >
d_P(\gamma(\epsilon),M_i)$.

In case $\gamma(\epsilon)$ does not lie on the same geodesic as the one
from $p_0$ to $r_j(p_0)$ (which is the unique length-minimizing geodesic
from $p_0$ to $M_j$),
$d_P(\gamma(\epsilon),M_j) > d_P(p_0,M_j) - \epsilon$
by uniqueness of the length-minimizing geodesic.
This implies that
$d_P(\gamma(\epsilon),M_j) > d_P(p_0,M_j) - \epsilon
  = d_P(p_0,M_i) - \epsilon
  = d_P(\gamma(\epsilon),M_i)$.
Finally, for any $j \in I - I'$,
$d_P(p_0,r_i(p_0)) = d_P(p_0,M_i) < d_P(p_0,M_j) = d_P(p_0,r_j(p_0))$.
So by continuity of $r_j$ and $d_P$, as $d_P(\gamma(\epsilon),p_0) = \epsilon$,
for $\epsilon$ sufficiently small it will hold that
$d_P(\gamma(\epsilon),M_i) = d_P(\gamma(\epsilon),r_i(\gamma(\epsilon)))
< d_P(\gamma(\epsilon),r_j(\gamma(\epsilon))) = d_P(\gamma(\epsilon),M_j)$.
Combining the above cases implies that
$d_P(\gamma(\epsilon),M_i) < d_P(\gamma(\epsilon),M_j)$
for all $j \neq i$ in $I$.
The implies that $d_P(\gamma(\epsilon),\partial R) = d_P(\gamma(\epsilon),M_i)$
and $d_P(\gamma(\epsilon),\partial R) < d_P(\gamma(\epsilon),M_j)$
for all $j \neq i$ in $I$, so $\gamma(\epsilon) \in V$.
As $\gamma(\epsilon) \to p_0$ as $\epsilon \to 0$, this implies that
$V$ is dense in $N$.

Let $p_0 \in V$.
Then there exists a unique $i \in I$ such that $r_i(p_0)$ is the unique
closest point in $\partial R$ to $p_0$.
As any solution to \eqref{eq:opt2} is a closest
point on $\partial R$ to $p_0$ in terms of the metric $d_P$, this implies that
$r_i(p_0)$ is the unique solution to \eqref{eq:opt2} for $p_0$.
Furthermore, by the proof of openness of $V$ above, there exists
an open neighborhood $J'$ of $p_0$ such that $J' \subset V$
and for $p \in J'$, $d_P(p,\partial R) = d_P(p,r_i(p))$,
and thus $r_i(p)$ is the unique solution to \eqref{eq:opt2}.
As each $r_i$ is $C^3$, this implies that the solution varies $C^3$ over $J'$.
Hence, as $p_0 \in V$ was arbitary, this implies that the solution
varies $C^3$ over $V$.
\end{proof}


\begin{proof}[Proof of Theorem~\ref{thm:conv}]
By Lemma~\ref{lem:Rmanifold}, $\partial R = \cup_{i \in I} M_i$ a finite union
of $C^3$ submanifolds.
By Lemma~\ref{lem:onedim} and its proof, the set of $p^* \in \partial R$
such that $W^s(X^*(p^*))$ has codimension one is a subset of the
$\{M_i\}_{i \in I}$, call them $Y := \{M_i\}_{i \in \hat{I}}$ for some
$\hat{I} \subset I$, with full Lebesgue measure in $\partial R$.
By Theorem~\ref{thm:Gtot}, for each $p^* \in Y$ there exists an open
neighborhood $N_{p^*}$ of 
$p^*$ such that $G$ extends to a $C^2$
function $\tilde{G}$ on $N_{p^*}$.
For each $i \in \hat{I}$, let $\hat{N}_i = \cup_{p^* \in M_i} N_{p^*}$, and note
$\hat{N}_i$ is an open neighborhood of $M_i$ 
on which $\tilde{G}$
exists and is $C^2$.
By the proof of Theorem~\ref{thm:opt}, for each $i \in I$ there exists an
open neighborhood $V_i$ of $M_i$ such that for generic $p_0 \in V_i$
there exists a unique solution to \eqref{eq:opt2}.
For each $i \in \hat{I}$, let $N_i = \hat{N}_i \cap V_i$, and for each
$i \in I - \hat{I}$ let $N_i = V_i$.
Let $N = \cup_{i \in I} N_i$.
Then $N$ is an open neighborhood of $\partial R$.
Let $p_0 \in N$.
By construction, there exists a unique solution $\hat{p}$ to \eqref{eq:opt2}
for $p_0$, which implies that $\hat{p} \in \partial R$.
Assume that $\hat{p} \in Y$, which is generic since $Y$ has full measure
in $\partial R$.
This implies that $\hat{p} \in \hat{N}_i =: J'$ for some $i \in \hat{I}$.
Then $\tilde{G}$ is well-defined and $C^2$ over $J'$ by construction.

For any $r > 0$, define
$\overline{B}_r(\hat{p}) = \{p \in J:~d_P(p,\hat{p}) \leq r\}$
to be the closed ball of radius $r$ centered at $\hat{p}$.
Then there exists $r > 0$ sufficiently small such that
$\overline{B}_r(\hat{p}) \subset J'$.
As $\tilde{G}$ is $C^2$ over $\overline{B}_r(\hat{p})$,
$d\tilde{G}_p$ is $C^1$ over $\overline{B}_r(\hat{p})$ compact.
Thus, $d\tilde{G}_p$ is $L$-Lipschitz 
over
$\overline{B}_r(\hat{p})$, 
and 
satisfies
\begin{align}
  ||d\tilde{G}_p-d\tilde{G}_q||_P \leq L||p-q||_P  \label{eq:lip}
\end{align}
for all $p, q \in \overline{B}_r(\hat{p})$.
As $\tilde{G}$ is $C^2$, by Sard's Theorem \cite[Theorem~6.10]{Lee13} zero is
generically
a regular value of $\tilde{G}$, so $d\tilde{G}_p$ is full rank over
$\tilde{G}^{-1}(0)$.  
As $\hat{p} \in \tilde{G}^{-1}(0)$,
$d\tilde{G}_{\hat{p}}$ is full rank, so $||d\tilde{G}_{\hat{p}}||_P > 0$.
As $||d\tilde{G}_p||_P$ is continuous over $\overline{B}_r(\hat{p})$ compact,
it achieves a maximum $c_M$ and a minimum $c_m$ over $\overline{B}_r(\hat{p})$.
As $||d\tilde{G}_{\hat{p}}||_P > 0$, shrinking $r$ yields 
$c_m > 0$.
Thus, 
\begin{align}
    0 < c_m \leq ||d\tilde{G}_p||_P \leq c_M \label{eq:cm}
\end{align}
for all $p \in \overline{B}_r(\hat{p})$.
Furthermore, as $d\tilde{G}_{\hat{p}}$ is full rank and full rank is an open
condition, shrinking $r$ further if necessary we may have that $d\tilde{G}_p$
is full rank for all $p \in \overline{B}_r(\hat{p})$.
As $P$ is symmetric positive definite, its eigenvalues are real and positive.
Let $\lambda_{max}$ and $\lambda_{min}$ denote the maximum and minimum
eigenvalues of $P$, respectively.
Choose any $\epsilon \in (0,G(p_0))$.
Shrink $r$ if necessary and choose $d > 0$ such that
\begin{align}
  k :=  \f{\lambda_{max}^2c_M}{\lambda_{min}^2c_m^2}(d + 2rL) &< 1 \label{eq:k} \\
  \left(1 - \f{\lambda_{max}^2c_M}{\lambda_{min}^2c_m^2} d\right)\epsilon
  - \f{r}{\lambda_{min}}\left(1 + \f{\lambda_{max}^2c_M^2}{\lambda_{min}^2c_m^2}
  \right)d &> 0. \label{eq:Gpos}
\end{align}
As $d\tilde{G}_p$ is continuous over $\overline{B}_r(\hat{p})$ compact,
it is uniformly continuous,
so there exists $\alpha > 0$ such that for $||p-q||_P < \alpha$,
\begin{align}
  ||d\tilde{G}_p-d\tilde{G}_q||_P < d.
  \label{eq:d}
\end{align}
Shrink $r$ if necessary so that $p, q \in \overline{B}_r(\hat{p})$ implies that
$||p-q||_P < \alpha$.
As for 
Theorem~\ref{thm:conv}, assume that
$d_P(p_0,\partial R) < \f{r}{3}$.

Consider now the 
program 
in \eqref{eq:opt3}.
As $G^{-1}(\epsilon)$ is closed and $\{p_0\}$ is compact, there exists at least
one point $p^* \in G^{-1}(\epsilon)$ such that
$d_P(p_0,G^{-1}(\epsilon)) = d_P(p_0,p^*)$.
Thus, $p^*$ is a solution to \eqref{eq:opt3}, so it is also a
local optimum. Therefore, since $d\tilde{G}_{p^*}$ is full rank, by the Lagrange
multiplier theorem there exists a unique $\lambda^* \neq 0$ such that
$df_{p^*} = -\lambda^* dG_{p^*}$, where $f(p) = \f{1}{2}(p-p_0)^\intercal P(p-p_0)$.
As $df_p = P(p-p_0)$, this implies that
\begin{align}
  P(p^*-p_0) &= -\lambda^* dG_{p^*}, \quad G(p^*) = \epsilon. \label{eq:L}
\end{align}
Let $\gamma(t) = (1-t)p_0 + t\hat{p}$.
As $\epsilon \in (0,G(p_0)) = (G(\gamma(1)),G(\gamma(0)))$,
by the intermediate value theorem there exists $t \in (0,1)$ such that
$G(\gamma(t)) = \epsilon$.
This implies that
$d_P(p_0,p^*) = d_P(p_0,G^{-1}(\epsilon)) \leq d_P(p_0,\gamma(t))
< d_P(p_0,\hat{p}) = d_P(p_0,\partial R) < \f{r}{3}$.
Hence, for any $x \in B_{d_P(p_0,p^*)}(p^*)$,
$d_P(x,\hat{p}) \leq d_P(x,p^*) + d_P(p^*,p_0) + d_P(p_0,\hat{p})
= \f{r}{3} + \f{r}{3} + \f{r}{3} = r$.
So, $B_{d_P(p_0,p^*)}(p^*) \subset B_r(\hat{p})$.

Consider now the quadratic program given by \eqref{eq:opt4}
whose solution is $\tilde{F}(p^s)$ for $p^s$ the current iteration.
The Lagrangian for this constrained optimization is given by
\begin{align*}
  \mathcal{L}(p^{s+1},\lambda) &= \f{1}{2} (p^{s+1}-p_0)^\intercal P(p^{s+1}-p_0)
  \\&+ \lambda(G(p^s) + dG_{p^s}^\intercal(p^{s+1}-p^s) - \epsilon).
\end{align*}
Stationary points of the Lagrangian are given by
\begin{align*}
  0 &= \nabla \mathcal{L}(p^{s+1},\lambda)
  = \begin{bmatrix}
    P(p^{s+1}-p_0) + \lambda dG_{p^s} \\
    G(p^s) + dG_{p^s}^\intercal(p^{s+1}-p^s) - \epsilon
  \end{bmatrix}.
\end{align*}
Write
$p = p^s$, 
$\tilde{F}(p) = p^{s+1}$,
and $w_p = d\tilde{G}_{p^s}$.
Then 
a stationary point of the Lagrangian
is given by
\begin{align*}
  \begin{bmatrix}
    P & w_p \\
    w_p^\intercal & 0    
  \end{bmatrix}
  \begin{bmatrix}
    \tilde{F}(p) \\
    \lambda
  \end{bmatrix}
  =
  \begin{bmatrix}
    Pp_0 \\
    \epsilon -G(p) + w_p^\intercal p
  \end{bmatrix}.
\end{align*}
For any $p \in \overline{B}_r(\hat{p})$, $w_p = d\tilde{G}_p \neq 0$,
so the above equation
has a unique solution $(\tilde{F}(p),\lambda)$ given by
\begin{align*}
  \begin{bmatrix}
    \tilde{F}(p) \\
    \lambda
  \end{bmatrix}
  &= \begin{bmatrix}
     P^{-1}(w_p^\intercal P^{-1}w_pI - w_pw_p^\intercal P^{-1}) & P^{-1}w_p \\
    w_p^\intercal P^{-1} & -1
  \end{bmatrix} \\
  &*\f{1}{w_p^\intercal P^{-1}w_p}
  \begin{bmatrix}
    Pp_0 \\
    \epsilon -G(p) + w_p^\intercal p
  \end{bmatrix}.
\end{align*}
for each $p \in \overline{B}_r(\hat{p})$.
This implies that
\begin{align}
  \tilde{F}(p) =
  p_0 + \f{P^{-1}w_pw_p^\intercal}{w_p^\intercal P^{-1}w_p}(p-p_0)
  - \f{P^{-1}w_p}{w_p^\intercal P^{-1}w_p}(G(p)-\epsilon). \label{eq:tf}
\end{align}

Fix $p \in \overline{B}_r(\hat{p})$.
By the multivariate mean value theorem, there exists $q$
with
$q = tp + (1-t)p^*$ for some $t \in [0,1]$ and
$(G(p)-\epsilon) - (G(p^*)-\epsilon) = w_q^\intercal(p-p^*)$.
As $G(p^*) = \epsilon$, this implies that
$G(p) - \epsilon = w_q^\intercal(p-p^*)$.
Substituting 
this into \eqref{eq:tf},
writing $p_0-p^* = p_0-p+p-p^*$, and writing $p-p_0 = p-p^* + p^*-p_0$,
we compute
\begin{align}
  \tilde{F}(p)-p^*
  &= \left(I + \f{P^{-1}w_p(w_p-w_q)^\intercal}{w_p^\intercal P^{-1}w_p}\right)(p-p^*) \nonumber
  \\&+ p_0-p + \f{P^{-1}w_pw_p^\intercal}{w_p^\intercal P^{-1}w_p}(p^*-p_0).
  \label{eq:fpps}
\end{align}
The last term in this expression can be written,
using \eqref{eq:L}, as
\begin{align*}
  & \f{P^{-1}w_pw_p^\intercal}{w_p^\intercal P^{-1}w_p}(p^*-p_0) 
  \\& 
  = -\lambda^*\f{w_p^\intercal P^{-1}w_{p^*}}{w_p^\intercal P^{-1}w_p}P^{-1}(w_{p^*}-w_{p^*}+w_p) \\
  & 
  = \f{w_p^\intercal P^{-1}(w_p-w_p+w_{p^*})}{w_p^\intercal P^{-1}w_p}(p^*-p_0)
  \\& 
  + \f{w_p^\intercal(p^*-p_0)}{w_p^\intercal P^{-1}w_p}P^{-1}(w_p-w_{p^*}) \\
  & 
  = p^*-p_0
  + \f{w_p^\intercal P^{-1}(w_{p^*}-w_p)}{w_p^\intercal P^{-1}w_p}(p^*-p_0) \\
  & 
  + \f{w_p^\intercal(p^*-p_0)}{w_p^\intercal P^{-1}w_p}P^{-1}(w_p-w_{p^*}).
\end{align*}
Substituting this expression back into \eqref{eq:fpps},
taking the norm $||\cdot ||_P$, and applying \eqref{eq:lip}, \eqref{eq:cm},
and \eqref{eq:d} we obtain
\begin{align*}
  ||\tilde{F}(p)-p^*||_P 
  &\leq
  \f{|(w_p-w_q)^\intercal(p-p^*)|}{|w_p^\intercal P^{-1}w_p|}  ||P^{-1}w_p||_P \\
  & \mkern-100mu + \f{|w_p^\intercal P^{-1}(w_{p^*}-w_p)|}{|w_p^\intercal P^{-1}w_p|}||p^*-p_0||_P \\
  & \mkern-100mu
  + \f{|w_p^\intercal(p^*-p_0)|}{|w_p^\intercal P^{-1}w_p|}||P^{-1}(w_p-w_{p^*})||_P \\
  & \mkern-100mu 
  \leq \f{\f{1}{\lambda_{min}^2}dc_M}{\f{1}{\lambda_{max}^2}c_m^2}||p-p^*||_P
  + \f{\f{1}{\lambda_{min}^2}c_MLr}{\f{1}{\lambda_{max}^2}c_m^2}||p-p^*||_P \\
  & \mkern-100mu
  + \f{\f{1}{\lambda_{min}^2}c_M rL}{\f{1}{\lambda_{max}^2}c_m^2}||p-p^*||_P \\
  & \mkern-100mu =
  \f{\lambda_{max}^2c_M}{\lambda_{min}^2c_m^2}(d + 2rL)||p-p^*||_P = k||p-p^*||_P
\end{align*}
where we have used \eqref{eq:k} and that
\begin{align}
  &|v^\intercal w| 
  = |\langle v,P^{-1}w \rangle_P| 
  \leq ||v||_P||P^{-1}w||_P 
  \leq \f{1}{\lambda_{min}} ||v||_P||w||_P \label{eq:abs1} \\
  &|c^\intercal P^{-1}c| = |\langle P^{-1}c,P^{-1}c\rangle_P|
  = ||P^{-1}c||_P^2
  \geq \f{1}{\lambda_{max}^2} ||c||_P^2 \label{eq:abs2}
\end{align}
where the former follows by the Cauchy-Schwartz inequality.
Thus, for any $p \in \overline{B}_r(\hat{p})$, we have
\begin{align}
  ||\tilde{F}(p)-p^*||_P \leq k||p-p^*|| \label{eq:k3}
\end{align}
where $k \in (0,1)$ by 
\eqref{eq:k}.
As $B_{d_P(p_0,p^*)}(p^*) \subset B_r(\hat{p})$ and $p^*$ was arbitrary,
the same inequality holds for any solution $p^*$ to
\eqref{eq:opt3} and any $p$ with
$d_P(p,p^*) \leq d_P(p_0,p^*)$.

Next, for any solution $p^*$ to \eqref{eq:opt3},
let $p \in \partial R$ with $d_P(p,p^*) \leq d_P(p_0,p^*)$.
Then $p \in \partial R$ implies that $\tilde{G}(p) = 0$.
By Theorem~\ref{thm:Gtot}, $R \cap J' = \{p:\tilde{G}(p) > 0\} \cap J'$.
By \eqref{eq:k3},
\begin{align*}
d_P(\hat{p},\tilde{F}(p)) &\leq
d_P(\hat{p},p_0) + d_P(p_0,p^*) + d_P(p^*,\tilde{F}(p)) \\
&< d_P(\hat{p},p_0) + d_P(p_0,p^*) + d_P(p,p^*) \\
&\leq d_P(\hat{p},p_0) + 2d_P(p_0,p^*)
\leq \f{r}{3} + \f{2r}{3} = r,
\end{align*}
so $\tilde{F}(p) \in B_r(\hat{p}) \subset J'$.
So, to show that $\tilde{F}(p) \in R$ it suffices to show that
$\tilde{G}(\tilde{F}(p)) > 0$.
By the multivariate mean value theorem, there exists $q$
such that $q = t\tilde{F}(p) + (1-t)p$ for some $t \in [0,1]$ and
$\tilde{G}(\tilde{F}(p)) - \tilde{G}(p) = w_q^\intercal(\tilde{F}(p)-p)$.
As $\tilde{G}(p) = G(p) = 0$, this implies that
$\tilde{G}(\tilde{F}(p)) = w_q^\intercal(\tilde{F}(p)-p)$.
Substituting Eq.~\ref{eq:tf} for $\tilde{F}(p)$, noting that $G(p) = 0$,
and writing $w_q = w_p - w_p + w_q$, we obtain
\begin{align*}
  \tilde{G}(\tilde{F}(p)) &= 
  \f{(w_p-w_p+w_q)^\intercal P^{-1}w_p}{w_p^\intercal P^{-1}w_p}(w_p^\intercal(p-p_0)+\epsilon) \\
  & -w_q^\intercal(p-p_0)\\
  & \mkern-50mu
  = \left(1 + \f{(w_q-w_p)^\intercal P^{-1}w_p}{w_p^\intercal P^{-1}w_p}\right)\epsilon
  \\& \mkern-50mu - \f{(w_p-w_q)^\intercal P^{-1}w_p}{w_p^\intercal P^{-1}w_p}w_p^\intercal(p-p_0) 
  - (w_q-w_p)^\intercal(p-p_0) \\
  & \mkern-50mu \geq
  \left(1 - \f{\lambda_{max}^2c_M}{\lambda_{min}^2c_m^2}d \right)\epsilon
  - \f{r}{\lambda_{min}}d - \f{\lambda_{max}^2c_M^2}{\lambda_{min}^2c_m^2}
  \f{r}{\lambda_{min}}d \\
  & \mkern-50mu
  \geq \left(1 - \f{\lambda_{max}^2c_M}{\lambda_{min}^2c_m^2}d \right)\epsilon
  - \f{r}{\lambda_{min}}\left(1 + \f{\lambda_{max}^2c_M^2}{\lambda_{min}^2c_m^2}
  \right)d > 0
\end{align*}
using \eqref{eq:abs1}-\eqref{eq:abs2}, \eqref{eq:Gpos}, and 
$d_P(p,p_0) \leq \f{2r}{3} < r$.
Thus, $p \in \partial R$ with $d_P(p,p^*) \leq d_P(p_0,p^*)$ for any
solution $p^*$ to \eqref{eq:opt3} implies 
$\tilde{G}(\tilde{F}(p)) > 0$ and, thus, that $\tilde{F}(p) \in R$.
Combining the 
above, by Lemma~\ref{lem:conv}(c) the sequence
$\{p^s\}_{s=1}^\infty$ 
converges to $p^*$.
\end{proof}


\section{Conclusion}\label{sec:conc}


This work provided theoretical guarantees for recent algorithms which
efficiently and non-conservatively compute safety margins for vulnerability
assessment.
A function $G$ was defined to be the reciprocal of the supremum over
time of the norms of the trajectory sensitivities.
Then, a characterization of the recovery boundary in terms of trajectory
sensitivities was provided in terms of $G$.
In particular, it was then shown that $G$ is strictly positive over the recovery
region, and for a generic parameter value $p^*$ in the
recovery boundary, $G(p^*) = 0$, $G$ is continuous at $p^*$, and $G$ extends to
a $C^2$ function over a neighborhood of $p^*$.

Next, this characterization was used to show the following results under generic
assumptions and for initial parameter values sufficiently
close to the recovery boundary.
Well-posedness and convergence
guarantees were provided for algorithms which use $G$ to find a point on the
recovery boundary
in one dimensional parameter space, and to numerically trace the recovery
boundary in two dimensional parameter space.
It was then shown that the closest point on the recovery boundary
to a nominal parameter value exists, is unique, and depends smoothly on the
nominal value.
In turn, this was then used to provide well-posedness and convergence guarantees
for an algorithm which uses $G$ to find the closest point on the recovery
boundary in arbitrary dimensional parameter space in order to compute the
safety margin.
These theoretical guarantees ensure that these recent algorithms can be
successfully applied to a large class of nonlinear systems.

\bibliographystyle{ieeetr}
\bibliography{refs}

\begin{IEEEbiography}
  [{\includegraphics[width=1in,height=1.25in,clip,keepaspectratio]
      {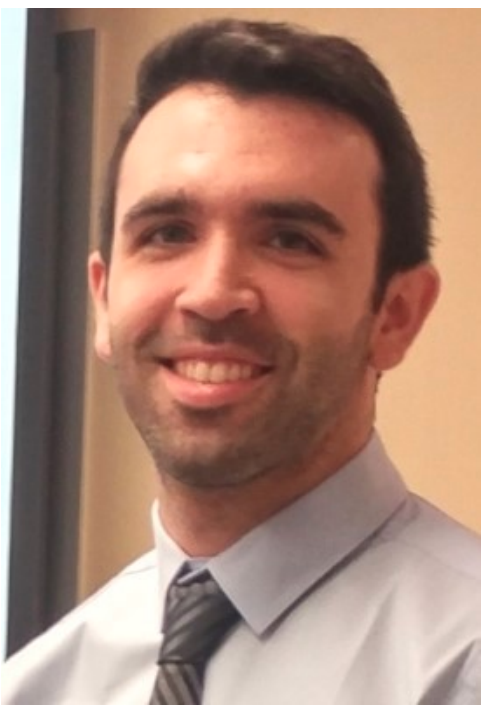}}]{Michael W. Fisher} is an Assistant Professor
  in the Department of Electrical and Computer Engineering at the University of
  Waterloo, Canada.  He was a postdoctoral researcher
  at ETH Zurich.  He received his Ph.D. in Electrical Engineering:
  Systems at the University of Michigan, Ann Arbor. 
  His research interests are in dynamics, control, and optimization of
  complex systems.
  He was a finalist for the 2017 Conference on Decision and Control (CDC)
  Best Student Paper Award and a recipient
  of the 2019 CDC Outstanding Student Paper Award.
\end{IEEEbiography}

\end{document}